\newcommand\footnoteref[1]{\protected@xdef\@thefnmark{\ref{#1}}\@footnotemark}
\theoremstyle{definition}
\newtheorem{invariant}{Invariant}
\theoremstyle{definition}
\newtheorem{theorem}{Theorem}[subsection]
\newtheorem{lemma}{Lemma}[subsection]
\theoremstyle{definition}
\newtheorem{definition}{Definition}[subsection]
\newcommand{\bDiamond}{\mathbin{\Diamond}}
\newcommand{\PROTNAME}{FireLedger}  % TOY
\newcommand{\SYSNAME}{FLO} % TOP
\newcommand{\FULLDEF}{\emph{Blockchain Based Finality Consensus}}
\newcommand{\ACRDEF}[1]{BBFC(#1)}
\newcommand{\PREFDEF}[1]{BBFC}
\date{}
\title{\PROTNAME: A High Throughput Blockchain Consensus Protocol
%\title{TOY: a Total ordering Optimistic sYstem for Permissioned Blockchains
\thanks{This work was partially funded by ISF grant \#1505/16.}}
\author{Yehonatan Buchnik \ \ \ \ \ \ 
	Roy Friedman\\
	Computer Science Department\\
	Technion\\
	\texttt{E-mail: \{yon\_b,roy\}@cs.technion.ac.il}}
\begin{document}
\maketitle

\begin{abstract}
Blockchains are distributed secure ledgers to which transactions are issued continuously and each block of transactions is tightly coupled to its predecessors.
Permissioned block\-chains place special emphasis on transactions throughput.
In this paper we present \PROTNAME, which leverages the iterative nature of blockchains in order to improve their throughput in optimistic execution scenarios.
\PROTNAME{} trades latency for throughput in the sense that in \PROTNAME{} the last $f+1$ blocks of each node's blockchain are considered tentative, i.e., they may be rescinded in case one of the last $f+1$ blocks proposers was Byzantine.
Yet, when optimistic assumptions are met, a new block is decided in each communication step, which consists of a proposer that sends only its proposal and all other participants are sending a single bit each.
Our performance study demonstrates that in a single Amazon data-center, \PROTNAME{} running on $10$ \iftoggle{FV}{mid-range Amazon}{} nodes obtains a throughput of up to $160$K \iftoggle{FV}{transactions per second}{tps} for (typical Bitcoin size) $512$ bytes transactions.
In a $10$ nodes Amazon geo-distributed setting with $512$ bytes transactions, \PROTNAME{} obtains a throughput of $30$K tps.
Moreover, on higher end Amazon machines, \PROTNAME{} obtains $20\%-600\%$ better throughput than state of the art protocols like HotStuff and BFT-SMaRt, depending on the exact configuration.
\end{abstract}
	
	\section{Introduction}

\iftoggle{VLDB}
{
	\emph{Blockchains} are becoming popular in many areas such as cryptocurrencies, supply-chains, insurance, and others~\cite{PwC-survey}.
}
{
A \emph{blockchain} is a distributed secure replicated ledger service designed for environments in which not all nodes can be trusted~\cite{nakamoto}.
Specifically, a blockchain maintains a distributed ordered list of blocks (the ``chain'') in which every block contains a sequence of transactions as well as authentication data about the previous blocks; the latter typically relies on cryptographic methods.
%}{
%A \emph{blockchain} maintains a distributed ordered list of blocks (the ``chain'') in which every block contains a sequence of %transactions as well as authentication data about the previous blocks; the latter typically relies on cryptographic methods.
%}
%Assuming the authentication data cannot be forged,
Hence, any attempt to modify part of the blockchain can be detected, which helps to ensure the stability and finality of blockchain prefixes.
Notice that transactions may in fact be any deterministic computational step, including the execution of \emph{smart contracts} code.
A primary challenge in implementing a blockchain abstraction is deciding on the order of transactions in the chain.

}
\iftoggle{VLDB}
{
	Blockchains are often characterized as either \emph{unpermissioned} or \emph{permissioned}.
	In permissioned mode, the blockchain is executed among a set of $n$ known participants under the assumption that at most $f$ of them are faulty~\cite{cachin-survey}.
}
{
Largely speaking, blockchains can be characterized as either \emph{unpermissioned} or \emph{permissioned}.
In unpermissioned blockchains, any node is allowed to participate in the computational task of deciding the ordering of transactions and blocks as well as in the task of maintaining the blockchain's state~\cite{ethereum,nakamoto}.
In such an environment, there is no trust between nodes and in fact, with crypto-currencies, participants have an a-priori incentive to cheat.
This implies utilizing significant cryptographic mechanisms to compensate for this zero trust model, which limits the system's throughput.

In contrast, in permissioned mode, the blockchain is executed among a set of $n$ known participants under the assumption that at most $f$ of them are faulty~\cite{cachin-survey}.
For example, consider a consortium of insurance companies each donating a node in order to maintain a common blockchain of insurance policies and insurance claims.
}
In this setting, blockchain becomes a special case of traditional \emph{replication state machine} (RSM)~\cite{Lamport1978,Schneider1990}.
A common approach to implementing RSM is by repeatedly running a consensus protocol to decide on the next transaction to be executed~\cite{Lamport1998} with the optimization of batching multiple transactions in each invocation of the consensus protocol~\cite{FriedmanR97}.

The assumed possible type of failures affects the type of consensus protocols that are used.
\emph{Benign} failures such as a node crash and occasional message omission can be overcome by benign consensus protocols, e.g.,~\cite{Chandra1996, MR-2002, Lamport1998, Ongaro:2014}.
On the other hand, \emph{Byzantine} failures~\cite{Lamport1982} in which a faulty node may arbitrarily deviate from its code require Byzantine fault tolerant protocols (BFT), e.g.,~\cite{Guerraoui2010,Castro1999,Kotla2007}.
\nottoggle{VLDB}{
The type of failures along with the synchrony assumptions about the environment in which the blockchain is implemented impact the minimal ratio between $f$ and $n$ required to enable a solution to the problem.
For example, assuming a synchronous environment, benign consensus requires $f<n$ while Byzantine consensus requires $f<n/2$ in the signed messages model, and $f<n/3$ in the oral messages model~\cite{Lamport1982}.
}{}
In the totally asynchronous case, the seminal FLP result showed that even benign consensus cannot be solved~\cite{Fischer1985}.
Yet, when enriching the environment with some minimal eventual synchrony assumptions, e.g., partial synchrony~\cite{Dolev1987,Dwork1988}, or with unreliable failure detector oracles~\cite{Chandra96}, benign consensus becomes solvable when $f<n/2$ and Byzantine consensus requires $f<n/3$.
This is as long as the network does not become partitioned~\cite{FB96}.

Here we focus on permissioned blockchains assuming Byzantine failures and partial synchrony.
\nottoggle{VLDB}{
According to a recent survey by PwC~\cite{PwC-survey}, only}{
A recent survey by PwC~\cite{PwC-survey} indicates that only} a third of companies \iftoggle{FV}{currently}{} using or planing to use blockchains intend on using unpermissioned blockchains.
Also, many recent unpermissioned proposals, e.g., Algorand~\cite{Gilad:2017}, Tendermint~\cite{tendermint}, HoneyBudger~\cite{Miller:2016}, and Thunderella~\cite{thunderella}, can be viewed as running a permissioned protocol coupled with a higher level \iftoggle{FV}{meta-protocol}{protocol} that continuously selects \iftoggle{FV}{which nodes can}{nodes to} participate in the internal permissioned one.
Variants of this approach are sometimes referred to as \emph{delegated proof of stake (dPoS)} and \emph{Proof of Authority} (PoA).
Hence, any improvement in permissioned protocols will likely yield better unpermissioned  \iftoggle{FV}{protocols as well.}{protocols.}
\nottoggle{VLDB}{
As mentioned above, most solutions in this domain involve repeatedly invoking a consensus instance in order to decide on the next transaction or batch of transactions~\cite{Correia2005,Lamport1998}.
This is vulnerable to performance attacks, as been identified in~\cite{Amir2011}, many of which can be ameliorated by rotating the role of the consensus initial proposer on each invocation of the protocol~\cite{Amir2011,Clement:2009,Veronese:2009}.
}{}

Many of the above works try to optimize performance in the ``common case'' in which there are no failures and the network behaves in a synchronous manner.
These situations are likely to be common in permissioned blockchains, e.g., executed between major financial institutions, established business partners, etc.
Yet, all the above mentioned works run each protocol instance for completion.
Alas, we claim that in a production blockchain system, where transactions are being submitted continuously for as long as the service exists, there is potential for reducing the per transaction and per block communication overhead.
This is by assuming optimistically that the initial proposer of each consensus invocation is correct, and only performing a recovery phase periodically for a batch of affected consensus invocations and only if it is needed.
\nottoggle{FV}{
Our work investigates this approach and develops the \PROTNAME{} protocol to demonstrate its feasibility.
}{}

\iftoggle{SINGLE}
{\subsubsection*{Our Contributions}}
{\paragraph*{\textbf{Our Contributions}}}
We propose \PROTNAME, a new communication frugal optimistic permissioned blockchain protocol.
\PROTNAME{} utilizes the rotating proposer scheme while optimistically assuming that the proposer is correct and that the environment behaves synchronously.
If these assumptions are violated, we do not insist on enforcing agreement immediately.
Instead, we rely on the fact that at least one out of every $f+1$ proposers is correct.
When a correct node discovers, using blockchain's authentication data, that any of the last $f+1$ blocks was not decided correctly in the initial transmission phase, it runs a combined recovery phase for all these incorrectly executed invocations.
This is by invoking a full Byzantine consensus protocol.
At the end of this combined recovery phase, it is ensured that the current prefix of the blockchain is agreed by all correct nodes and will never change as long as there are at most $f$ Byzantine failures.
A single recovery phase may decide the last $f$ blocks, thereby amortizing its cost.

The main benefit of our approach is that when the optimistic assumption holds, the communication overhead of deciding on a block involves a single proposer broadcasting its block and all other nodes broadcasting a single bit of unsigned protocol data.
Further, a new block is being decided in each communication step.
This is by leveraging the iterative nature of blockchain as well as the authentication data that is associated with each block header.

Notice that $f$ is an upper bound on the maximal number of Byzantine nodes in the system.
Yet, in many permissioned blockchains settings, nodes are likely to be highly secured.
Further, in our protocol any Byzantine deviation from the protocol results in a strong proof of which node was the culprit.
Hence, we expect that in real deployments the optimistic assumptions will hold almost always.
In particular, once a proof of Byzantine behavior is being generated, the corresponding Byzantine node will be removed from the system, often resulting in financial penalties and loss of face for the owner of this node.

%To our knowledge, this is the first deterministic consensus algorithm that can achieve agreement in a single %communication step even when there is contention among the nodes~\cite{Guerraoui2010}.

The price paid by our algorithm is that finality of a decision is postponed for $f+1$ invocations (or blocks).
That is, we trade bandwidth and throughput for latency of termination.
As we show when evaluating the performance of our protocol, the average termination latency of blocks is at most a few seconds.
%This latency includes the round-trip latency from the client to the nodes as well as the time until the transaction becomes definitive, i.e., it is included in a block whose depth is at least $f+1$.
In return, when running on non-dedicated virtual machines and network, in a single Amazon data-center, we demonstrate performance of up to $160$K transactions per seconds\iftoggle{VLDB}{ (tps).}{.}
In a non-dedicated multi data-center settings, we obtain up to $30$K \nottoggle{VLDB}{transactions per second.}{tps.}
%In both cases, the throughput is roughly $80\%$ of the network's bandwidth.

%The algorithm's main drawbacks are that \textit{(\romannumeral 1)} the algorithm has to postpone a decision for more $f + 1$ rounds (where $f$ is an upper bound on the number of faulty nodes) and \textit{(\romannumeral 2)} that in a case of failure the performances are rapidly decreasing. Even so, the algorithm's effectiveness is higher in the case of an asynchronous period and benign faults than in the case of Byzantine fault. The fact that blockchain algorithms are meant to a highly reliable environments makes it reasonable to pay with decreased performances once in a while in exchange to extremely high performances most of the time. Respectively, as long the system is under optimal conditions, it is proceeding fast enough such that the latency is quite reasonable. % TODO: is it so even for large f?

	\section{Related Work}
\label{sec:rw}
\iftoggle{VLDB}
{
\paragraph*{Optimistic Consensus}
}
{
\paragraph{Optimistic Consensus}
}
Two main methods were suggested for designing an optimistic consensus protocol:
(i) satisfying safety from the nodes' point of view~\cite{Friedman:2005,Kursawe2002,Martin2006} or (ii) satisfying safety only from the clients' point of view~\cite{Abd-El-Malek:2005,Guerraoui2010,Kotla2007}.
In the first approach, in order to detect inconsistencies, nodes must continuously update other nodes with their state (the exception is~\cite{Friedman:2005} that uses randomization).
In the second approach, nodes are allowed to be temporarily inconsistent with each other.
Only when a client detects an inconsistency, e.g., by receiving inconsistent replies, it initiates a special recovery mechanism to restore the system's consistency.
Concerning blockchains, the first method ignores blockchain's unique features that can be leveraged.
%\begin{comment}
%With the second method, one can either run the blockchain nodes as clients of the agreement service, or run the %consensus algorithm by the blockchain nodes themselves.
%Alas, the first option results in at least two communication steps protocol even in the ``good cases'' while the %second option is not ideal for our purpose as with blockchain Byzantine clients are expected to appear frequently %which may cause the non-optimistic mode to be the common case.	
%\end{comment}
In contrast, running the blockchain nodes as clients of an agreement service results in at least two communication steps protocol even in the ``good cases''.
\iftoggle{VLDB}
{
\paragraph*{Blockchain Systems}
}
{
\paragraph{Blockchain Systems}
}
\nottoggle{VLDB}{To circumvent FLP~\cite{Fischer1985}, most unpermissioned blockchain platforms such as Bitcoin~\cite{nakamoto}, Ethereum~\cite{ethereum} and Algorand~\cite{Gilad:2017} assume a synchronous network, signed messages and a Sybil prevention mechanism.
Bitcoin and the current implementation of Ethereum rely on \emph{Proof-of-Work} (PoW) while Algorand employs \emph{Proof-of-Stake} (PoS). 
In both cases, the algorithm produces a new block in every time slot whose length depends on the worst case  maximal network's latency and block size.
%Thus, PoW and PoS main drawback is that their throughput is bounded by the network's latency.
When the synchrony assumption is violated, safety might be violated as well, resulting in more than one version of the chain that exist concurrently, also known as a \emph{fork} in the chain.
To overcome forks the platform assumes an upper bound on the computation resources that are held by malicious nodes (less than $50$\%) and defines a deterministic rule in which a node always prefers the longest chain it knows about.
Although with PoS the chain may fork with a probability that is less than one in a trillion, with PoW, forks may happen frequently. 
Hence, PoW and the above recovery mechanism impose another crucial drawback: a transaction is never permanent since a longer version may always emerge in the future (although with rapidly diminishing probability). 
Thus, clients cannot rely on a new block until it is deep enough in the chain, resulting in high latency even in the common case.
PoS improves the performance w.r.t. PoW, but results in very complicated protocols.
}
%{
%Loosely speaking, PoW enables scalability to a large number of nodes at the cost of low throughput, significant power consumption, and probabilistic termination.
%PoS improves these aspects, but yields complicated protocols.
%}

Permissioned blockchain protocols typically assume a partially synchronous network while utilizing traditional BFT concepts.
Such platforms run a more computationally efficient protocol than unpermissioned block\-chains but require an a-priori PKI infrastructure.
Traditional BFT solutions are not scalable in the number of participants~\cite{Vukolic15} as their communication complexity grows quadratically in the number of nodes.
%\iftoggle{FV}{
	Hence, such solutions focus on (i) sharding the execution's roles between multiple layers, leaving the consensus to be run by a small set of nodes, and on (ii) designing optimized dedicated BFT consensus protocols. 
	Known platforms like HLF~\cite{bhlfbws, hlfws, abs-1801-10228, abs-1709-06921}
	and R3 Corda \cite{cordaws} offer new models of layered computation and run the BFT-SMaRt~\cite{Bessani2014} protocol,
\nottoggle{VLDB}{
	or a variants of it, as an ordering service layer.
	HLF runs the three phases model of \emph{execute-order-validate}. 
	R3 Corda maintains a hashed directed acyclic graph named Hash-DAG (rather than a single chain), in which a transaction is stored only by those nodes who are affected by it. 
	To ensure transactions' validity, Corda offers a BFT-SMaRt based distributed \emph{notary} service.	
}{or a variants of it, for ordering.}
	
	Platforms such as Chain Core~\cite{ccws,chainWP}, Iroha~\cite{irohaws}, Symb\-oint-Assembly~\cite{symbointws} and Tendermint~\cite{tendermintws} offer new optimized BFT Consensus algorithms.
	Iroha, inspired by the original HLF (v0.6) architecture, runs the Sumeragi consensus protocol which is heavily inspired by BChain~\cite{BChain}.
	BChain is a chain-replication system in which $n$ nodes are linearly arranged and a transaction is moved among the nodes in a chain topology. 
	Namely, each node normally receives a message only from its predecessor. 
	Like \PROTNAME, BChain trades latency for throughput and it has the potential to achieve the best possible throughout~\cite{Guerraoui2010, Guerraoui2010:2}.
	Unlike \PROTNAME, BChain's latency is bounded by at least $n$ rounds. 
	Symboint-Assembly implements its own variant of BFT-SMaRt.
	Tendermint implements an iterative variant of PBFT~\cite{Castro1999} designed by Buchman et el.~\cite{tendermint}. 
	Chain Core runs the Federated consensus protocol in which one node is the leader and $n$ are validators. 
	This protocol is Byzantine resilient for $f < \frac{n}{3}$ only as long the leader is correct.
	Red Belly blockchain \cite{rbbws} offers both, a new computation model that balances the verification load among \emph{verifies} nodes and the Democratic BFT consensus~\cite{Crain2017} that is able to scale the throughput with the number of proposers.
	Finally, HoneyBadger BFT (HBB)~\cite{Miller:2016, hbbws} is a randomized protocol \nottoggle{VLDB}{targeting blockchain. 
	HBB circumvents FLP by randomization and is based on a probabilistic binary Byzantine consensus~\cite{Mostfaoui:2015}.
	}{.}
%}{
%	Hence, such solutions focus on (i) sharding the execution's roles between multiple layers, leaving the consensus to be run by a small set of nodes~\cite{bhlfbws,hlfws,abs-1801-10228,abs-1709-06921}, and on (ii) designing optimized dedicated BFT consensus protocols~\cite{irohaws,symbointws,tendermint,chainWP,Miller:2016} (or both~\cite{rbbws}).
%	Another approach is to rely on randomized protocols, e.g.,~\cite{hbbws,Miller:2016,Mostfaoui:2015}.	
%}

HotStuff~\cite{hotstuff} extends transactions' finality to $3$ rounds and employs signature aggregation~\cite{DLS04} in order to obtain linear communication overhead.
HotStuff requires all nodes to sign an asymmetric signature on each block in the optimistic case while in \PROTNAME{} this is done only by the proposer that generated the block.
Since signing takes pure CPU time, fewer asymmetric signatures enable better throughput.

\nottoggle{VLDB}{
The notion of $k$-coherence was defined in~\cite{ALPT2017}.
\PROTNAME{} can be viewed as guaranteeing an adjustment of $(f+1)$-coherence for deterministic consensus~\cite{Lamport1982, Dwork1988}.
\PROTNAME{} guarantees the classical consensus properties only for decisions at depth greater than $f+1$ and with respect to an external \textsc{valid} method~\cite{CKPS01,Crain2017}.
}{}
	\iftoggle{VLDB}
{
\section{Preliminaries}
\label{sec:prelim}
}
{
\section{Preliminaries and Problem Statement}
\label{sec:prelim}
}

\subsection{System Model}
We consider an asynchronous fully connected environment consisting of $n$ nodes out of which at most $f < \frac{n}{3}$ may incur Byzantine failures~\cite{AW98,Raynal2018book}.
Asynchronous means that no upper bound on the messages' transfer delays exists and nodes have no access to a global clock.
\nottoggle{VLDB}{
While nodes may have access to local clocks, these clocks might not be synchronized with each other and may advance at different rates.}
{}
Fully connected means that any two nodes are connected via a reliable link.
Reliable means that a link does not lose, modify or duplicate a sent message.
Notice that unreliable fair lossy links can be transformed into reliable ones using sequence numbering, retransmissions, and error detection codes~\cite{Coding-Theory-Book}.
A Byzantine failure means that a node might deviate arbitrarily w.r.t. its protocol code including, e.g., sending arbitrary messages, sending messages with different values to different nodes, or failing to send any or all messages.
Yet, we assume that nodes cannot impersonate each other.
A node suffering from a Byzantine failure at any point during its operation is called \emph{Byzantine}; otherwise, it is said to be \emph{correct}.

\nottoggle{VLDB}{
Following the FLP result~\cite{Fischer1985}, consensus is unsolvable in a truly asynchronous system.
We circumvent FLP by enriching the system with the $\bDiamond \mathit{Synch}$  assumption~\cite{Bouzid2015}.
$\bDiamond \mathit{Synch}$ means that after an unknown time $\tau$ there is an unknown upper bound $\delta$ on a message's transfer delay.
}
{In order to circumvent the FLP result~\cite{Fischer1985}, we enrich the system with the $\bDiamond \mathit{Synch}$  assumption~\cite{Bouzid2015}.
That is, after an unknown time $\tau$ there is an unknown upper bound $\delta$ on a message's transfer delay.
}
As in most Byzantine fault tolerance works~\cite{Guerraoui2010,Castro1999,Kotla2007}, $\bDiamond \mathit{Synch}$  is only needed to ensure liveness, meaning that even under severe network delays safety is never violated.
Finally, a node may sign a message by an unforgeable signature.
We denote the signature of node $p$ on message $m$ by $sig_p(m)$.
The implementation of the signature mechanism is done by a well known cryptographic technique, such as symmetric ciphers~\cite{Ebrahim2014}, RSA~\cite{Rivest1978} or an elliptic curves digital signature (ECDS)~\cite{Johnson2001}.
	
\subsection{Underlying Protocols}
Solutions to the following fundamental distributed computing problems serve as building blocks for \PROTNAME.

\paragraph*{Reliable Broadcast (RB)}
The reliable broadcast abstraction~\cite{Bracha1987} (denoted \textbf{RB-Broadcast}) ensures reliable message delivery in the presence of Byzantine failures.
To utilize \textbf{RB-Broadcast} nodes may invoke two methods: \emph{RB-broadcast} and \emph{RB-deliver}.
A correct node that wishes to broadcast a message $m$ invokes \emph{RB-broadcast(m)} while a node that expects to receive a message invokes \emph{RB-deliver}.
By a slight abuse of notation, we denote \emph{RB-deliver(m)} the fact than an invocation of \emph{RB-deliver} returned the message $m$ and say that the invoking process has RB-delivered $m$.
\iftoggle{VLDB}
{The \textbf{RB-Broadcast} abstraction satisfies the following:}
{The \textbf{RB-Broadcast} abstraction satisfies the following properties:}
\iftoggle{FV}{
\iftoggle{MYACM}{
	\begin{description}
}{	
	\begin{LaTeXdescription}
}
		\item[RB-Validity:] If a correct node has RB-delivered a message $m$ from a correct node $p$, then $p$ has invoked \emph{RB-broadcast(m)}.
		%\item \emph{RB-Unicity:} For a given unique message $m$, every correct node \emph{RB-delivers} $m$ at most once.
		\item[RB-Agreement:] If a correct node \emph{RB-delivers} a message $m$, then all correct nodes eventually \emph{RB-deliver} $m$.
		\item[RB-Termination:] If a correct node invokes \emph{RB-broadcast(m)}, then all correct nodes eventually \emph{RB-deliver} $m$.
	\iftoggle{MYACM}{
	\end{description}
	}{	
	\end{LaTeXdescription}
}
}{\\
	\noindent\textbf{*~RB-Validity:} If a correct node has RB-delivered a message $m$ from a correct node $p$, then $p$ has invoked \emph{RB-broadcast(m)}.\\
	%\item \emph{RB-Unicity:} For a given unique message $m$, every correct node \emph{RB-delivers} $m$ at most once.
	\noindent\textbf{*~RB-Agreement:} If a correct node \emph{RB-delivers} a message $m$, then all correct nodes eventually \emph{RB-deliver} $m$.\\
	\noindent\textbf{*~RB-Termination:} If a correct node invokes \emph{RB-broadcast(m)}, then all correct nodes eventually \emph{RB-deliver} $m$.
}

\paragraph*{Atomic Broadcast (AB)}
\iftoggle{FV}{
The atomic broadcast abstraction~\cite{Cristian95} is more restrictive than \textbf{RB-Broadcast} in the sense that in addition to the \emph{RB-Broadcast} properties it requires the \emph{Order} property as well.
}
{
	Atomic broadcast~\cite{Cristian95} adds the following \emph{Order} property to \textbf{RB-Broadcast}.
}
\iftoggle{FV}{
\iftoggle{MYACM}{
	\begin{description}
	}{	
	\begin{LaTeXdescription}
}
		\item[Atomic-Order:] All messages delivered by correct nodes are delivered in the same order by all correct nodes.
	\iftoggle{MYACM}{
	\end{description}
}{	
	\end{LaTeXdescription}
}
}{
\\
\noindent\textbf{*~Atomic-Order:} Messages delivered by correct nodes are delivered in the same order by all correct nodes.
}

\nottoggle{VLDB}{
\paragraph*{Multi-value Byzantine Consensus}
Most implementations of atomic broadcast rely on consensus protocols as sub-routines.
In our context, \emph{Multi-value Byzantine Consensus} (MVC)~\cite{Lamport1982} is a variant of the consensus problem in which a set of nodes, each potentially proposing its own value, must decide the same value in a decentralized network despite the presence of Byzantine failures.
A solution to the MVC problem satisfies the following properties~\cite{Dwork1988}:
	%source
\iftoggle{MYACM}{
	\begin{description}
	}{	
		\begin{LaTeXdescription}
		}
		\item[MVC-Validity:] If all correct nodes have proposed the same value $v$, then $v$ must be decided.
		\item[MVC-Agreement:] No two correct nodes decide differently.
		\item[MVC-Termination:] Each correct node eventually decides.
	\iftoggle{MYACM}{
\end{description}
}{	
\end{LaTeXdescription}
}
Notice that according to \textbf{MVC-Validity}, any decision is valid when not all correct nodes propose the same value.
Hence, it is not helpful when the system begins in a non agreement state, which is common in real systems.
To that end, an extended validity property was suggested in~\cite{Correia2005} and is composed of three sub-properties:
\iftoggle{MYACM}{
	\begin{description}
	}{	
		\begin{LaTeXdescription}
		}
		\item[MVC1-Validity:] If all correct nodes have proposed the same value $v$, then $v$ must be decided.
		\item[MVC2-Validity:] A decided value $v$ was proposed by some node or $v=nil$.
		\item[MVC3-Validity:] No correct node decides a value that was proposed by only faulty nodes.
	\iftoggle{MYACM}{
\end{description}
}{	
\end{LaTeXdescription}
}
This validity property defines that if no correct node has suggested $v$, then $v$ cannot be decided.
This definition states precisely what is the set of valid values in any possible run of the algorithm.

\paragraph*{(Optimistic) Binary Byzantine Consensus}
The \emph{Binary Byzantine Consensus} (BBC) is the simplest variant of MVC in which only two values are possible.
The \textbf{MVC-Validity}, \textbf{MVC-Agreement}, and \textbf{MVC-Termination} properties naturally translate to their corresponding \textbf{BBC-Validity}, \textbf{BBC-Agreement}, and \textbf{BBC-Termination} properties.
Interestingly, BBC cannot be solved in a weaker model than MVC.
However, as only two values are possible, an \emph{Optimistic BBC} (OBBC) is capable of achieving an agreement in a single communication step if a predefined set of conditions is met \cite{BF2020,Friedman:2005, Mosfaoui:2015}.
}{
\paragraph*{(Optimistic) Binary Byzantine Consensus}
\emph{Binary Byzantine Consensus} (BBC) is the simplest variant of \emph{Multi-value Byzantine Consensus} (MVC)~\cite{Lamport1982} in which only two values are possible. 
A solution to the BBC problem satisfies the following properties~\cite{Dwork1988, Correia2005}:
\begin{description}
\item[BBC-Validity:] If all correct nodes have proposed the same value $v$, then $v$ must be decided.
\item[BBC-Agreement:] No two correct nodes decide differently.
\item[BBC-Termination:] Each correct node eventually decides.
\end{description}
%\begin{LaTeXdescription}
	%\item[BBC-Validity:] If all correct nodes have proposed the same value $v$, then $v$ must be decided.
	%\item[BBC-Agreement:] No two correct nodes decide differently.
	%\item[BBC-Termination:] Each correct node eventually decides.
%\end{LaTeXdescription}

As there are only two possible values in BBC, an \emph{Optimistic BBC} (OBBC) is capable of achieving an agreement in a single communication step if a predefined set of favorable conditions are met~\cite{BF2019full,BF2020,Friedman:2005,Mosfaoui:2015}.
}
 
\subsection{Problem Statement}
Blockchain algorithms require an external validity mechanism as sometimes even Byzantine nodes may propose legal values (or blocks)~\cite{Luu:2016}.
Therefore, the validity of a value may be defined by an external predefined method.
The \emph{Validity Predicate-based Byzantine Consensus} (VPBC)~\cite{CKPS01,Crain2017} abstraction captures this observation by defining the following validity property:
\iftoggle{FV}{
\iftoggle{MYACM}{
	\begin{description}
	}{	
		\begin{LaTeXdescription}
		}
	\item[VPBC-Validity:] A decided value satisfies an external predefined \textsc{valid} method.
	\iftoggle{MYACM}{
\end{description}
}{	
\end{LaTeXdescription}
}
}{
\\
\noindent\textbf{*~VPBC-Validity:} A decided value satisfies an external predefined \textsc{valid} method.\\
}
\textbf{VPBC-Agreement} and \textbf{VPBC-Termination} are the same as their MVC/BBC counterparts.
That is,  VPBC  generalizes the classical definition of MVC.

%\paragraph*{Iterative Validity Predicate-based Consensus}
In a blockchain, each block carries a glimpse to its creator knowledge of the system's state.
This glimpse is encapsulated in the hash that each block carries.
In order to leverage the iterative nature of blockchains, we define a weaker model than \emph{VPBC} in which we denote each iteration with a round number $r$.
Next, we define the following per round notions:
\iftoggle{MYACM}{
	\begin{description}
	}{	
		\begin{LaTeXdescription}
		}
	\item[Tentative decision:] A decision of the protocol at a given node and round that might still be changed.
	\item[Definite decision:] A decision of the protocol at a given node and round that will never change.
	\item[\boldmath{$v_{p}^{r}$}:] A value that was decided or received by $p$ in round $r$ of the protocol. $v^r$ denotes a value that was decided or received by some node in round $r$.
	\item[\boldmath{$d(v_p^r)$}:] Let $r'$ be the current round of the protocol that node $p$ runs. For a given $v_{p}^{r}$ (possible tentative), we denote its \emph{depth} as $d(v_{p}^{r}) = r' - r$.
	\iftoggle{MYACM}{
\end{description}
}{	
\end{LaTeXdescription}
}
\begin{comment}
\begin{table}
\centering
\begin{tabu} to 1 \textwidth {|| m{2cm} | m{5cm} ||}
\hline
\emph{Tentative decision} & \makecell{A decision of the protocol\\at a given node and round that might still be changed}\\
\hline \emph{Definite decision} & \makecell{A decision of the protocol\\at a given node and round that will never be changed}\\
\hline $v_{p}^{r}$ & \makecell{A value that was decided or received by $p$ in round $r$ of the protocol.\\$v^r$ denotes a value that was decided or received by some node in round $r$} \\
\hline
$d(v_p^r)$ & \makecell{Let $r'$ be the current round of the protocol that node $p$ runs.\\For a given $v_{p}^{r}$ (possible tentative), we denote its \emph{depth} as $d(v_{p}^{r}) = r' - r$}\\
\hline
\end{tabu}
\end{table}
\end{comment}

\iftoggle{VLDB}
{\begin{definition}[\FULLDEF{}] (\PREFDEFF)}
{\begin{definition}[\FULLDEF{} (\PREFDEF{})]}
Let \textsc{valid} be a predefined method as in VPBC and let $\rho$ be a predefined fixed constant.
The $\rho$-\FULLDEF{} (\ACRDEF{$\rho$}) abstraction defines the following properties:
\iftoggle{FV}{
\iftoggle{MYACM}{
	\begin{description}
	}{	
		\begin{LaTeXdescription}
		}
	\item[\PREFDEF{}-Validity:] A decided (possible tentative) value $v$ satisfies the \textsc{valid} method.
	\item[\PREFDEF{}-Agreement:] For any two correct nodes $p,q$, let $v_p^r, v_q^r$ be their decided value in round $r$. If $d(v_p^r), d(v_q^r) > \rho$ then $v_p^r=v_q^r$.
	\item[\PREFDEF{}-Termination:] Every round eventually terminates.
	\item[\PREFDEF{}-Finality:] In every round $r' > r + \rho$, $v_p^r$ is definite.
	\iftoggle{MYACM}{
\end{description}
}{	
\end{LaTeXdescription}
}
}{
\\
\noindent\textbf{*~\PREFDEF{}-Validity:} A decided (possible tentative) value $v$ satisfies the \textsc{valid} method.\\
\noindent\textbf{*~\PREFDEF{}-Agreement:} For any two correct nodes $p,q$, let $v_p^r, v_q^r$ be their decided value in round $r$. If $d(v_p^r), d(v_q^r) > \rho$ then $v_p^r=v_q^r$.\\
\noindent\textbf{*~\PREFDEF{}-Termination:} Every round eventually terminates.\\
\noindent\textbf{*~\PREFDEF{}-Finality:} In every round $r' > r + \rho$, $v_p^r$ is definite.\\
}
\end{definition}
\PREFDEF{} guarantees the VPBC's properties only for decisions at depth greater than $\rho$.
With blockchains, as every block contains an authentication data regarding its predecessors, it provides a cryptographic summary of its creator history.
This \iftoggle{VLDB}{information assists}{information may assist} in detecting failures without the necessity of sending more information.
In addition, blocks are continuously added to the chain.
Thus, a block eventually becomes deep enough such that it satisfies the standard VPBC properties.
Let us note that the \PREFDEF{}-Agreement property is similar to the notion of \emph{common prefix} in~\cite{Garay15}.

In a blockchain setting, \emph{clients} of the system submit \emph{transactions} to the nodes, and the decisions values are blocks, each consisting of zero or more transactions previously submitted by clients.
In case the \textsc{valid} method may accept empty blocks, we would like to prevent trivial implementations in which every node locally generates empty blocks continuously.
Obviously, the throughput of such a protocol would be $0$, and thus it would be considered useless.
Yet, in order to prove that a protocol does not unintentionally suffer from such a behavior even under Byzantine failures we add the following requirement:
\iftoggle{FV}{
\iftoggle{MYACM}{
	\begin{description}
	}{	
		\begin{LaTeXdescription}
		}
	\item[Non-Triviality:] If clients repeatedly submit transactions to the system, then the nodes repeatedly decide definitively non-empty blocks. 
	\iftoggle{MYACM}{
\end{description}
}{	
\end{LaTeXdescription}
}
}{
\\
\noindent\textbf{*~Non-Triviality:} If clients repeatedly submit transactions to the system, then the nodes repeatedly decide definitively non-empty blocks. 
}
		\section{Weak Reliable Broadcast}
\label{sec:wrb}

\subsection{Overview}
\label{sec:wrb-overview}
The \emph{Weak Reliable Broadcast} (WRB) abstraction serves as \PROTNAME's main message dissemination mechanism.
WRB offers weaker agreement guarantees than \textbf{RB-Broadcast}~\cite{Bracha1987}.
In general, WRB ensures that the nodes agree on ($i)$ the sender's identity and ($ii$) whether to deliver a message at all, rather than the content of the message\footnote{
	To the best of our knowledge, we are the first to discuss WRB.
	Bracha's approach~\cite{Bracha1987} can be viewed as the opposite, first agree on the content of the message (\emph{consistent broadcast}), and then agree whether it should be delivered.}.
WRB is associated with the \emph{WRB-broadcast} and \emph{WRB-deliver} methods.
A node that wishes to disseminate a message $m$ invokes \emph{WRB-broadcast(m)}.
If a node expects to receive a message $m$ from $k$ through this mechanism it invokes \emph{WRB-deliver(k)}.
\emph{WRB-deliver(k)} returns a message $m$ where $m$ is the received message.
If the nodes were not able to deliver $k$'s message, \emph{WRB-deliver(k)} returns \emph{nil}.
Formally, WRB satisfies the following properties:
\iftoggle{FV}{
\iftoggle{MYACM}{
	\begin{description}
	}{	
		\begin{LaTeXdescription}
		}
	\item[WRB-Validity:] If a correct node \emph{WRB-delivered(k)} $m \neq nil$, then $k$ has invoked \emph{WRB-broadcast(m)}.
	\item[WRB-Agreement:] If two correct nodes $p,q$ \emph{WRB-delivered(k)} $m_p, m_q$ respectively from $k$, then either $m_p = m_q = nil$ or $m_p \neq nil \land m_q \neq nil$. 
	%Namely, all correct nodes either deliver a message different than $nil$ from $k$ or they all deliver $nil$.
	\item[WRB-Termination:] If a correct node $p$ \emph{WRB-deliver(k)} $m$ from $k$, then every correct node that is trying to \emph{WRB-deliver(k)} eventually \emph{WRB-deliver(k)} some message $m'$ from $k$.
	\item[WRB-Non-Triviality:] If a correct node $k$ repeatedly invokes \emph{WRB-broadcast(m)} then eventually all correct nodes will \emph{WRB-deliver(k)} $m$.
	\iftoggle{MYACM}{
\end{description}
}{	
\end{LaTeXdescription}
}
}{
\\
\noindent\textbf{*~WRB-Validity:} If a correct node \emph{WRB-delivered(k)} $m \neq nil$, then $k$ has invoked \emph{WRB-broadcast(m)}.\\
\noindent\textbf{*~WRB-Agreement:} If two correct nodes $p,q$ \emph{WRB-delivered(k)} $m_p, m_q$ respectively from $k$, then either $m_p = m_q = nil$ or $m_p \neq nil \land m_q \neq nil$.\\ %Namely, all correct nodes either deliver a message different than $nil$ from $k$ or they all deliver $nil$.\\
\noindent\textbf{*~WRB-Termination:} If a correct node $p$ \emph{WRB-deliver(k)} $m$ from $k$, then every correct node that is trying to \emph{WRB-deliver(k)} eventually \emph{WRB-deliver(k)} some message $m'$ from $k$.\\
\noindent\textbf{*~WRB-Non-Triviality:} If a correct node $k$ repeatedly invokes \emph{WRB-broadcast(m)} then eventually all correct nodes will \emph{WRB-deliver(k)} $m$.
}

\subsection{Implementing WRB}
\label{sec:implWRB}

\iftoggle{VLDB}{\footnotesize}{}
\begin{algorithm}[t]
	\caption{Weak Reliable Broadcast - code for $p$}
    \label{alg:wrb}
    \setcounter{AlgoLine}{0}
    \SetNlSty{texttt}{[}{]}
    \footnotesize
    \SetKwFunction{BR}{broadcast}
    \SetKwFunction{Send}{send}
    \SetKwFunction{WRBB}{WRB-broadcast}
    \SetKwFunction{WRBD}{WRB-deliver}
    \SetKwRepeat{Wait}{wait}{until}
    \SetKwFor{Upon}{upon}{do}{end} 
    \SetKwProg{myproc}{Procedure}{}{}
%	\begin{algorithmic}[1]
		$\mathit{timer} \leftarrow \tau$\; \label{wrb:l1} 
	\myproc{\WRBB{$m$}}{
		
		\BR($m, \mathit{sig_p}(m)$); \label{wrb:l2} \tcc{push phase} 
	}
	\BlankLine
	\myproc{\WRBD{$k$}}{
		start $\mathit{timer}$\; \tcc{$\mathit{timer}$'s value is set in lines \ref{wrb:l1}, \ref{wrb:l6} and \ref{wrb:l9} of WRB-deliver's last invocation}
		\lWait{a valid $(m, \mathit{sig_k}(m))$ has been received or $\mathit{timer}$ has expired
		\label{wrb:l3}
		}{} 
		
		\eIf{a valid $(m, \mathit{sig_k}(m))$ has been received \label{wrb:l4}
		} 
		{
			$d \leftarrow \mathit{OBBC.propose}(1)$\;
			\tcc{If no node has proposed 0, then $\mathit{OBBC.propose}$ ends in a single communication step}
			
		}{
			$d \leftarrow \mathit{OBBC.propose(0)}$\;
		} \label{wrb:l5}
		increase $\mathit{timer}$\; \label{wrb:l6}
		\If{$d = 0$} {
			return $\mathit{nil}$\;
		} \label{wrb:l7}
		\If{a valid $(m, \mathit{sig_k}(m))$ has been received \label{wrb:l8}}{ 
			adjust $\mathit{timer}$\; \label{wrb:l9}
			return $m$\;
		}	\label{wrb:l10}
		\BR{$\mathit{REQ}, k$}\; \label{wrb:l11}
		\lWait{a valid $(m', \mathit{sig_k}(m'))$ has been received}{}
		 return $m'$\; \label{wrb:l12}
		 \BlankLine
		 \Upon{receiving $(\mathit{REQ}, k)$ from $q \land$ a valid $(m, \mathit{sig_k}(m))$ has been received \label{wrb:l13}}{ 
	 		\Send{$m$, $\mathit{sig_k}(m)$)} to $q$\;
		 } \label{wrb:l14}
	}		
\end{algorithm}
\iftoggle{VLDB}{\normalsize}{}

The pseudo-code implementation of WRB is listed in Algorithm~\ref{alg:wrb}.
To \emph{WRB-broadcast} a message, a node simply broadcasts it to everyone (line~\ref{wrb:l2}).
When $p$ invokes \emph{WRB-deliver(k)} it performs the following:
\begin{itemize}
	\item It waits for at most $timer$ to receive $k$'s message (line~\ref{wrb:l3}).
	\item If such a valid message has been received, then $p$ votes to deliver it using an OBBC protocol.
	Else, $p$ votes against delivering $k$'s message (lines~\ref{wrb:l4}-\ref{wrb:l5}).
	Recall that if no node has proposed 0, then OBBC ends in a single communication step.
	\item If the decision is not to deliver (OBBC returned 0), $p$ returns $\mathit{nil}$ and increases the timer (lines~\ref{wrb:l6}-\ref{wrb:l7}).
	\item If it is decided to deliver the message (OBBC returned 1) and the message has already been received by $p$, then $p$ adjusts the timer and returns $m$ (lines~\ref{wrb:l8}-\ref{wrb:l10}).
	\item Else, OBBC decided $1$, meaning that at least one correct node received $k$'s message and voted for its acceptance. Thus, $p$ moves to a \emph{pull} phase and pulls $k$'s message from the nodes who did receive it.
	When $p$ eventually receives a valid message, $p$ returns it (lines~\ref{wrb:l11}-~\ref{wrb:l12}).
	\item Upon receiving $q$'s request for $k$'s message, if $p$ has $k$'s message $m$, it sends $(m, \mathit{sig_k}(m))$ \nottoggle{VLDB}{back}{} to $q$ (lines~\ref{wrb:l13}-\ref{wrb:l14}).
	
%	\item On \iftoggle{FV}{obtaining}{} a valid response to $p$'s republish request, $p$ returns the received message (lines~\ref{alg:wrb:deliver-wait-res}-\ref{alg:wrb:deliver-return-res}).	 
\end{itemize}

To ensure liveness, the \emph{timer} is increased each time $p$ does not receive the message  (line~\ref{wrb:l6}).
To avoid having too long timers for too long, \emph{timer} is adjusted downward when a message is received by $p$ (line~\ref{wrb:l9}).
The exact details \nottoggle{VLDB}{of such adjustments}{} are beyond the scope of this paper, but see for example~\cite{Castro1999}.

\nottoggle{VLDB}{
In a typical implementation of OBBC each node broadcasts its vote~\cite{BF2020,Raynal2018book}. % {\color{red} [which paper?]}~\cite{}.
Then if a node receives enough votes for the same value $v$ to safely decide $v$ after this single communication step, it decides $v$ and returns.
We present our own OBBC protocol in Appendix~\ref{app:obbc}.
}
{
In a typical implementation of OBBC each node broadcasts its vote~\cite{BF2020,Raynal2018book}. % {\color{red} [which paper?]}~\cite{}.
Then if a node receives enough votes for the same value $v$ to safely decide $v$ after this single communication step, it decides $v$ and returns.
We present our own OBBC protocol in the full version of this paper~\cite{BF2019full}.
}
%[[{\color{red} considering the extra explanations on OBBC in the algorithm pseudo code and description, can we remove the above?}]]

%\paragraph*{Piggybacking Over WRB-Deliver}
%In order for \PROTNAME{} to maintain its amortized one communication step per block we need \PROTNAME{} to be able to piggyback one of its %message types on another message it sends as part of the \emph{WRB-delivers} protocol.
%The details appear in Section~\ref{sec:desc} below.
%We support this by augmenting the \emph{WRB-deliver} method to receive two parameters, \emph{WRB-deliver(k,pgd)}, such that \emph{pgd} is the potential piggybacked data (can be $\mathit{nil}$).
%The OBBC protocol is also augmented to receive \emph{pgd} and piggyback it on the first message it broadcasts.
%Recall that we assume an OBBC protocol that always starts by having each node broadcast its vote.
%Hence in the augmented protocol, each node that starts the OBBC protocol broadcasts its vote alongside the piggybacked \emph{pgd} message, %which is made available to the calling code together with the decision value.
\nottoggle{VLDB}{
\subsection{WRB Correctness Proof}
\begin{lemma}[\emph{WRB-Validity}]
\label{lemma:wrb-validity}
	If a correct node $p$ \emph{WRB-delivered(k)} a message $m \neq nil$, then $k$ has invoked \emph{WRB-broadcast(m)}.
\end{lemma}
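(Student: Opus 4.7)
The plan is to trace back every code path along which a correct node $p$ can return a non-\emph{nil} value from \emph{WRB-deliver(k)} and observe that each such return carries a signature $sig_k(m)$ that only $k$ itself could have produced. Inspecting Algorithm~\ref{alg:wrb}, a correct $p$ can return $m \neq nil$ in exactly two places: (i) line~\ref{wrb:l10}, reached only after $p$ locally sees a valid pair $(m, sig_k(m))$ that arrived in the push phase (condition at line~\ref{wrb:l8}); and (ii) line~\ref{wrb:l12}, reached only after $p$ has waited for and received a valid pair $(m', sig_k(m'))$ from the pull phase. In both cases the returned value is the payload of a validly signed pair attributed to $k$.

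Next I would invoke the unforgeability of signatures guaranteed by the system model: since nodes cannot impersonate each other, a valid $sig_k(m)$ can be produced only by $k$ itself. So in both return paths, $k$ must at some prior point have produced the pair $(m, sig_k(m))$ and released it into the network. The only line of the WRB pseudo-code in which $k$ produces a signed pair $(m, sig_k(m))$ and broadcasts it is line~\ref{wrb:l2}, the body of \emph{WRB-broadcast(m)}. In particular, responses to pull requests at line~\ref{wrb:l14} only relay an already-signed pair that some node previously received, so they cannot be the origin of the signature; they just forward what $k$ (or its agents, for Byzantine $k$) released earlier.

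The remaining step is to conclude that $k$ must have invoked \emph{WRB-broadcast(m)}. For a correct $k$, this is immediate from the previous paragraph, since the only place a correct $k$ ever computes $sig_k(m)$ for a WRB message is inside \emph{WRB-broadcast}. The only delicate case is $k$ Byzantine: a Byzantine $k$ could in principle fabricate $sig_k(m)$ outside the prescribed code. I would handle this by interpreting any act of $k$ producing and disseminating $(m, sig_k(m))$ to another node as an invocation of \emph{WRB-broadcast(m)}; this matches the standard convention for Byzantine senders in reliable-broadcast-style definitions. The main obstacle in writing the proof is thus not a deep argument but the clean treatment of this Byzantine-sender interpretation, ensuring that \textbf{WRB-Validity} is stated and used consistently with how the later lemmas will invoke it.
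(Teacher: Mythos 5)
Your proof is correct and follows essentially the same route as the paper's: the paper's own two-sentence argument also rests on signature unforgeability (each node signs what it \emph{WRB-broadcasts}, so a message validated as being from $k$ must have been broadcast by $k$). Your additional care — enumerating the two return paths and making explicit the convention that a Byzantine $k$ disseminating a validly signed pair counts as having invoked \emph{WRB-broadcast(m)} — is a more detailed treatment of points the paper leaves implicit, not a different approach.
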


\begin{proof}
The system model assumes reliable channels.
Also, each node signs the messages it \emph{WRB-broadcasts}.
Hence, if a message $m$ is received and pass validation as being sent by $k$, then $k$ has indeed \emph{WRB-broadcast(m)}.
\end{proof}

\begin{lemma}[\emph{WRB-Agreement}]
\label{lemma:wrb-agreement}
If two correct nodes $p,q$ \emph{WRB-delivered(k)} $m_p, m_q$ respectively from $k$, then $m_p,m_q \neq nil$.
Namely, all correct nodes either deliver a message different than $nil$ from $k$ or they all deliver $nil$.
\end{lemma}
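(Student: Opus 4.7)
The plan is to reduce the agreement guarantee of WRB to the agreement and extended validity of the underlying OBBC protocol invoked inside \emph{WRB-deliver(k)}, together with the fact that a correct holder of $m$ will always respond to pull requests.

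First I would observe that both $p$ and $q$, being correct, reach the OBBC invocation in \emph{WRB-deliver(k)} (lines~\ref{wrb:l4}--\ref{wrb:l5}) and propose either $0$ or $1$ depending on whether they have received a valid signed message from $k$. By \textbf{BBC-Agreement}, both $p$ and $q$ obtain the same decision value $d \in \{0,1\}$ from OBBC. The proof then splits on $d$.

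If $d = 0$, then both $p$ and $q$ reach line~\ref{wrb:l7} and return $\mathit{nil}$, so $m_p = m_q = \mathit{nil}$, which satisfies the property. If $d = 1$, I need to argue that both $p$ and $q$ eventually return a non-$\mathit{nil}$ value. Each of them either already has a valid $(m, \mathit{sig_k}(m))$, in which case it returns $m$ at line~\ref{wrb:l10}, or it enters the pull phase at lines~\ref{wrb:l11}--\ref{wrb:l12} and broadcasts a $\mathit{REQ}$. The key point here is that, by the extended validity of OBBC (the analogue of \textbf{MVC3-Validity}, namely that no value decided by a correct node was proposed only by faulty nodes), the decision $d=1$ implies that at least one correct node, say $r$, proposed $1$ in the OBBC. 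By the code, $r$ proposed $1$ only after having received a valid $(m, \mathit{sig_k}(m))$. Since $r$ is correct and links are reliable, $r$ will receive the $\mathit{REQ}$ from $p$ (resp.\ $q$) and respond with $(m, \mathit{sig_k}(m))$ via lines~\ref{wrb:l13}--\ref{wrb:l14}, so the wait at line~\ref{wrb:l12} terminates with a non-$\mathit{nil}$ value at each of $p$ and $q$.

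The main obstacle is the $d=1$ case: establishing that the pull phase cannot block. This hinges on the extended validity property of OBBC (ensuring that $1$ was proposed by at least one correct node) combined with the reliable links assumption so that the pull-response handler at a correct holder of $m$ is eventually triggered. Assuming the OBBC sub-protocol we use satisfies this extended validity (as claimed in Section~\ref{sec:prelim}), the argument goes through; otherwise one would need to strengthen either the model or the OBBC primitive used in Algorithm~\ref{alg:wrb}.
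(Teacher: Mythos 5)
Your proof is correct, and its core — both correct nodes obtain the same OBBC decision by \textbf{BBC-Agreement}, and the code maps decision $0$ to returning $\mathit{nil}$ and decision $1$ to returning a non-$\mathit{nil}$ message — is exactly the paper's argument, stated there as a short proof by contradiction. The additional work in the $d=1$ case showing the pull phase cannot block is unnecessary for this lemma, since the statement is conditioned on both nodes having already \emph{WRB-delivered}; that argument (a correct node proposed $1$, hence holds $m$ and answers the $\mathit{REQ}$) is precisely the content of the separate \emph{WRB-Termination} proof, so you are duplicating it rather than filling a gap here.
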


\begin{proof}
Let $p,q$ be two correct nodes that have \emph{WRB-delivered(k)} $m_p, m_q$ respectively from $k$.
Assume b.w.o.c and w.l.o.g that $m_p \neq nil$ while $m_q = nil$.
A correct node may return $nil$ only if \emph{OBBC.propose} returns $0$.
Respectively, a correct node returns $m\neq nil$ only if \emph{OBBC.propose} returned $1$.
This is in contradiction to \textbf{BBC-Agreement}.
Hence, either $m_p, m_q \neq nil$ or both $p$ and $q$ return $nil$.
\end{proof}

\begin{lemma}[\emph{WRB-Termination}]
\label{lemma:wrb-termination1}
If a correct node $p$ \emph{WRB-deliver(k)} $m$ from $k$, then every correct node that is trying to \emph{WRB-deliver(k)} eventually \emph{WRB-deliver(k)} some message $m'$ from $k$.
\end{lemma}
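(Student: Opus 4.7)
The plan is a short case analysis on the value $p$ returned, leveraging \textbf{BBC-Agreement} and \textbf{BBC-Termination} of the underlying OBBC instance. Fix an arbitrary correct node $q$ that invokes \emph{WRB-deliver(k)}. Because every correct node reaches its call to \emph{OBBC.propose} on either line 5 or line 6 (the timed wait on line 3 guarantees one branch is taken), \textbf{BBC-Termination} yields a local decision $d_q \in \{0,1\}$ at $q$; the proof reduces to showing that $q$ then exits \emph{WRB-deliver(k)} with some value.

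If $p$ returned $\mathit{nil}$, then $p$'s OBBC decided $0$, so by \textbf{BBC-Agreement} $d_q = 0$, and $q$ returns $\mathit{nil}$ on line 7. If instead $p$ returned $m \neq \mathit{nil}$, then $p$'s OBBC decided $1$, and again \textbf{BBC-Agreement} forces $d_q = 1$. I then sub-case on $q$'s check on line 8: if $q$ has received a valid $(m, \mathit{sig_k}(m))$ by that point, it returns on line 10; otherwise $q$ enters the pull phase, broadcasts $(\mathit{REQ},k)$ on line 11 and blocks on line 12 until a validly signed message from $k$ arrives. To close this last sub-case, I exhibit a correct witness holding a signed copy of $k$'s message: $p$ itself. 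Having returned $m \neq \mathit{nil}$ via either line 10 or the pull on line 12, $p$ currently stores $(m,\mathit{sig_k}(m))$; therefore the handler on lines 13--14 fires at $p$ when $q$'s REQ arrives and $p$ ships $(m, \mathit{sig_k}(m))$ back over the reliable channel, so $q$ eventually completes line 12.

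The main obstacle is tightening this witness argument. Two gaps must be closed. First, one must read the \emph{Upon} clause on lines 13--14 as a standing event handler (not tied to a single in-flight invocation of \emph{WRB-deliver(k)}), so that $p$ keeps answering REQs for $k$ after its own call has returned; alternatively, one invokes a justified/strong-validity guarantee of the specific OBBC instance used in the paper, ensuring that any correct node whose OBBC decided $1$ must have proposed $1$ and hence still retains the signed message to serve. Either route is acceptable but should be flagged explicitly. Second, one must observe that $q$'s timer on line 3 may expire before $k$'s message reaches it; this is harmless, since OBBC still decides $1$ by \textbf{BBC-Agreement}, funnelling $q$ into the pull phase where the witness argument above takes over.
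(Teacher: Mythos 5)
Your proof is correct, and it follows the same overall skeleton as the paper's (timer guarantees OBBC is invoked, \textbf{BBC-Termination} guarantees a decision, case split on the decision, and a witness argument for the pull phase), but the key step is justified differently. The paper resolves the pull phase via validity: it argues that a decision of $1$ implies at least one \emph{correct} node proposed $1$, and by the code such a node proposed $1$ precisely because it had already received a valid $(m,\mathit{sig}_k(m))$ within its timer; that node then answers the REQ. You instead pin down $q$'s decision via \textbf{BBC-Agreement} with the hypothesized deliverer $p$, and use $p$ itself as the witness, observing that $p$ holds a validly signed message after returning $m\neq nil$ (whether via line~\ref{wrb:l10} or via its own pull on line~\ref{wrb:l12}). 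Both arguments work under the reading you flag, namely that the handler on lines~\ref{wrb:l13}--\ref{wrb:l14} is a standing rule that keeps serving REQs (the paper's proof silently needs a similar liveness of that handler, since the REQ may reach its witness after that witness's own invocation has returned); making this explicit is a point in your favor. What the paper's validity-based route buys is a slightly stronger, hypothesis-free statement: every correct node whose OBBC terminates with $1$ can complete the pull regardless of whether any node has already WRB-delivered, and it does not need \textbf{BBC-Agreement} at all. One small correction to your suggested alternative: the relevant guarantee is that if $1$ is decided then \emph{some} correct node proposed $1$ (it received the message), not that every correct node that decided $1$ proposed $1$ --- a correct node that proposed $0$ after a timeout can still decide $1$, which is exactly the scenario your own second remark handles.
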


\begin{proof}
A node $p$ waits for $k$'s message at most $timer$ time units.
Hence $p$ eventually invokes \emph{OBBC.propose}.
By \textbf{BBC-Termination}, no correct node is blocked forever on \emph{OBBC.propose}.
If \emph{OBBC.propose} has returned $0$ then $p$ returns $nil$.
Else, if $k$'s message has been received by $p$ it returns $m$.
Otherwise, $p$ asks from other nodes for $k$'s message.
By \textbf{BBC-Validity}, if \emph{OBBC.propose} returns a value $v$ then at least one correct node $q$ proposed $v$.
Thus, if \emph{OBBC.propose} returned $1$ it means that at least one correct node $q$ has proposed $1$ and by the algorithm's code $q$ does so if it has received $k$'s message $m$.
Hence, when $q$ receives $p$'s request it sends back $m$.
As $q$ is correct $p$ receives and returns~$m$.
\end{proof}

\begin{lemma}[\emph{WRB-Non-Triviality}]
\label{lemma:wrb-termination2}
If a correct node $k$ repeatedly invokes \emph{WRB-broadcast(m)} then eventually all correct nodes will \emph{WRB-deliver(k)} $m$.
\end{lemma}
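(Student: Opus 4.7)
The plan is to combine the eventual synchrony assumption $\bDiamond\mathit{Synch}$ with the monotonic behaviour of the $\mathit{timer}$ variable in Algorithm~\ref{alg:wrb} to show that, after some point, every \emph{WRB-deliver(k)} invocation at every correct node receives $k$'s message before its timer expires, and then to invoke \textbf{BBC-Validity} on OBBC to conclude that every such invocation returns $m$.

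\smallskip

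\noindent First I would fix the time $\tau$ and bound $\delta$ guaranteed by $\bDiamond\mathit{Synch}$: after $\tau$, every message sent between correct nodes is delivered within $\delta$. Since $k$ is correct, each invocation of \emph{WRB-broadcast(m)} causes $(m,\mathit{sig}_k(m))$ to be sent to every correct node, and after $\tau$ this message arrives within $\delta$ time. Next I would consider the sequence of \emph{WRB-deliver(k)} invocations performed by an arbitrary correct node $p$. At every invocation the variable $\mathit{timer}$ is increased on line~\ref{wrb:l6}, and only adjusted downward on line~\ref{wrb:l9} in invocations that returned $m$ within the timeout. Hence if an invocation times out (i.e.\ $p$ did not receive $k$'s message in time), $p$'s $\mathit{timer}$ strictly grows; since we assume the increase is unbounded (e.g.\ doubling, as in~\cite{Castro1999}), eventually $\mathit{timer} > \delta$ at $p$.

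\smallskip

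\noindent Then I would argue that from some invocation onward all correct nodes are in the ``good'' state simultaneously. Concretely, pick an invocation of \emph{WRB-deliver(k)} whose starting time is greater than $\tau$ and at which every correct node's $\mathit{timer}$ already exceeds $\delta$; such an invocation exists because $k$ keeps invoking \emph{WRB-broadcast(m)}, which induces an unbounded sequence of matched \emph{WRB-deliver(k)} calls at each correct node, and each failed call enlarges the timer. In this invocation every correct node receives $(m,\mathit{sig}_k(m))$ before its $\mathit{timer}$ expires (line~\ref{wrb:l3}), passes the check of line~\ref{wrb:l4}, and therefore proposes $1$ to the underlying OBBC instance. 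By \textbf{BBC-Validity}, since all correct nodes propose $1$, OBBC returns $d=1$ to every correct node. Each correct node then falls into the branch of lines~\ref{wrb:l8}--\ref{wrb:l10}, already holds $m$, and returns $m$, which establishes \textbf{WRB-Non-Triviality}.

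\smallskip

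\noindent The main obstacle will be making the $\mathit{timer}$ argument rigorous: the ``adjust'' step of line~\ref{wrb:l9} may decrease $\mathit{timer}$, so one has to rule out pathological oscillations in which the timer repeatedly dips below $\delta$. The standard remedy, which I would adopt, is to assume the increase of line~\ref{wrb:l6} is strict and multiplicative while the adjustment of line~\ref{wrb:l9} is bounded below by the most recent observed round-trip delay, so that once $\mathit{timer}>\delta$ holds at a correct node after time $\tau$, it continues to hold from then on. With this assumption the stabilization argument above goes through and the lemma follows.
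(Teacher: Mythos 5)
Your proof is correct and follows essentially the same route as the paper's: use $\bDiamond\mathit{Synch}$ together with the incremental timer to argue that eventually every correct node's timer exceeds $\delta$, so in some invocation every correct node receives $k$'s message in time and proposes $1$, and the validity property of the underlying BBC then forces a decision of $1$, whence every correct node returns $m$. Your treatment is in fact slightly more careful than the paper's, which handles the downward timer adjustment only by citing the incremental-timer method of~\cite{Castro1999}, and which invokes the key property under the name \textbf{BBC-Agreement} where, as you correctly use, it is \textbf{BBC-Validity}.
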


\begin{proof}
By the algorithm's code, on any unsuccessful delivery the timer increases.
Let $k$ be a correct node that repeatedly invokes \emph{WRB-broadcast(m)}.
Following the $\bDiamond Synch$ \cite{Bouzid2015} assumption and the incremental timer method \cite{Castro1999}, eventually after an unknown time $\tau$ the timer of all nodes is longer than the unknown upper bound $\delta$ on a message's transfer delay.
Hence, every correct node receives $k$'s message and invokes \emph{OBBC.propose(1)}.
By \textbf{BBC-Agreement} if all correct nodes propose $1$ then $1$ has to be decided.
Hence, every correct node eventually \emph{WRB-deliver(k)} $m$.
\end{proof}

By the above four lemmas we have:
\begin{theorem}
The protocol listed in Algorithm~\ref{alg:wrb} solves WRB.
\end{theorem}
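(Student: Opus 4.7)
The plan is to observe that the theorem is an immediate corollary of the four preceding lemmas, each of which already discharges exactly one of the four defining axioms of WRB stated in Section~\ref{sec:wrb-overview}. Concretely, Lemma~\ref{lemma:wrb-validity} gives \textbf{WRB-Validity}, Lemma~\ref{lemma:wrb-agreement} gives \textbf{WRB-Agreement}, Lemma~\ref{lemma:wrb-termination1} gives \textbf{WRB-Termination}, and Lemma~\ref{lemma:wrb-termination2} gives \textbf{WRB-Non-Triviality}. So the proof collapses to a one-sentence citation of these four lemmas.

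The only thing to verify is that the statement of each lemma matches the corresponding axiom verbatim — same quantifiers, same conclusion, no hidden strengthening or weakening of the hypotheses — which a quick side-by-side comparison with the bulleted definitions in Section~\ref{sec:wrb-overview} confirms, so no additional gluing argument is required. The genuinely non-trivial work has already been done inside the lemmas themselves: of those, the main obstacle is the pull-phase reasoning inside Lemma~\ref{lemma:wrb-termination1} (termination in the case \emph{OBBC} returns $1$ at a node that has not yet received $k$'s message), which hinges on \textbf{BBC-Validity} to guarantee that some correct node proposed $1$ and therefore holds the message to serve the $\mathit{REQ}$, and on Lemma~\ref{lemma:wrb-termination2} whose liveness relies on the $\bDiamond \mathit{Synch}$ assumption together with the incremental-timer mechanism of line~\ref{wrb:l6}. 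Since both of those ingredients are already part of our system model and have been used inside the lemmas, the theorem itself requires no further argument beyond citing them.
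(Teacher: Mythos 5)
Your proposal is correct and matches the paper's own argument: the paper likewise proves the theorem by simply combining Lemmas~\ref{lemma:wrb-validity}, \ref{lemma:wrb-agreement}, \ref{lemma:wrb-termination1} and~\ref{lemma:wrb-termination2}, one per WRB property, with no additional gluing step. Your observations about where the real work lies (the pull phase via \textbf{BBC-Validity} and the timer growth under $\bDiamond \mathit{Synch}$) are consistent with how the paper discharges those lemmas.
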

Note that WRB's communication costs depend on the success of OBBC. 
If OBBC decides in a single communication step, then WRB-deliver also terminates in a single communication step.
%\begin{observation}
%	\label{wrb:obs1}
%	If for every node OBBC decides in a single communication step, then WRB-deliver also terminates in a single communication step.
%\end{observation}
}
{
Since the correctness proof of Algorithm~\ref{alg:wrb} is technical, it is deferred to the full version of this paper~\cite{BF2019full}.
}

	\iftoggle{VLDB}{
\section{The \PROTNAME{} Protocol}
}
{
\iftoggle{SINGLE}{
\section{The \PROTNAME{} Protocol}
}{
\section{\PROTNAME{} -- a Total ordering Optimistic sYstem}
}
}
\label{sec:toy}

\nottoggle{VLDB}{
\PROTNAME{} implements the \ACRDEF{$\rho$} abstraction with $\rho = f + 1$.
This section focuses on the algorithmic aspects of \PROTNAME{} and its correctness, while actual implementation considerations and performance are discussed in Section~\ref{sec:impl} below.
}{
}

We present \PROTNAME{} in a didactic way:
We first show a two-phased \emph{crash fault tolerant} (CFT) ordering protocol based on WRB.
We then improve it to a single-phased protocol.
Finally, we extend the protocol to tolerate Byzantine failures.
The pseudo-code of the protocol appears in Algorithm~\ref{cft-bft} and~\ref{recovery}.
Regularly numbered lines correspond to the CFT aspects of the protocol, while lines prefixed with `b' are the additions to handle Byzantine failures.

\iftoggle{VLDB}{
	\begin{algorithm}[t!]
}
{
	\begin{algorithm}[t]
}
	\footnotesize
	\SetAlgoLined
	\SetKwFunction{WRBD}{WRB-deliver}
	\SetKwFunction{WRBB}{WRB-broadcast}
	\SetKwFunction{RBB}{RB-broadcast}
	\SetKwFunction{RBD}{RB-deliver}
	\SetKwFunction{ABB}{Atomic-broadcast}
	\SetKwFunction{ABD}{Atomic-deliver}
	\SetKwFunction{REC}{RECOVERY}
 	\SetKwFor{Upon}{upon}{do}{end} 
%	\begin{algorithmic}[1]
		$r_i \leftarrow 0$\;
		$\mathit{proposer_{r_i}}\leftarrow p_0$\;
		$\mathit{full\_mode \leftarrow true}$\;
		\While{true}{
		$b \leftarrow nil$\;
		\setcounter{AlgoLine}{0}
		\SetNlSty{texttt}{[b}{]}
		\While{$\mathit{proposer_{r_i}}$'s block was tentatively decided in the last $f$ rounds \label{bft:l1}}   
		{ 
		$\mathit{proposer_{r_i}} \leftarrow (\mathit{proposer_{r_i}} + 1) \mod n$\;
		}  \label{bft:l2}
		\setcounter{AlgoLine}{5}
		\SetNlSty{texttt}{}{}
		\If {$i=\mathit{proposer_{r_i}} \land \mathit{full\_mode = true}$ \label{cft:l1} }  { \tcc{executed if nil has been WRB-delivered in the last iteration} 
		$b \leftarrow$ prepared block\;
		\WRBB{$b$}\;
		$ b \leftarrow nil$\;
		} \label{cft:l2}

		\If{$(\mathit{proposer_{r_i}} + 1) \mod{n} = i$ \label{cft:l3}} { 
		$b \leftarrow$ prepared block\;
		} \label{cft:l4}
		$b_i^{r_i} \leftarrow$ \WRBD{$\mathit{proposer_{r_i}}, b$}\; \label{cft:l5}
		\If {$b_i^{r_i} = nil$} { \label{cft:20}
		 $\mathit{full\_mode \leftarrow true}$\;
		$\mathit{proposer_{r_i}} \leftarrow (\mathit{proposer_{r_i}} + 1) \mod n$\;
		continue\;
		} \label{cft:l7}
		$\mathit{full\_mode \leftarrow false}$\;
		\setcounter{AlgoLine}{3}
		\SetNlSty{texttt}{[b}{]}
		\If {$b_i^{r_i} $ is not $valid$ \label{bft:l3}} {
		\tcc{validating the block hash against the previous block} 
		$proof\leftarrow (b_i^{r_i}, sig_{\mathit{proposer_{r_i}}}(b_i^{r_i}),
			\newline b_i^{r_i-1}, sig_{\mathit{proposer_{r_i - 1}}}(b_i^{r_i-1}))$\;
		\RBB{$proof$}\;
		invoke \REC{$r_i$, $proof$}\; 
		continue\;
		}\label{bft:l4}
		\setcounter{AlgoLine}{21}
		\SetNlSty{texttt}{}{}
		append $b_i^{r_i}$ to the chain\label{cft:l6}\;
		\setcounter{AlgoLine}{10}
		\SetNlSty{texttt}{[b}{]}
		decide $b_i^{r_i - (f + 2)}$\label{bft:l12}\;
		\setcounter{AlgoLine}{22}
		\SetNlSty{texttt}{}{}
		$\mathit{proposer_{r_i}} \leftarrow (\mathit{proposer_{r_i}} + 1) \mod n$\; \label{cft:l8}
		$r_i \leftarrow r_i + 1$\; 
		} \label{cft:l9}
		\BlankLine
		\setcounter{AlgoLine}{11}
		\SetNlSty{texttt}{[b}{]}
		\Upon{\RBD a valid $proof \leftarrow (b^{r}, sig_{\mathit{proposer_{r}}}(b^{r}), b^{r-1}, sig_{\mathit{proposer_{r - 1}}}(b^{r-1}))$ \label{bft:l5}}
		{
				invoke \REC{$r$, $proof$}\; 
		} \label{bft:l6}
	
	\caption{\PROTNAME{} -- code for $p_i \newline$ \footnotesize The lines that start with `b' depict the BFT additions}
	\label{cft-bft}
\end{algorithm}

\iftoggle{VLDB}{}{
\begin{algorithm}[t]
	%	\caption{\PROTNAME{} -- code for $p_i$\\ \footnotesize The black and red lines depict the CFT and BFT mechanisms respectively}
	%	\label{cft-bft}
	\footnotesize
	\SetAlgoLined
	\SetKwFunction{ABB}{Atomic-broadcast}
	\SetKwFunction{ABD}{Atomic-deliver}
	\SetKwFunction{REC}{RECOVERY}
	
	\setcounter{AlgoLine}{0}
	\SetNlSty{texttt}{[}{]}
	\SetKwProg{myproc}{Procedure}{}{}
	\myproc{\REC{$r$, $proof$}}{
		$\mathit{versions_r}\leftarrow \{\}$\label{bft:l8}\;
		\eIf{$r_i < r - 1$}
		{
			$v \leftarrow \mathit{empty\_version}$\;
		}{
			$v \leftarrow [(b^{r - (f + 1)}, sig_{\mathit{proposer_{r - (f + 1)}}}(b^{r - (f + 1)})), ...,
			\newline (b^{r - 1}, sig_{\mathit{proposer_{r - 1}}}(b^{r - 1})), ..., \newline (b^{r_i}, sig_{\mathit{proposer_{r_i}}}(b^{r_i}))]$\;
		}
		\ABB{$v$}\;  \label{bft:l9}
		\Repeat{$|\mathit{versions_r}| = n - f$}
		{\label{bft:l10}
			\ABD $v_j$ from $p_j$\; 
			\If{$v_j$ is $valid$} {
				\tcc{validating the received version} 
				$\mathit{versions_r} \leftarrow \mathit{versions_r} \cup \{v_j\}$\;
			} 
			
		}
		$v' \leftarrow $ the first received among $\{v_j \in \mathit{versions_r} |r_j = max\{r_k | v_k \in \mathit{versions_r} \land (b^{r_k}, sig_{\mathit{proposer_{r_k}}}(b^{r_k}) \in v_k\} \}$\;
		adopt $v'$ and update $r_i$, and $\mathit{proposer_{r_i}}$\;
		$\mathit{full\_mode \leftarrow true}$;
	} \label{bft:l11}
	\caption{Recovery Procedure -- code for $p_i$}
	\label{recovery}
\end{algorithm}
}

\subsection{Simplified CFT \PROTNAME{}}
As mentioned in Section~\ref{sec:wrb}, WRB supports an all or nothing delivery that is blind to the message's content. 
When there are no Byzantine failures, a node never sends different messages to different nodes.
Hence, a simple two phase blockchain protocol would be a round based design in which a deterministically selected leader disseminates its block proposal to all nodes using WRB.
In case all nodes deliver the proposed block, then this becomes the next block.

Given the continuous iterative nature of blockchains, we may improve the algorithm's round latency to an amortized single phase.
For this, we piggyback the ${r+1}^{th}$ block on top of the first message that \emph{WRB-deliver} sends when trying to deliver the $r^{th}$ block.
We support this by augmenting the \emph{WRB-deliver} method to receive two parameters, \emph{WRB-deliver(k,pgd)}, such that \emph{pgd} is the potential piggybacked data (can be $\mathit{nil}$).
The OBBC protocol is also augmented to receive \emph{pgd} and piggyback it on the first message it broadcasts.
Recall that we assume an OBBC protocol that always starts by having each node broadcast its vote.
Hence in the augmented protocol, each node that starts the OBBC protocol broadcasts its vote alongside the piggybacked \emph{pgd} message, which is made available to the calling code together with the decision value.

%\subsubsection{The CFT \PROTNAME{} Algorithm}
%\begin{figure}[t]

%\end{figure}

%The regular numbered lines in algorithm~\ref{cft-bft} presents a leader based, single-phased CFT ordering algorithm for blockchains.

The details appear in Algorithm~\ref{cft-bft}. Specifically, on each round $r$ the algorithm performs the following:
\begin{itemize}
	\item If $p_i$ is $({r+1})$'s proposer, it prepares a new block (lines~\ref{cft:l3}-\ref{cft:l4}).
	To ensure liveness, if in the previous iteration WRB has failed to deliver a message, then $r$'s proposer also prepares a block and WRB-broadcast it (lines~\ref{cft:l1}-\ref{cft:l2}).
	\item Meanwhile, all nodes are trying to WRB-deliver $r$'s block (line~\ref{cft:l5}).
	Note that the ${r+1}^{th}$ proposer piggybacks the next block on top of this message.
	\item If a block $b^r\neq nil$ has been delivered, then $b^r$ is appended to the chain (line~\ref{cft:l6}) and the protocol continues to round $r+1$ (line~\ref{cft:l8}-~\ref{cft:l9}).
	\item Else, all correct nodes switch proposer and continue to the next try (lines~\ref{cft:20}-~\ref{cft:l7}).
\end{itemize}
We prove the correctness of the full BFT protocol (Section~\ref{bft:proof}).
\begin{figure}[t]
	\centering
	\iftoggle{SINGLE}
	{\includegraphics[width=0.7\linewidth]{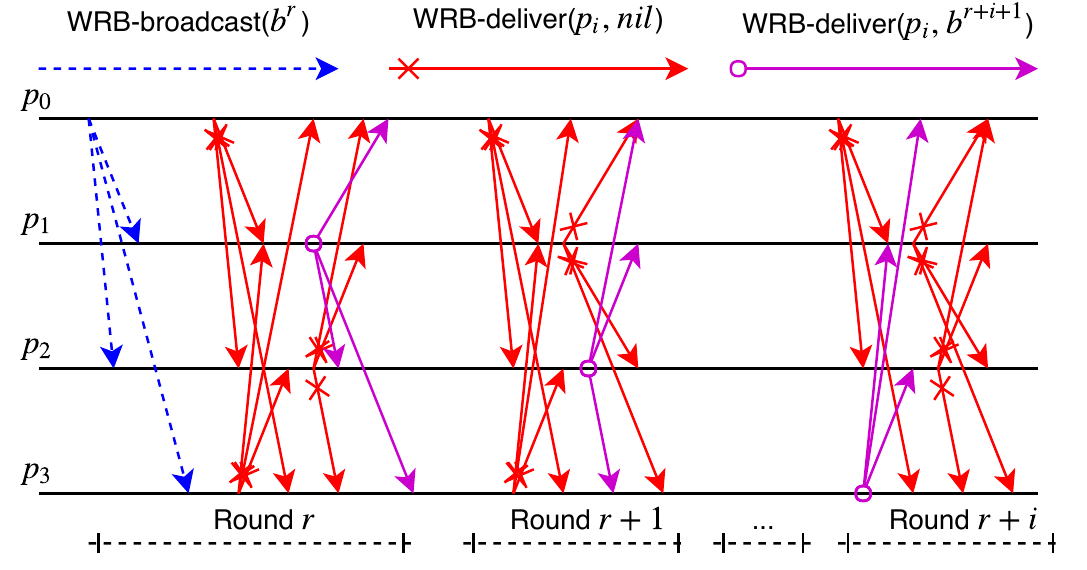}}
	{\includegraphics[width=1\linewidth]{figures/normal.pdf}}
	\caption{\PROTNAME{} in the normal case operation.}
	\label{fig:normal_case}
\end{figure}
Note that due to WRB and the piggybacking method, algorithm~\ref{cft-bft} establishes a single-phased protocol as long there are no Byzantine failures, and the failure pattern matches the specific OBBC optimistic pattern.

Figure~\ref{fig:normal_case} presents normal case operation. 
Each optimistic period starts with the current proposer broadcasting its block.
Then, on every round, each node broadcasts a single message (as the first OBBC message), except the next proposer that piggybacks the next block on top of that message.

\iftoggle{VLDB}{
	
}{}

\subsection{Full BFT \PROTNAME{}}
\nottoggle{VLDB}{In order to}{To} extend the basic \PROTNAME{} to handle Byzantine failures, \PROTNAME{} utilizes the fact that there is at least one correct node $p_c$ in every $f+1$ different proposers.
Since $p_c$ is correct, when $p_c$'s block is WRB-delivered, all correct nodes receive the very same block, including the hash to its predecessor block.
When a correct node detects a chain inconsistency (due to the hash that each block carries), it initiates a traditional BFT based recovery procedure.
At the end of the recovery phase, all correct nodes synchronize their chain to the same single valid version.
\nottoggle{VLDB}{To enable}{For} the recovery procedure, \PROTNAME{} maintains the following invariant:
\begin{invariant}
	\label{inv:continue}
	A node $p$ proceeds to the next round of the algorithm only if it knows that at least $f+1$ correct nodes will eventually proceed as well.
\end{invariant}
An immediate consequence of Invariant~\ref{inv:continue} is that if a block $b^r$ is at depth $f+1$ in $p$'s local chain, then there are at least $f+1$ correct nodes for which $b^r$ is at depth of at least $f+1$ in their local chains.
In principle, preserving Invariant~\ref{inv:continue} requires waiting to verify that at least $f$ other correct nodes are moving to the next round.
Yet, \nottoggle{VLDB}{since}{as} the communication pattern we described above already includes an all-to-all message exchange in each round (while executing OBBC), it serves as an implicit acknowledgement, so we \nottoggle{VLDB}{still}{} have a single-phased algorithm when the optimistic assumptions hold.

Recall that line numbers prefixed by `b' Algorithm~\ref{cft-bft} describe the additional actions to accommodate Byzantine failures.
Specifically:
\begin{itemize}
	\item First $p_i$ finds, by a pre-defined order, a prosper that has not successfully proposed a block in the last $f+1$ rounds (lines b\ref{bft:l1}- b\ref{bft:l2}).
	\item If $p_i$ detects an inconsistency in the chain, it \nottoggle{VLDB}{announces a recovery procedure}{starts the recovery} using reliable broadcast (lines b\ref{bft:l3}- b\ref{bft:l4}).
	\item Upon receiving a valid announcement of inconsistency, $p_i$ initiates the recovery procedure (lines b\ref{bft:l5}- b\ref{bft:l6}).
	Note that the announcement validation is done against digital signatures and the blocks' hashes.
\end{itemize}
The recovery \nottoggle{VLDB}{procedure}{} installs agreement among all correct nodes regarding the longest possible blockchain prefix as detailed in Algorithm~\ref{recovery}.
Executing \textsc{RECOVERY} by $p_i$ involves:
\begin{itemize}
	\item $p_i$ proposes, using \emph{Atomic-broadcast}, a valid version of the $f$ blocks that are in disagreement (excluding $b^r$ itself) followed by all the newer blocks it knows about (lines~\ref{bft:l8}--\ref{bft:l9}).
	\item Then $p_i$ collects $n-f$ valid versions (including empty ones) and adopts the first longest agreed prefix of the blockchain (lines~\ref{bft:l10}--\ref{bft:l11}).
\end{itemize}
Finally, as the recovery procedure may alter only the last $f+1$ blocks, the node decides on the block which is in depth of $f+2$ (line~b\ref{bft:l12})
\iftoggle{FV}{
	\subsection{Correctness Proof}
	\label{bft:proof}
	\begin{lemma}[\ACRDEF{$f+1$}-Validty]
		\label{proof:lemma1}
		A decided (possible tentative) value $v$ satisfies the \textsc{valid} method.
	\end{lemma}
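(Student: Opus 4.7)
The plan is to trace through Algorithms~\ref{cft-bft} and~\ref{recovery} and identify every point at which a value is decided by $p_i$, and then argue that at each such point the value has already been screened by the \textsc{valid} predicate. I first fix the terminology: by a ``decided (possibly tentative) value'' I mean either a block that $p_i$ appends to its local chain (line~\ref{cft:l6} of Algorithm~\ref{cft-bft}), or a block that $p_i$ promotes to a definite decision (line~b\ref{bft:l12}), or one of the blocks installed into $p_i$'s chain as part of an adopted version $v'$ inside \textsc{RECOVERY}. So there are exactly three write points to $p_i$'s chain/decision record to examine.

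For the first case, I would observe that Algorithm~\ref{cft-bft} executes the \textsc{valid} check at line~b\ref{bft:l3} immediately after the block is obtained from \emph{WRB-deliver} at line~\ref{cft:l5}, and that the \emph{continue} statement at the end of the if-branch (line~b\ref{bft:l4}) skips the rest of the iteration whenever the block fails \textsc{valid}. Consequently, control can only reach the append at line~\ref{cft:l6} after $b_i^{r_i}$ has been verified against \textsc{valid}, so every tentative decision satisfies the predicate. For the second case, note that the definite decision at line~b\ref{bft:l12} promotes $b_i^{r_i-(f+2)}$, a block that was either already tentatively appended by case one (and thus already passed \textsc{valid}), or installed via \textsc{RECOVERY}, which is handled by the third case.

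For the recovery case I would point to Algorithm~\ref{recovery}: a received version $v_j$ is unioned into $\mathit{versions_r}$ only after the ``$v_j$ is $valid$'' test succeeds, and the adopted version $v'$ is chosen from this filtered multiset. Since a version is deemed valid only if each of its constituent blocks passes \textsc{valid} (and the hash-chain structure linking them), every block installed into $p_i$'s chain at the adoption step also satisfies \textsc{valid}. Combining the three cases gives the lemma.

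The main subtlety I expect is purely a matter of bookkeeping rather than mathematical depth: the inline comment next to line~b\ref{bft:l3} describes the check as ``validating the block hash against the previous block'', which could be misread as only a chain-consistency test. The careful step in the proof will be to make explicit that the \textsc{valid} predicate invoked here is the conjunction of (a) the external \textsc{valid} method inherited from the \PREFDEF{}/VPBC definition and (b) the hash-linkage check against the predecessor block, and that the same combined predicate is what is re-used inside \textsc{RECOVERY}. Once that convention is fixed, the argument reduces to the three-point code inspection above.
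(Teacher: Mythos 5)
Your proof is correct and follows essentially the same approach as the paper, which simply observes from Algorithm~\ref{cft-bft}'s code that a value is appended to the blockchain only if it satisfies the \textsc{valid} method. Your version is just a more detailed code inspection (covering the append, the definite decision, and the \textsc{RECOVERY} adoption, plus the convention that the check includes the external \textsc{valid} predicate), all of which is consistent with the paper's one-line argument.
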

	\begin{proof}
		From Algorithm~\ref{cft-bft}'s code, a value is appended to the blockchain only if it satisfies the \textsc{valid} method.
	\end{proof}
	\begin{lemma}
		\label{proof:lemma2}
		Every $f+1$ consecutive decided blocks were proposed by $f+1$ different nodes.
	\end{lemma}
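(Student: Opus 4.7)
The plan is to read the property directly off the proposer-rotation rule in lines b\ref{bft:l1}--b\ref{bft:l2} of Algorithm~\ref{cft-bft}. The loop there explicitly skips any candidate whose block was \emph{tentatively decided in the last $f$ rounds}. So the proof reduces to arguing that (i) every correct node computes the same proposer for each round, and (ii) this deterministic rule forces $f+1$ distinct proposers to be used in any window of $f+1$ consecutive decided blocks.

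First I would fix a round $r$ and look at a window of consecutive decided blocks $b^{r}, b^{r+1}, \ldots, b^{r+f}$. Let $p_k$ be the proposer chosen by the algorithm for round $r+k$, where $0 \le k \le f$. Suppose for contradiction that two of these proposers coincide, say $p_j = p_\ell$ with $0 \le j < \ell \le f$. Since $b^{r+j}$ was tentatively decided and appended to the chain (line~\ref{cft:l6}), at the start of round $r+\ell$ the node $p_j$'s block had been tentatively decided in the previous $\ell - j \le f$ rounds. By the skip condition in lines b\ref{bft:l1}--b\ref{bft:l2}, $p_j$ cannot be chosen as the proposer of round $r+\ell$, contradicting $p_\ell = p_j$.

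To make this argument watertight I need to justify that all correct nodes agree on who is the proposer of each round, i.e., that the selection is well-defined across the system. This follows because the skip rule is a deterministic function of the sequence of recent tentatively decided blocks, and the earlier lemmas/invariants (in particular Invariant~\ref{inv:continue} together with the recovery procedure of Algorithm~\ref{recovery}) guarantee that all correct nodes share the same prefix of tentatively decided blocks relevant to the rotation decision. After a \textsc{RECOVERY} invocation, every correct node adopts the same version $v'$ and updates $\mathit{proposer}_{r_i}$ accordingly, so the deterministic rotation resumes coherently on the common adopted chain.

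The main obstacle I anticipate is being precise about what it means for a block to be ``tentatively decided in the last $f$ rounds'' in the presence of recovery: a block that looked tentatively decided at one node might be replaced after recovery, and we must apply the skip rule relative to the current local view. The cleanest way to finish is to observe that the rotation rule only consults the $f$ most recent blocks of the local chain, which are exactly the blocks that both (a) all correct nodes will converge on after any recovery, and (b) drive the distinctness argument above. Combining the per-round uniqueness with the observation that the window has size $f+1$ gives $f+1$ distinct proposers, completing the proof.
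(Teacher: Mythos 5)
Your proof is correct and rests on exactly the same observation as the paper's, which simply cites the skip rule in lines b\ref{bft:l1}--b\ref{bft:l2}: a proposer whose block was tentatively decided within the last $f$ rounds is rotated past, so any window of $f+1$ consecutive decided blocks must have $f+1$ distinct proposers. Your added contradiction argument and the remarks about coherence of the rotation across nodes and recovery are just a more detailed spelling-out of that same one-line justification.
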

	\begin{proof}
		By Algorithm~\ref{cft-bft}'s code (lines~b\ref{bft:l1}--b\ref{bft:l2}), if the current proposer has successfully proposed a block in the last $f+1$ rounds, then its role is switched to a new proposer.
	\end{proof}
	\begin{lemma}
		\label{proof:lemma3}
		At the recovery procedure's end, all correct nodes adopt the same version.
	\end{lemma}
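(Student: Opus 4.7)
The plan is to show that the recovery procedure is effectively deterministic once the Atomic-broadcast layer fixes a common delivery order, and that this common order forces every correct node to pick the same $v'$. I will lean entirely on the properties of \textbf{Atomic-broadcast} established earlier (in particular \textbf{Atomic-Order} together with RB-Agreement and RB-Termination), and on the fact that every step in \textsc{RECOVERY} after the Atomic-broadcast call is a pure function of the delivered sequence.

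First I would argue that every correct node that enters \textsc{RECOVERY} eventually Atomic-broadcasts a version $v$, so by \textbf{RB-Termination} each such $v$ is eventually Atomic-delivered by every correct node. By \textbf{Atomic-Order}, all correct nodes observe the Atomic-delivered versions in the same order. Next I would show that the predicate used to filter versions in the loop (``$v_j$ is $valid$'') is a deterministic, stateless check on $v_j$ itself (signatures and hash-chain consistency of the blocks inside $v_j$), so correct nodes agree on which delivered versions pass the filter. Combined with the common order, this means every correct node builds the exact same prefix of valid versions, and in particular, the first $n-f$ valid versions they collect into $\mathit{versions_r}$ are identical across correct nodes. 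Termination of the loop at $n-f$ follows because at least $n-f$ correct nodes participate and their versions are valid by construction.

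Given that $\mathit{versions_r}$ is the same multiset at every correct node and is constructed in the same order, I would then observe that the selection rule for $v'$ — ``the first received among the versions attaining the maximum round $r_k$ that carries a signed block $b^{r_k}$'' — is a deterministic function of that ordered set. The maximum $r_k$ is identical for all correct nodes, the subset of versions achieving it is identical, and ``the first received'' among them is well-defined because all correct nodes share the Atomic-delivery order. Hence every correct node selects the same $v'$ and adopts it, proving the lemma.

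The main obstacle I expect is justifying the ``first received'' tie-breaking: it only makes sense relative to a globally consistent order, which is exactly what Atomic-broadcast provides, so the argument must explicitly invoke \textbf{Atomic-Order} rather than the weaker RB guarantees used elsewhere in the protocol. A secondary subtlety is ensuring that the validity predicate applied inside the loop depends only on the content of $v_j$ and on locally-verifiable cryptographic data (signatures and hashes carried by the blocks), so that no two correct nodes disagree on whether a particular delivered version is valid; once this is spelled out, the rest of the proof is mechanical.
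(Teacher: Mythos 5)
Your argument is essentially the paper's own proof, just spelled out in more detail: the paper likewise observes that by \textbf{Atomic-Order} all correct nodes receive the same versions in the same order, and that the subsequent filtering and selection of $v'$ is a deterministic rule applied to that common sequence, hence all correct nodes adopt the same version. Your elaborations (determinism of the validity check, the ``first received'' tie-break being well-defined under the common delivery order) are consistent with, and a reasonable expansion of, the paper's one-line argument.
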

	\begin{proof}
		By the \textbf{Atomic-Order} property of the atomic broadcast primitive all correct nodes receive the same versions in the same order. Hence, applying the deterministic rule described in Algorithm~\ref{cft-bft} results in an agreement among the correct nodes.
	\end{proof}
	\begin{lemma}
		\label{proof:lemma4}
		For any two correct nodes $p,q$, let $b_p^r, b_q^r$ be their decided value in round $r$.
		If $d(b_p^r) > f \wedge d(b_q^r) > f$ then $b_p^r=b_q^r$.
	\end{lemma}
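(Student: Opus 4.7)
The plan is to derive a contradiction from the assumption that $b_p^r\neq b_q^r$, leveraging Lemmas~\ref{proof:lemma2} and~\ref{proof:lemma3}, the WRB and RB guarantees, and the chain-hash discipline that links each block to its predecessor.

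First I would observe that blocks in the chain are linked by cryptographic hashes and a block is appended only after its predecessor hash validates against the local previous block (line~\ref{bft:l3} of Algorithm~\ref{cft-bft}). Consequently, as long as a correct node extends its chain solely through successful \emph{WRB-deliver} calls (no recovery invoked), a discrepancy at round $r$ forces a discrepancy at every round $r+j\ge r$ by collision resistance of the hash. Since $d(b_p^r),d(b_q^r)>f$, both chains contain at least $f+1$ blocks beyond round $r$.

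Next I would exploit alignment and Lemma~\ref{proof:lemma2}. \textbf{WRB-Agreement} ensures that every \emph{WRB-deliver} returns $\mathit{nil}$ at all correct nodes or non-$\mathit{nil}$ at all correct nodes, so the round counters of correct nodes advance in lockstep; and since the proposer rotation rule in lines~\ref{bft:l1}--\ref{bft:l2} depends only on the identity sequence of the last $f$ proposers, $p$ and $q$ designate the same proposer at every round even if the block contents at round $r$ disagree. By Lemma~\ref{proof:lemma2}, among the $f+1$ proposers of rounds $r+1,\ldots,r+f+1$ at least one is correct; call it $p_c$ at round $r+k$ with $1\le k\le f+1$. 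If both $p$ and $q$ appended $p_c$'s round-$r+k$ block without triggering recovery, then \textbf{WRB-Validity} applied to the correct $p_c$ yields $b_p^{r+k}=b_q^{r+k}$, contradicting the forward propagation of disagreement derived above.

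The only remaining possibility is that some correct node invokes the recovery procedure. By \textbf{RB-Agreement}, the proof of inconsistency disseminated in lines~\ref{bft:l3}--\ref{bft:l4} reaches every correct node, so $p$ and $q$ both invoke \textsc{recovery} (Algorithm~\ref{recovery}) on the same round $r^\star\le r+f+1$; the version proposed there spans rounds $r^\star-(f+1)\le r$ through $r_i$, so round $r$ lies within the recovery window. Lemma~\ref{proof:lemma3} then guarantees that $p$ and $q$ adopt identical versions, forcing $b_p^r=b_q^r$ after recovery and once again contradicting the assumption. The main obstacle I expect is making the alignment argument airtight, namely showing that $p_c$ occupies the identical slot $r+k$ in both $p$'s and $q$'s chains: this reduces to observing that proposer selection is driven by proposer identities rather than block contents, and that WRB-Agreement keeps those identity histories synchronized across correct nodes.
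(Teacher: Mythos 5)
Your overall route is essentially the paper's harder case made explicit: a correct proposer among the next rounds, being honest about its predecessor hash, forces at least one of $p,q$ to detect a mismatch (lines~b\ref{bft:l3}--b\ref{bft:l4}), the RB-disseminated proof makes both invoke \textsc{recovery}, and Lemma~\ref{proof:lemma3} yields agreement on a window that contains round $r$. Your explicit forward-propagation of the discrepancy along the hash chain and the proposer-schedule alignment argument are things the paper leaves implicit, and they are fine.

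There is, however, a gap in the step that anchors the whole argument: the claim that ``by Lemma~\ref{proof:lemma2}, among the $f+1$ proposers of rounds $r+1,\ldots,r+f+1$ at least one is correct.'' Lemma~\ref{proof:lemma2} speaks about \emph{decided} blocks, and the hypothesis $d(b_p^r),d(b_q^r)>f$ only guarantees that rounds up to $r+f$ are decided (at depth exactly $f+1$ the node is still working on round $r+f+1$). So the lemma only gives you $f+1$ distinct proposers for rounds $r,\ldots,r+f$, and the unique correct one among them may be the proposer of round $r$ itself. In that situation your case analysis is not exhaustive: your first branch (``both $p$ and $q$ appended $p_c$'s round-$(r+k)$ block without recovery'') may be inapplicable because $p_c$'s round is $r+f+1$ and its block is not yet appended anywhere, and your second branch (recovery was invoked) need not hold either --- yet the lemma must already hold at depth $f+1$. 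The missing piece is exactly the paper's first case: if round $r$'s proposer is correct, then \textbf{WRB-Agreement} together with the fact that a correct proposer broadcasts a single block means $p$ and $q$ obtained the very same $b^r$ from it, so a disagreement at round $r$ can only be the result of a recovery invocation, which Lemma~\ref{proof:lemma3} resolves; only when that proposer is Byzantine do you need a correct proposer strictly after $r$, and then it can be taken in $\{r+1,\ldots,r+f\}$, all of which are decided under the depth hypothesis. Adding that case split (or equivalently allowing $k=0$ in your argument, where no forward propagation is needed) closes the hole and makes your proof match the paper's.
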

	\begin{proof}
		Let $p,q$ be two correct nodes and let $b_p^r, b_q^r$ be their decided value in round $r$ such that $d(b_p^r) > f \wedge d(b_q^r) > f$ and assume b.w.o.c. that $b_p^r\neq b_q^r$.
		
		Let $p_r$ be $r$'s proposer.
		If $p_r$ is correct, then by \textbf{WRB-Agreement} and $p_r$'s correctness all correct nodes have received the very same message from $p_r$ (as in Reliable Broadcast).
		Hence, the case of $b_p^r\neq b_q^r$ may occur only as a result of the recovery procedure invocation in round  $r' \in \{r+1, ..., r + f\}$.
		By Lemma~\ref{proof:lemma3} at the end of the recovery procedure all correct nodes agree on the same version. Hence, $b_p^r = b_q^r$, contrary to the assumption.
		
		Suppose that $p_r$ is Byzantine and has disseminated different blocks to $p$ and $q$.
		Then, by Lemma~\ref{proof:lemma2} and the fact that the system model assumes at most $f$ faulty nodes there is at least one block $b^{r'} \in  \{b^{r+1}, ..., b^{r + f}\}$ that was proposed by correct proposer $p_{r'}$.
		Due to $p_{r'}$'s correctness it disseminates the same block to all, including the associated authentication data.
		Assume w.l.o.g. that $p_{r'}$ has received from $p_r$ the same version as $p$, then following Algorithm~\ref{cft-bft} (lines~b\ref{bft:l3}--b\ref{bft:l4}) $q$ receives an invalid block from $p_{r'}$ and invokes the recovery procedure.
		As $q$ is correct and by \textbf{RB-Termination} $p$ RB-delivers $q$'s proof of $b_{q}^{r'}$ invalidity and triggers the recovery procedure.
		Following Lemma~\ref{proof:lemma3} at the end of the recovery procedure all correct nodes adopt the very same version. Hence, $b_p^r = b_q^r$. A contradiction.
	\end{proof}
	
	\begin{lemma}[\ACRDEF{$f+1$}-Agreement]
		\label{proof:lemma5}
		For any two correct nodes $p,q$, let $b_p^r, b_q^r$ be their decided value in round $r$.
		If $d(b_p^r) > f+1 \wedge d(b_q^r) > f+1$ then $b_p^r=b_q^r$.
	\end{lemma}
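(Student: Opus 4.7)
The plan is to observe that Lemma~\ref{proof:lemma5} is an immediate corollary of Lemma~\ref{proof:lemma4}. The hypothesis $d(b_p^r), d(b_q^r) > f+1$ strictly implies $d(b_p^r), d(b_q^r) > f$, so Lemma~\ref{proof:lemma4} applies verbatim and directly yields $b_p^r = b_q^r$. No additional case analysis is needed.

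It is worth noting why we bother with this weaker restatement at all. The \PREFDEF{}-Agreement clause of Definition~1 is parameterized by $\rho$, and \PROTNAME{} instantiates $\rho = f+1$, so the literal obligation for \ACRDEF{$f+1$}-Agreement is exactly ``if $d > f+1$ then the two values are equal.'' The strict inequality $d > f+1$ (equivalently, depth $\geq f+2$) furthermore coincides with the depth at which Algorithm~\ref{cft-bft} performs the ``decide $b_i^{r_i - (f+2)}$'' step on line~b\ref{bft:l12}. Hence at that depth the block is simultaneously \emph{definite} (by the code) and \emph{in agreement} across all correct nodes (by this lemma), which is precisely what the specification requires.

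There is no real obstacle here; the heavy lifting was already done in Lemma~\ref{proof:lemma4}, which handled both the case where round $r$'s proposer $p_r$ is correct (dispatched by \textbf{WRB-Agreement} together with Lemma~\ref{proof:lemma3} to cover any intervening recovery) and the case where $p_r$ is Byzantine (dispatched by Lemma~\ref{proof:lemma2}, which guarantees a correct proposer among $p_{r+1}, \ldots, p_{r+f}$ whose broadcast forces detection of any inconsistency via the blocks' hash chain, followed again by Lemma~\ref{proof:lemma3}). Lemma~\ref{proof:lemma5} then merely quotes Lemma~\ref{proof:lemma4} under its strictly stronger hypothesis.
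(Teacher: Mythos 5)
Your proposal is correct and matches the paper's argument exactly: the paper also proves this lemma by noting it follows directly from Lemma~\ref{proof:lemma4}, since the hypothesis $d(b_p^r), d(b_q^r) > f+1$ is strictly stronger than the $d > f$ hypothesis of that lemma. The additional remarks about the role of $\rho = f+1$ and line~b\ref{bft:l12} are accurate context but not needed for the proof itself.
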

	\begin{proof}
		Follows directly from Lemma \ref{proof:lemma4}.
	\end{proof}
	
	\begin{definition}
		\label{def:def1}
		Let $b_p^{r}, b_p^{r'}$ be two decided (possibly tentative) blocks of $p$ such that $r' > r$.
		$b_p^{r'}$ is \emph{valid with respect to} ${b_p^{r}}$ if the sub-chain $[b_p^r, b_p^{r+1}, ..., b_p^{r'}]$ satisfies the predefined \textsc{valid} method.
		Similarly, a sub-chain $[b_p^r, b_p^{r+1}, ..., b_p^{r'}]$ is \emph{valid with respect to} $b_p^{r - k}$, for some $ k \leq r$, if the sub-chain $[b_p^{r - k}, ..., b_p^{r}, ..., b_p^{r'}]$ satisfies the predefined \textsc{valid} method and each $f+1$ consecutive blocks are proposed by $f+1$ different proposers.
	\end{definition}
	
	\begin{lemma}
		\label{proof:lemma6}
		If during the recovery procedure for round $r$, a correct node $p$ receives a version $v$ from correct node $q$, then $v$ is valid with respect to $b_p^{r - (f + 2)}$.
	\end{lemma}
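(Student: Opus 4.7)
The plan is to argue that because $q$ is correct, the version $v$ it ships out was assembled from its own locally committed chain, so the only nontrivial thing to verify is that $v$ meshes correctly with $p$'s own block $b_p^{r-(f+2)}$ at the junction.

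First I would unpack the recovery code (Algorithm~\ref{recovery}) in the non-empty case: $v$ is precisely $q$'s local prefix $[b_q^{r-(f+1)}, b_q^{r-(f+1)+1}, \dots, b_q^{r_q}]$, where $r_q \geq r-1$. Because $q$ is correct, every block in this list was appended at line~\ref{cft:l6} of Algorithm~\ref{cft-bft} only after passing the validity check of line~b\ref{bft:l3}. This gives hash continuity between successive entries of $v$ and, invoking Lemma~\ref{proof:lemma2} applied to $q$'s own chain, guarantees that every window of $f+1$ consecutive entries inside $v$ was produced by $f+1$ distinct proposers. Hence the internal part of $v$ already satisfies both clauses of Definition~\ref{def:def1}.

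The delicate step is the splice at the boundary between $b_p^{r-(f+2)}$ and the first entry $b_q^{r-(f+1)}$ of $v$. Here I would appeal to Lemma~\ref{proof:lemma4} with round index $r-(f+2)$: the relevant depths are $d(b_p^{r-(f+2)}) \geq f+2 > f$ (since $p$ is at round $\geq r-1$ once it participates in recovery) and $d(b_q^{r-(f+2)}) = r_q - (r-(f+2)) \geq f+1 > f$. Lemma~\ref{proof:lemma4} therefore yields $b_p^{r-(f+2)} = b_q^{r-(f+2)}$, which immediately gives (i) that the predecessor-hash carried by $b_q^{r-(f+1)}$ matches $b_p^{r-(f+2)}$, and (ii) that the extended list $[b_p^{r-(f+2)}, v]$ is identical to the sub-chain $[b_q^{r-(f+2)}, \dots, b_q^{r_q}]$ in $q$'s local ledger. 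Consequently the proposer-rotation window that straddles the junction is covered by Lemma~\ref{proof:lemma2} applied to $q$'s chain.

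The empty-version case is trivial, since an empty suffix is vacuously valid with respect to any prefix. The main obstacle I anticipate is keeping the depth bookkeeping honest across $p$ and $q$: one must show that the depths justifying the application of Lemma~\ref{proof:lemma4} actually hold at the moment the recovery is invoked, which requires tracing through the branch condition $r_i \geq r-1$ in Algorithm~\ref{recovery} for both participants. Once that bookkeeping is in place, the rest of the argument reduces to transporting Lemmas~\ref{proof:lemma2} and~\ref{proof:lemma4} along the splice.
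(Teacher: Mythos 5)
Your proof is correct and follows essentially the same route as the paper: the empty case is trivial, Lemma~\ref{proof:lemma4} is used to equate $b_p^{r-(f+2)}$ with $b_q^{r-(f+2)}$, and $q$'s correctness (it appends only blocks passing the validity check) makes its shipped version valid with respect to that common block, with your invocation of Lemma~\ref{proof:lemma2} merely spelling out the proposer-rotation clause of Definition~\ref{def:def1} that the paper leaves implicit under Lemma~\ref{proof:lemma1}. One small slip: if $p$ enters recovery while still at round $r-1$, then $d(b_p^{r-(f+2)}) = f+1$, not $\geq f+2$, but since $f+1 > f$ the application of Lemma~\ref{proof:lemma4} goes through exactly as you intend.
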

	\begin{proof}
		If the received version is an empty one, it is trivially valid with respect to $b_p^{r - (f + 2)}$.
		Else, by Lemma~\ref{proof:lemma4} all correct nodes agree on $b^{r -(f + 2)}$.
		As $q$ is correct, by Lemma \ref{proof:lemma1} $q$ appends only valid blocks to its blockchain.
		Hence, $q$'s version (that starts with $b_q^{r - (f + 1)}$) is valid with respect to $b_q^{r - (f + 2)}$ which is identical to $b_p^{r - (f + 2)}$.
	\end{proof}
	%\textbf{[[[Did we define what it means to be valid with respect to some block?]] {\color{red}better?}}
	
	\begin{definition}
		\label{def:def2}
		Let $r_g$ be the most advanced round of the algorithm that any correct node runs.
		We define the group of nodes whose current round is $\in\{r_g, r_g -1\}$ by $\mathit{front}=\{p|r_p$ $\in \{r_g-1, r_g\}\}$.
		When Invariant~\ref{inv:continue} is kept, there are at least $f+1$ correct nodes in \emph{front}.
	\end{definition}
	
	\begin{lemma}
		\label{proof:lemma7}
		While Invariant~\ref{inv:continue} is kept, a correct node executing the recovery procedure receives $n-f$ valid versions and at least one of them was received from a node in \emph{front}.
	\end{lemma}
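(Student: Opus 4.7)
The plan is to argue the two claims separately: first that $p$ manages to collect $n-f$ valid versions at all, and then that at least one of these must originate from a node in \emph{front}. Both parts rest on the structural properties already established in Lemmas~\ref{proof:lemma1}--\ref{proof:lemma6} combined with a pigeonhole argument.

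For the first claim, I would trace why every correct node reaches the \textsc{recovery} procedure for round $r$. The triggering correct node executes \RBB{\emph{proof}}, so by \textbf{RB-Termination} every correct node eventually RB-delivers the proof and invokes \REC{$r, \mathit{proof}$} (lines b\ref{bft:l5}--b\ref{bft:l6}). Hence each of the at least $n-f$ correct nodes performs \ABB on its own version, and by the \textbf{Atomic-Order} and termination of atomic broadcast $p$ eventually Atomic-delivers all of these. By Lemma~\ref{proof:lemma6}, any version received from a correct node is valid with respect to $b_p^{r-(f+2)}$, so it passes $p$'s validation check and is added to $\mathit{versions}_r$. Thus $|\mathit{versions}_r|$ eventually reaches $n-f$ and the loop in lines~\ref{bft:l10}--\ref{bft:l11} terminates.

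For the second claim, I would use pigeonhole on the senders. The $n-f$ valid versions that $p$ accepts come from $n-f$ distinct senders, so at most $f$ of the $n$ nodes are unrepresented in $\mathit{versions}_r$. By Definition~\ref{def:def2} together with Invariant~\ref{inv:continue}, \emph{front} contains at least $f+1$ correct nodes. Since at most $f$ nodes in total are missing from $\mathit{versions}_r$, at least one correct node in \emph{front} must be represented; by Lemma~\ref{proof:lemma6}, the version this node contributed is valid and therefore lies in $\mathit{versions}_r$.

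The main technical subtlety I expect is making the counting step unambiguous: I must be sure the $n-f$ valid entries in $\mathit{versions}_r$ correspond to $n-f$ distinct senders (rather than, say, multiple submissions from a single Byzantine node). This follows from the convention that each invocation of \REC by a node performs a single \ABB of its own version, so $\mathit{versions}_r$ indexes senders injectively; once this is spelled out, the pigeonhole argument immediately yields the desired front node, and there is no need to reason about the order in which atomic deliveries occur.
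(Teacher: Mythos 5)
Your proof is correct and follows essentially the same route as the paper: \textbf{RB-Termination} forces every correct node into the recovery procedure, Lemma~\ref{proof:lemma6} guarantees that the versions they atomically broadcast pass validation so the wait for $n-f$ valid versions terminates, and the front-node claim is exactly the counting consequence of Invariant~\ref{inv:continue} (via the $f+1$ correct nodes in \emph{front} from Definition~\ref{def:def2}) that the paper invokes directly, with you merely spelling out the pigeonhole. One caveat: the ``convention'' you use to get injectivity of senders binds only correct nodes---a Byzantine node could atomically broadcast several valid-looking versions---so strictly one should count at most one version per sender in $\mathit{versions}_r$; the paper's own proof glosses over this same point, so it is not a gap relative to the paper.
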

	\begin{proof}
		By \textbf{RB-Termination} if a correct node $p$ detects an invalid block and invokes the recovery procedure, eventually every correct node will receive $p$'s proof and will invoke the recovery procedure.
		Following the system model, the ratio between $n$ and $f$ and Lemma~\ref{proof:lemma6}, a correct node does not get blocked while waiting for $n-f$ valid versions (part of whom may be empty). By Invariant~\ref{inv:continue}, $p$ receives at least one version from a correct node in $\mathit{front}$.
	\end{proof}
	
	\begin{lemma}[\ACRDEF{$f+1$}-Termination]
		\label{proof:lemma8}
		Every round eventually terminates.
	\end{lemma}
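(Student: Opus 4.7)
The plan is to show that once the system enters a period in which $\bDiamond\mathit{Synch}$ holds, every iteration of the outer while loop in Algorithm~\ref{cft-bft} completes in bounded time and the round counter $r_i$ at every correct node strictly advances, so that any fixed round $r$ is eventually tentatively decided.

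First I would argue that the call to \emph{WRB-deliver} on line~\ref{cft:l5} always returns. By \textbf{WRB-Termination}, once any correct node returns from \emph{WRB-deliver}, every other correct node performing the same delivery also returns. Combined with \textbf{WRB-Non-Triviality}, the $\bDiamond\mathit{Synch}$ assumption, and the adaptive timer in Algorithm~\ref{alg:wrb}, the timers at all correct nodes eventually exceed the unknown message-transfer bound $\delta$; from that point on, a correct proposer that repeatedly invokes \emph{WRB-broadcast} has its block delivered to every correct node, and \emph{WRB-deliver} cannot hang.

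Next I would case-split on the returned value. If $\mathit{nil}$ is returned, the node sets $\mathit{full\_mode}\leftarrow\mathit{true}$, increments the proposer, and re-enters the loop. The rotation rule in lines~b\ref{bft:l1}--b\ref{bft:l2} guarantees that among every $f+1$ proposer positions visited, at least one proposer is correct and not already represented in the last $f$ tentative decisions. After stabilization, such a correct proposer re-enters the $\mathit{full\_mode}$ branch (lines~\ref{cft:l1}--\ref{cft:l2}), re-broadcasts its block, and by \textbf{WRB-Non-Triviality} all correct nodes deliver a non-$\mathit{nil}$ block. If that block passes the chain-consistency check it is appended (line~\ref{cft:l6}) and $r_i$ is incremented, so round $r$ terminates with a tentative decision. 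Otherwise the invalidity must stem from an earlier Byzantine proposer, and the node invokes \textsc{RECOVERY}; by \textbf{RB-Termination} every correct node joins, by Lemma~\ref{proof:lemma7} no correct invoker blocks while collecting $n-f$ valid versions, and by Lemma~\ref{proof:lemma3} all correct nodes adopt the same version, after which $r_i$ is advanced accordingly.

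The main obstacle will be the careful interaction between WRB's adaptive timer, the $\bDiamond\mathit{Synch}$ stabilization, and the proposer rotation: one must show that after $\tau$ no attempt by a correct proposer can fail to deliver, so that the rotation genuinely bounds the number of $\mathit{nil}$-returning iterations. Together with the fact that every non-$\mathit{nil}$ iteration either advances $r_i$ directly or triggers a terminating recovery that also updates $r_i$, this yields that every round eventually terminates.
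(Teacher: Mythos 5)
Your plan is essentially the paper's own proof: WRB-Termination and WRB-Non-Triviality together with the $\mathit{full\_mode}$ re-proposal and the proposer rotation give an eventual non-$\mathit{nil}$ delivery, the valid case terminates the round directly, and the invalid case terminates via Lemma~\ref{proof:lemma7} and Lemma~\ref{proof:lemma3}; your unfolding of the $\bDiamond\mathit{Synch}$/timer reasoning is just what the paper keeps encapsulated inside WRB-Non-Triviality. The one point to state explicitly (as the paper does) is that a recovery by itself need not advance round $r$ --- you need that at most $f$ consecutive recoveries can occur before a correct proposer proposes $b^r$ --- but your rotation observation already supplies exactly this bound.
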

	\begin{proof}
		Let $r$ be a round of the algorithm.
		As long WRB-deliver has returned $nil$, a correct proposer will propose its next block.
		Thus, by \emph{WRB-Termination(1)}, \emph{WRB-Termination(2)} and the algorithm's iterative nature every correct node eventually succeeds to \emph{WRB-deliver} $b^r \neq nil$.
		
		If $b^r$ is valid and the recovery procedure has not been invoked, then naturally nothing prevents the round from terminating.
		If the recovery procedure has been invoked, then following Lemma~\ref{proof:lemma7} every correct node eventually receives $n-f$ valid versions and finishes the recovery procedure; in the worst case, such a node stays in round $r-1$.
		The system model assumes at most $f$ faulty nodes, which means that the above can repeat itself at most $f$ times in a row.
		Hence, after at most $f$ consecutive invocations of the recovery procedure, a correct proposer proposes $b^r$, ensuring that $r$ terminates in all correct nodes.
	\end{proof}
	\begin{definition}
		\label{proof:def3}
		Let $r$ be a round of the algorithm, we define by $\textit{front}_r=\{p|r_p > r + f + 1\}$ the group of nodes whose current round is greater by at least $f+1$ \nottoggle{VLDB}{rounds}{} than $r$.
	\end{definition}
	\begin{lemma}[\ACRDEF{$f+1$}-Finality]
		\label{proof:lemma9}
		In every round $r' > r + f + 1$, $v_p^r$ is definite.
	\end{lemma}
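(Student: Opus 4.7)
The plan is to show that once a block sits deeper than $f+1$ in the local chain, no future action of the protocol can rewrite it, and then combine this with the already-established $\ACRDEF{f+1}$-Agreement (Lemma~\ref{proof:lemma5}) to conclude that the block is not only stable but agreed upon by all correct nodes, and hence definite.

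First I would pin down exactly which blocks a recovery invocation can alter. Inspecting Algorithm~\ref{recovery}, the version that a node proposes during \textsc{RECOVERY} for round $r$ starts at $b^{r - (f+1)}$ and is then followed by all newer blocks the node is aware of. Hence the deterministic selection rule at the end of the procedure can only modify blocks at rounds $\geq r - (f+1)$. Equivalently, if $p$ is currently in some round $r_p$, any recovery triggered at rounds $\leq r_p$ can only touch blocks whose depth in $p$'s chain is at most $f+1$. This is precisely the reason for the $b^{r_i - (f+2)}$ choice in line~b\ref{bft:l12} of Algorithm~\ref{cft-bft}.

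Next I would argue the lemma directly. Fix a round $r$ and consider any correct node $p$ at round $r' > r + f + 1$, so $d(v_p^r) = r' - r \geq f + 2$. I want to show $v_p^r$ will never be overwritten in $p$'s local chain. By the observation above, any future recovery invocation performed by $p$, whether triggered locally (lines~b\ref{bft:l3}--b\ref{bft:l4}) or upon RB-delivery of a proof (lines~b\ref{bft:l5}--b\ref{bft:l6}), rewrites only the tail of depth at most $f+1$; since $v_p^r$ has depth strictly greater than $f+1$, it is untouched. Normal progression of the main loop only appends new blocks and does not rewrite already-appended ones. Therefore, $v_p^r$ is stable in $p$'s chain forever.

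Finally, by Lemma~\ref{proof:lemma5} ($\ACRDEF{f+1}$-Agreement), all correct nodes that have a decided block at round $r$ with depth greater than $f+1$ agree on the same $v^r$. Combining stability at every correct node with cross-node agreement, $v_p^r$ satisfies the definition of a definite decision, which proves $\ACRDEF{f+1}$-Finality. The main subtlety is the first step — justifying that \emph{no} future recovery, even one initiated by a Byzantine-injected proof referring to a later round, can reach back past depth $f+1$; this follows because the version starting index $r - (f+1)$ in Algorithm~\ref{recovery} is computed from the round $r$ carried by the proof, and proofs are only accepted for rounds $\geq r_i$ on the detecting side, so depths greater than $f+1$ from the current front of the chain remain outside the recovery window.
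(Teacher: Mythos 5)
There is a genuine gap in the step your whole argument rests on: the claim that any recovery a node $p$ can ever execute rewrites only blocks at depth at most $f+1$ below $p$'s current round. That claim holds when \textsc{RECOVERY} is invoked with $r$ close to $p$'s current round $r_i$, but Algorithm~\ref{cft-bft} (lines~b\ref{bft:l5}--b\ref{bft:l6}) invokes \textsc{RECOVERY}$(r,proof)$ for \emph{any} RB-delivered valid proof, with no comparison of $r$ against $r_i$; your patch that ``proofs are only accepted for rounds $\geq r_i$ on the detecting side'' does not appear anywhere in the code. A Byzantine proposer of some old round $r$ can sign a conflicting block at that time, withhold it, and release the proof much later; correct nodes then run \textsc{RECOVERY}$(r,proof)$ with $r < r_i$, and the versions exchanged in Algorithm~\ref{recovery} start at $b^{r-(f+1)}$, i.e., at depth $r_i - r + f + 1 > f+1$ below $p$'s current round. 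So ``the recovery window never reaches past depth $f+1$'' is exactly what fails in the adversarial case, and stability of $v_p^r$ cannot be derived from the window alone.

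The paper closes this differently: it does not bound which blocks the recovery may touch, but argues about the \emph{content} of the adopted version. By Lemma~\ref{proof:lemma7} the version $p$ adopts was suggested by some node in $\mathit{front} \subseteq \mathit{front}_r$, by Lemma~\ref{proof:lemma5} all correct nodes in $\mathit{front}_r$ already agree on $b^r$, and the version validation together with $n>3f$ prevents a Byzantine node from getting a version adopted that omits or alters the agreed $b^r$ (it would need signatures of correct proposers on alternative blocks). Hence even when a recovery nominally spans rounds below depth $f+1$, the adopted version coincides with $p$'s chain at round $r$, so $v_p^r$ never changes. To repair your proof you would need to replace the window argument by this adopted-version argument (or add and prove an explicit round check to the protocol, which the paper does not have).
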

	\begin{proof}
		Following Algorithm~\ref{cft-bft}'s code, after a valid block is WRB-delivered, the only way it can be replaced is by an invocation of the recovery procedure.
		By Algorithm~\ref{cft-bft}'s code and Lemma~\ref{proof:lemma7}, if the recovery procedure has been invoked regarding round $r$, then $p$ adopts a prefix version that was suggested by $q \in \textit{front}$.
		Obviously, $\textit{front} \subseteq \textit{front}_r$. 
		By Lemma~\ref{proof:lemma5}, $\forall q,p\in \textit{front}_r, b^r_q = b^r_p$.
		In other words, $b^r_p$ is definite.
		
		Note that the ratio between $n$ and $f$ as well as the version's validation test ensure that no Byzantine node is able to propose a version that does not include $b^r_q$. 
	\end{proof}
\nottoggle{VLDB}
{
	Notice that due to asynchrony, it is possible that some node $q \in \mathit{front}$ advances to round $r+1$ while another node $p \in \mathit{front}$ invokes the recovery procedure for round $r+1$.
	In such a case, $q$ may adopt a new blockchain prefix which was proposed by $p$ and thus replace $b_q^{r - (f+1)}$.
	This edge case, illustrated in Figure~\ref{fig:edgecase}, is the reason why \PROTNAME{} implements \ACRDEF{$f+1$} rather than \ACRDEF{$f$}.
	\begin{figure}[t]
		\centering
		\iftoggle{SINGLE}
		{\includegraphics[width=0.7\linewidth]{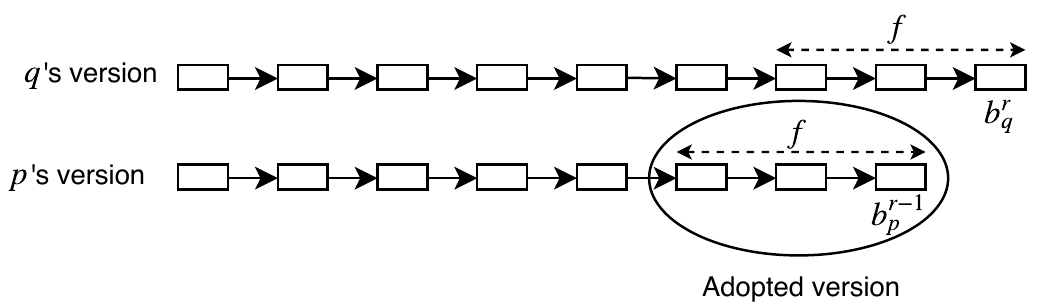}}
		{\includegraphics[width=1\linewidth]{figures/edge.pdf}}
		\caption{If \PROTNAME{} would implement \ACRDEF{$f$} instead of \ACRDEF{$f+1$}, a node $q$ may replace a definite block.
			While recovering, due to contention on $b_q^r$, the nodes may adopt a prefix that was suggested by $p \in \mathit{front}$ while $p$ is still in round $r$.}
		\label{fig:edgecase}
	\end{figure}
}{}
	
\nottoggle{VLDB}
{
	The next theorem follows from Lemmas~\ref{proof:lemma1}, \ref{proof:lemma5},~\ref{proof:lemma8} and~\ref{proof:lemma9}:
}
{
		From Lemmas~\ref{proof:lemma1}, \ref{proof:lemma5},~\ref{proof:lemma8} and~\ref{proof:lemma9} we have:
}
	\begin{theorem}
		\label{proof:theorem1}
		The protocol listed in Algorithm~\ref{cft-bft} solves \ACRDEF{$f+1$}.
	\end{theorem}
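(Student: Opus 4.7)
The plan is to observe that Theorem~\ref{proof:theorem1} is essentially a bookkeeping assembly: the BBFC($f+1$) abstraction imposes four properties (Validity, Agreement, Termination, Finality), and each has already been established by a dedicated lemma in this subsection for the exact parameter $\rho=f+1$. So the proof will simply point at these lemmas in order and verify that together they cover all clauses of Definition (BBFC).

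Concretely, I would first invoke Lemma~\ref{proof:lemma1} for BBFC-Validity, noting that Algorithm~\ref{cft-bft} only appends a block to the chain after it passes the \textsc{valid} check (and the recovery procedure only adopts versions that pass the same check, cf.\ Lemma~\ref{proof:lemma6}). Then I would cite Lemma~\ref{proof:lemma5} for BBFC-Agreement at depth exceeding $\rho=f+1$, Lemma~\ref{proof:lemma8} for BBFC-Termination, and Lemma~\ref{proof:lemma9} for BBFC-Finality at depth exceeding $f+1$. Since each of these matches the corresponding clause verbatim with $\rho=f+1$ substituted, the conjunction yields BBFC($f+1$).

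The only subtle point worth flagging explicitly is the matching of the parameter $\rho$: the Agreement lemma is stated for $d > f+1$ and the Finality lemma for $r' > r + f + 1$, which coincide exactly with the specification of BBFC($f+1$). It is important to keep this in mind because the protocol decides blocks at depth $f+2$ (line b\ref{bft:l12}) and the recovery procedure can in principle modify the last $f+1$ blocks, so a weaker implementation such as BBFC($f$) would not hold (this is precisely the edge case depicted in Figure~\ref{fig:edgecase}).

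There is no real obstacle here; the hard work is absorbed by the supporting lemmas, in particular Lemma~\ref{proof:lemma4} (the case analysis on whether the proposer of round $r$ is correct or Byzantine, relying on Lemma~\ref{proof:lemma2} to ensure a correct proposer within any $f+1$ consecutive rounds) and Lemma~\ref{proof:lemma7} (guaranteeing that recovery terminates with at least one version coming from \emph{front}, thanks to Invariant~\ref{inv:continue}). The theorem itself is a two- or three-sentence wrap-up that simply glues the four lemmas to the four BBFC clauses.
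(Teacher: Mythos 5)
Your proposal is correct and matches the paper's own argument: the theorem is stated as an immediate consequence of Lemmas~\ref{proof:lemma1}, \ref{proof:lemma5}, \ref{proof:lemma8} and~\ref{proof:lemma9}, which cover Validity, Agreement, Termination and Finality for $\rho=f+1$ exactly as you describe. Your extra remark on the parameter matching and the Figure~\ref{fig:edgecase} edge case is consistent with the paper's discussion and does not change the argument.
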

	
	\begin{lemma}[Atomic Order]
		\label{proof:lemma10}
		All blocks delivered by correct nodes using Algorithm~\ref{cft-bft} are delivered in the same order.
	\end{lemma}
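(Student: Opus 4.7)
The plan is to reduce Atomic Order to the per-round agreement already established by Lemma~\ref{proof:lemma5} together with the finality guarantee of Lemma~\ref{proof:lemma9}, exploiting the fact that Algorithm~\ref{cft-bft} is strictly round-indexed: a correct node only appends (and eventually decides) a block that is tagged with round number $r$, and it does so in increasing order of $r$.

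First I would observe that for each correct node $p$, the sequence of definitively decided blocks, taken in the order in which $p$ decides them, is exactly $b_p^0, b_p^1, b_p^2, \ldots$, because line b\ref{bft:l12} decides $b_p^{r_i-(f+2)}$ precisely when $p$ advances $r_i$ by one, and the recovery procedure only rewrites the tentative suffix of the chain (the last $f+1$ blocks) without touching already-decided blocks. Hence the delivery order at any correct node is the round-index order.

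Next, I would invoke Lemma~\ref{proof:lemma9} to argue that once $p$ decides $b_p^r$ it is at depth greater than $f+1$ and is therefore definite, and similarly for any other correct node $q$ that decides $b_q^r$. Then Lemma~\ref{proof:lemma5} (\ACRDEF{$f+1$}-Agreement) yields $b_p^r = b_q^r$. Combined with the previous paragraph, this means both $p$ and $q$ deliver the same block at position $r$ of their respective delivery sequences, so the two sequences agree pointwise on every round index they share.

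The only delicate point is ruling out that $p$ and $q$ could ``fill'' the same position $r$ with different blocks because of how the recovery procedure re-indexes rounds (Algorithm~\ref{recovery} updates $r_i$ and $\mathit{proposer}_{r_i}$ after adopting a new version). I would address this by noting that the version $v'$ adopted in recovery is itself a valid sub-chain whose block at round $r$ carries the round number $r$ as attested by $sig_{\mathit{proposer}_r}$, and by Lemma~\ref{proof:lemma3} all correct nodes adopt the same $v'$; so the mapping ``round index $\mapsto$ decided block'' is globally well-defined. I expect the main obstacle to be stating precisely why this mapping is preserved across recoveries — in particular, why a later recovery cannot overwrite an already-definite block — which follows from the fact that recovery only affects rounds in the last $f+1$ slot while the decision rule on line b\ref{bft:l12} reaches back an extra round, together with Lemma~\ref{proof:lemma9}.
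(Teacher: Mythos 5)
Your proposal is correct and follows essentially the same route as the paper, which simply states that the lemma follows from Lemma~\ref{proof:lemma5} together with the iterative (round-indexed) nature of Algorithm~\ref{cft-bft}; your write-up just spells out the details (delivery order equals round-index order, per-round agreement at depth greater than $f+1$, and finality via Lemma~\ref{proof:lemma9}) that the paper leaves implicit. No gap to report.
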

	\begin{proof}
		Follows directly from Lemma~\ref{proof:lemma5} and the iterative nature of the algorithm.
	\end{proof}
	
	By Lemma~\ref{proof:lemma10} and Theorem~\ref{proof:theorem1}, we have:
	
	\begin{theorem}
		The protocol listed in Algorithm~\ref{cft-bft} impose a total order of all blocks.
	\end{theorem}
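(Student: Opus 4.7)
The plan is to obtain the theorem as an essentially immediate corollary of Theorem~\ref{proof:theorem1} together with Lemma~\ref{proof:lemma10}. Total order of all blocks means that any two correct nodes eventually deliver the same sequence of blocks; since the algorithm is round-indexed, it suffices to show that (i) every correct node eventually produces exactly one definite decision per round $r$, and (ii) those definite decisions agree across correct nodes and are emitted in order of increasing $r$.

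First I would invoke \ACRDEF{$f+1$}-Termination from Theorem~\ref{proof:theorem1} to argue that each correct node eventually enters and leaves every round $r$, so that the sequence of locally decided blocks $b_p^0, b_p^1, b_p^2,\ldots$ is well-defined and unbounded at every correct $p$. Next I would apply \ACRDEF{$f+1$}-Finality to conclude that once a node has advanced to round $r + f + 2$, its block at index $r$ is definite and will never be replaced by the recovery procedure; hence the sequence above stabilizes prefix-by-prefix.

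Then I would use \ACRDEF{$f+1$}-Agreement to pair up definite decisions across correct nodes: for any two correct nodes $p,q$ and any round $r$, once both have advanced past depth $f+1$ beyond $r$, we have $b_p^r = b_q^r$. Combining this with the iterative structure of Algorithm~\ref{cft-bft}, in which a correct node only proceeds from round $r$ to round $r+1$ after appending its decision for round $r$, yields Lemma~\ref{proof:lemma10}'s per-node in-order delivery and its matching across nodes. Together these give a single globally agreed sequence, which is exactly total order.

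The main subtlety, rather than an obstacle, is to be explicit that indexing by round number is what converts the per-round agreement of \ACRDEF{$f+1$} into a total order: \ACRDEF{$f+1$}-Agreement only guarantees pairwise agreement on $b^r$ for sufficient depth, so I would need to remark that the composition of pairwise equality over all correct pairs and over all rounds, combined with the fact that each correct node delivers blocks strictly in order of $r$, is equivalent to the existence of a single global sequence that every correct node is a prefix of at all times and converges to in the limit. This collapses Theorem~\ref{proof:theorem1} and Lemma~\ref{proof:lemma10} into the statement of total order with no additional machinery.
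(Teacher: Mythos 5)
Your proposal is correct and follows essentially the same route as the paper: the paper also obtains the theorem directly from Theorem~\ref{proof:theorem1} together with Lemma~\ref{proof:lemma10}, whose own proof rests on \ACRDEF{$f+1$}-Agreement (Lemma~\ref{proof:lemma5}) and the iterative, round-indexed structure of the algorithm. Your write-up merely spells out the per-round termination, finality, and cross-node agreement steps that the paper leaves implicit.
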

	
	\begin{theorem}
		\label{thm:non-triv}
		The protocol listed in Algorithm~\ref{cft-bft} satisfies Non-Triviality.
	\end{theorem}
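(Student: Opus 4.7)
The plan is to combine the rotating-proposer discipline (Lemma~\ref{proof:lemma2}) with \ACRDEF{$f+1$}-Finality (Lemma~\ref{proof:lemma9}), together with the observation that a correct proposer naturally includes any pending client transactions in the block it prepares.

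First, I would argue that under the hypothesis of repeated client submissions, the pending-transaction pool at each correct node is non-empty on an ongoing basis, since client messages eventually reach every correct node over the reliable channels of the system model. Hence, whenever a correct node prepares a block in Algorithm~\ref{cft-bft} (lines~\ref{cft:l1}--\ref{cft:l4}), that block contains at least one transaction and is therefore non-empty.

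Second, by Lemma~\ref{proof:lemma2}, every $f+1$ consecutively decided blocks are proposed by $f+1$ distinct nodes, so at least one proposer $p_c$ in every such window is correct (since at most $f$ are Byzantine). Combining \ACRDEF{$f+1$}-Termination (Lemma~\ref{proof:lemma8}) with \emph{WRB-Non-Triviality} and the $\bDiamond \mathit{Synch}$ assumption, $p_c$'s block $b^{r_c}$ is eventually WRB-delivered at all correct nodes and appended to their chains; by the previous paragraph this block is non-empty. By \ACRDEF{$f+1$}-Finality (Lemma~\ref{proof:lemma9}), once $b^{r_c}$ has depth greater than $f+1$ in any correct chain, it becomes definitive and is identical across all correct nodes. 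Applying this argument to every disjoint window of $f+1$ consecutive rounds shows that non-empty blocks are definitively decided infinitely often, which is exactly Non-Triviality.

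The main obstacle will be ruling out that a recovery invocation triggered in some round after $r_c$ could overwrite $b^{r_c}$ with a different, possibly empty, block before it reaches depth $f+2$. This is already handled by Lemmas~\ref{proof:lemma4} and~\ref{proof:lemma9}: as soon as $b^{r_c}$ reaches depth exceeding $f$, all correct nodes already agree on it, so any valid version proposed during a later recovery must extend a chain containing $b^{r_c}$. A secondary subtlety is confirming that the proposer-skip rule in lines~b\ref{bft:l1}--b\ref{bft:l2} does not permanently prevent a correct proposer from becoming eligible; because $n>3f$, the set of non-skipped proposers always contains at least $f+1$ correct nodes, so the counting argument in Lemma~\ref{proof:lemma2} continues to yield a correct proposer in every window.
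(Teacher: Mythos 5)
Your proposal is correct and follows essentially the same route as the paper, which disposes of Non-Triviality in one sentence by combining \ACRDEF{$f+1$}-Termination (Lemma~\ref{proof:lemma8}), \ACRDEF{$f+1$}-Finality (Lemma~\ref{proof:lemma9}), and the observation that correct nodes propose non-empty blocks whenever they hold pending client transactions. Your additional appeals to Lemma~\ref{proof:lemma2} (a correct proposer in every window of $f+1$ decided blocks) and to the recovery-overwrite subtlety merely make explicit what the paper leaves implicit in that observation.
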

	
	The proof of Theorem~\ref{thm:non-triv} follows directly from Lemmas~\ref{proof:lemma8} and~\ref{proof:lemma9}, as well as the observation that correct nodes propose non-empty blocks whenever they hold clients' transactions that have not been included in any previously decided block.
	
}{
	The correctness proof for Algorithm~\ref{cft-bft} is fairly technical. 
	Due to space limitations, we defer it to the full version of this paper~\cite{BF2019full}.
}

\subsection{Theoretical Bounds and Performance}
\label{performance}
\begin{table}[t]
\iftoggle{VLDB}{\footnotesize}{}
	\centering
	\begin{tabu} to 1\linewidth {||c|c|c|c||}
		\hline
		& \makecell{fault-\\free} & \makecell{Timing,\\and omission\\failures} &
		\makecell{Byzantine\\failures}\\
		\hline
		\hline
		\makecell{Communi-\\cation\\steps} & $1$ & $2+OBBC$ & \makecell{RB\\$+n$ parallel\\AB}\\
		\hline
		\makecell{Digital\\signature\\operations} & $1$ & $OBBC$ & \makecell{RB$+$AB\\$+(n -f)$\\$\cdot\text{(chain size)}$}\\
		\hline
		\makecell{Latency\\(in rounds)} & $f+1$ & \makecell{no additional\\overhead} & \makecell{no additional\\overhead}\\
		\hline
	\end{tabu}
\iftoggle{VLDB}{\normalsize}{}
	\caption{Performances characteristics of \PROTNAME. 
		RB and AB stand for Reliable and Atomic Broadcast respectively. 
		The first column shows the performance in the fault-free synchronized case. 
		The second column depicts the additional costs in an unsynchronized period and omission failures. 
		The third column depicts the additional expenses in the presence of a non-benign fault.}
	\label{performance:table1}
\end{table} 
Table~\ref{performance:table1} summarizes the performance of \PROTNAME{} in each of its three modes.
In case of no failures and synchronized network, \PROTNAME{} performs in the OBBC of WRB-deliver a single all-to-all communication step as well as one digital signature operation.
Further, only one node broadcasts more than one bit.
This can be obtained since the latency for a definitive decision can be up to $f+1$ rounds.
%\iftoggle{FV}{We show in Section~\ref{sec:impl} that this tradeoff is quite reasonable.}{}
In case a message is not received by WRB-deliver due to timing, omission or benign failures, the algorithm runs the non optimistic phase of OBBC as well as two more communication steps (one-to-all and one-to-one) in which a node asks for the missed message from the nodes who did receive it.
Also, the amount of digital signature operations depends on the specific OBBC implementation that is used by WRB-deliver.
Finally, in case of Byzantine failures, \PROTNAME{} runs the recovery procedure which depends on the Reliable Broadcast and the Atomic Broadcast implementations.
\nottoggle{VLDB}{
	
	Note that one can trade additional communication steps for fewer digital signature operations by using signature-free implementation of \PROTNAME's base protocols, e.g,~\cite{Mostfaoui:2015, Friedman:2005, Correia:2006}.
	As \PROTNAME{} is inherently not signature-free, a node may authenticate up to $n-f$ received prefix versions.
}{}
Yet, when Byzantine failures manifest, a node does not lose blocks so the latency in terms of rounds remains the same.

	\section{Implementation}
\label{sec:impl}
\nottoggle{VLDB}{
\subsection{\PROTNAME's Implementation}
\label{sec:toyimpl}
We now describe \PROTNAME's implementation components as well as a few necessary performance optimizations.
}{}
\nottoggle{VLDB}{
\subsubsection{Optimizations}
\label{impl:opt}
}{
\subsection{Optimizations}
\label{impl:opt}
}
\iftoggle{VLDB}
{
\paragraph*{Separating Headers and Blocks}
}
{
\paragraph{Separating Headers and Blocks}
}
\PROTNAME's protocol enables to easily separate the data path from the consensus path, such that only block headers need to pass through the consensus layer while the block itself is being sent asynchronously in the background.

In practice, a node $p$ broadcasts a block as soon as the block is ready.
On $p$'s next proposing round, $p$ WRB-broadcasts a header of a previously sent block.
Respectively, upon WRB-delivering a header, if $p$ did not receive the block, it votes against delivering it (See Algorithm~\ref{alg:wrb}, lines~\ref{wrb:l4}--\ref{wrb:l5}).
In addition, if a decision was made to deliver a header, but the block itself has not been received by $p$, then $p$ has to retrieve the block from a correct node $q$ that has it.
Such a $q$ exists because the decision to deliver is done only if at least one correct node has voted in favor of delivering, which means that $q$ has the block.

%\iftoggle{FV}{
\iftoggle{VLDB}
{
\paragraph*{Dynamically Tuning the Timeout}
}
{
\paragraph{Dynamically Tuning the Timeout}
}
%WRB-deliver utilizes a \emph{timer} variable which implements the core mechanism of \PROTNAME's liveness.
%Recall that a timer expiration while node $p$ did not receive a message, may cause $p$ to run a recovery phase which is significantly longer than \PROTNAME's optimistic phase.

%Thus, tuning the timer introduces a simple tradeoff.
%Denote by $\tau$ the current timer value and by $R$ the amount of time that the recovery phase may last.
%Hence, in the worst case, WRB-deliver may last at least $d=\tau + R$ time.  
%In other words, tuning the timer to a strictly high value will decrease the probability of invoking the recovery phase, but also, when such an event will occur, the maximum delay $d$ may affect \PROTNAME's performance. 
%On the other hand, tuning the timer to a strictly small value will increase the probability to face the recovery phase, but respectively, on every occurrence of such event, \PROTNAME's will delay in about only $\tau + R \approx R$ time. 

To adjust the timer to the current network delays status, we dynamically adjust its value based on the exponential moving average (EMA) of the message delays over the last $N$ rounds.
Namely, denote by $d_k, timer_k$ the delay of a message and the timer of round $k$ respectively.
Then for every round $r$
\[timer_r= \frac{2}{N+1}\cdot \mathit{d_{r-1}} + timer_{r-2}\cdot(1-\frac{2}{N+1}).\]
A formal discussion of the above tuning model is out of the scope of this paper. 
%}

\iftoggle{VLDB}
{
\paragraph*{Benign FD}
}
{
\paragraph{Benign FD}  
}
\PROTNAME's algorithm enables implementing a simple benign failure detector (FD) such that a crashed node will not cause an unrestricted increase in the timer value.
Largely speaking, every node $p$ maintains a suspected list of the $f$ nodes to which $p$ has waited the most and above a predefined threshold. 
For every node $q$ in that list, on a WRB-deliver, $p$ does not wait for $q$'s message but rather immediately votes against delivering.
By the ratio between $n$ and $f$ there is at least one correct node $c$ that is not suspected by any correct node and thus the algorithm's liveness still holds.
Despite the above, if $c$ is one of the last $f$ proposers it would not be able to suggest a new block.
Hence, the suspected list is invalidate every time \PROTNAME{} is skipping a node that is in the last $f$ proposers (see Algorithm~\ref{cft-bft}, lines~\ref{bft:l1}--\ref{bft:l2}).
Also, if a Byzantine activity was detected, to avoid considering more than $f$ nodes as faulty, we invalidate the suspected list.

\nottoggle{VLDB}{
\paragraph{Consecutive Byzantine Proposers}
Recall that nodes are chosen by default to serve as the initial proposers of each block in a round-robin manner.
Hence, as presented so far, \PROTNAME's performance might suffer if multiple Byzantine nodes are placed in consecutive places in this round-robin order.
This is easily solvable by periodically changing the round-robin order to be a pseudo-randomly selected permutation of the set of nodes where the seed is unknon a-priory to the (Byzantine) nodes.
For example, it can be the result of a \emph{verifiable random function} (VRF)~\cite{Gilad:2017} whose seed is a given block's hash value.
In case the adversary is static, this can be done only once, or if the performance suddenly drops due to the activity of consecutive Byzantine nodes.
To overcome a dynamic adversary whose minimal transfer time is equal to the time required to agree on $k$ blocks, we can simply invoke the above every $k$ blocks.
}

\nottoggle{VLDB}{
\subsubsection{\PROTNAME's Instance Implementation}
}{
\subsection{\PROTNAME's Instance Implementation}
}
Figure~\ref{fig:toyarc} depicts the main components of a \PROTNAME's instance.
Using \PROTNAME's API one can feed the \emph{TX pool} with a new write request.
The \emph{main thread} creates a new block and \emph{WRB-broadcasts} it in its turn.
In addition, the \emph{main thread} tries to \emph{WRB-deliver} blocks relying on \emph{OBBC}.
If it succeeds, the block is added to the \emph{Blockchain}.
Meanwhile, the \emph{panic thread} waits for a panic message.
When such a message is \emph{Atomic-delivered}, the \emph{panic thread} interrupts the \emph{main thread} which as a result invokes the \emph{recovery} procedure.
\PROTNAME{} is implemented in Java and the communication infrastructure uses gRPC, excluding BFT-SMaRt which has its own communication infrastructure.
\emph{Atomic Broadcast} is natively implemented on top of \emph{BFT-SMaRt} whereas \emph{OBBC} uses \emph{BFT-SMaRt} only as the fallback mechanism when agreement cannot be reached through the optimistic fast path (which relies on gRPC).

\iftoggle{MYACM}{
	\begin{figure}[t]
		\centering
		\iftoggle{SINGLE}
		{\includegraphics[width=0.6\linewidth]{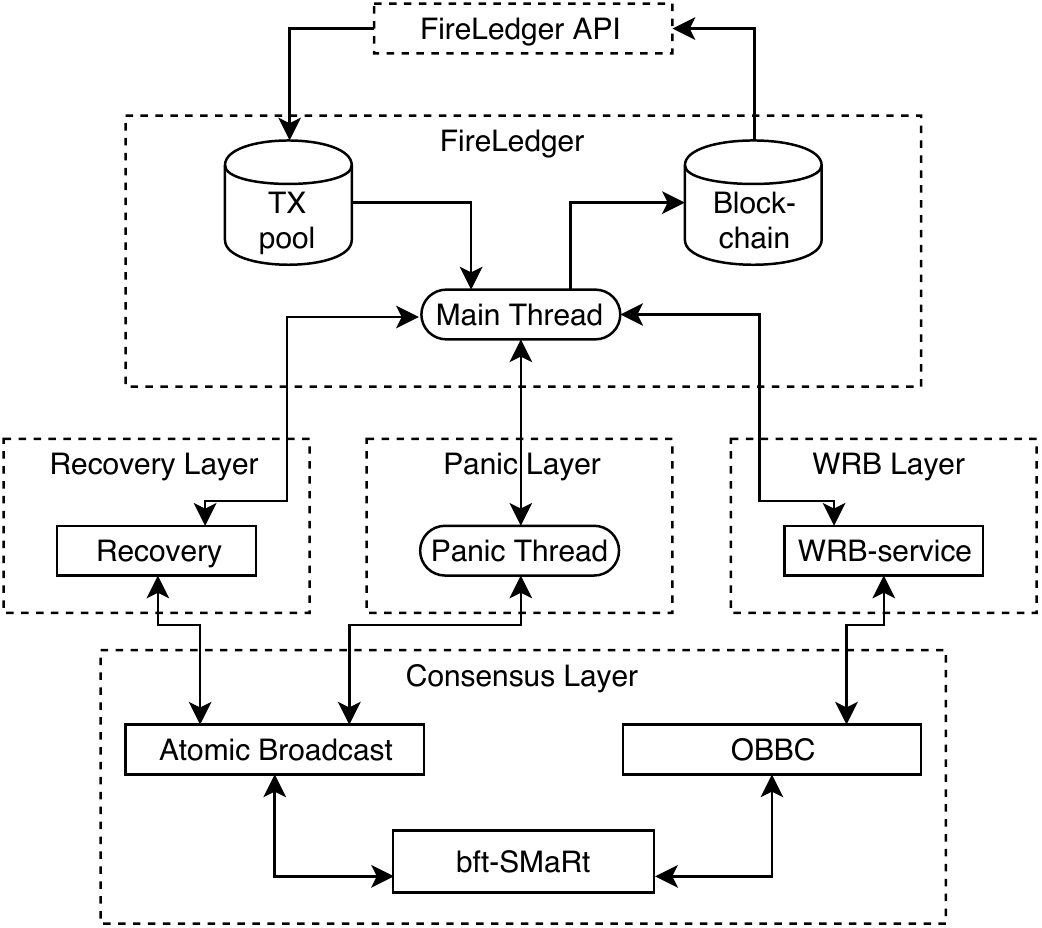}}
		{\includegraphics[width=0.9\linewidth]{figures/fl-arc-new.pdf}}
		%	\begin{fcap}
		\caption{An overview of \PROTNAME{} instance's main components}
		\label{fig:toyarc}
		%	\end{fcap}
	\end{figure}
}
{
		\begin{figure}[t]
		\centering
		\includegraphics[width=1\linewidth]{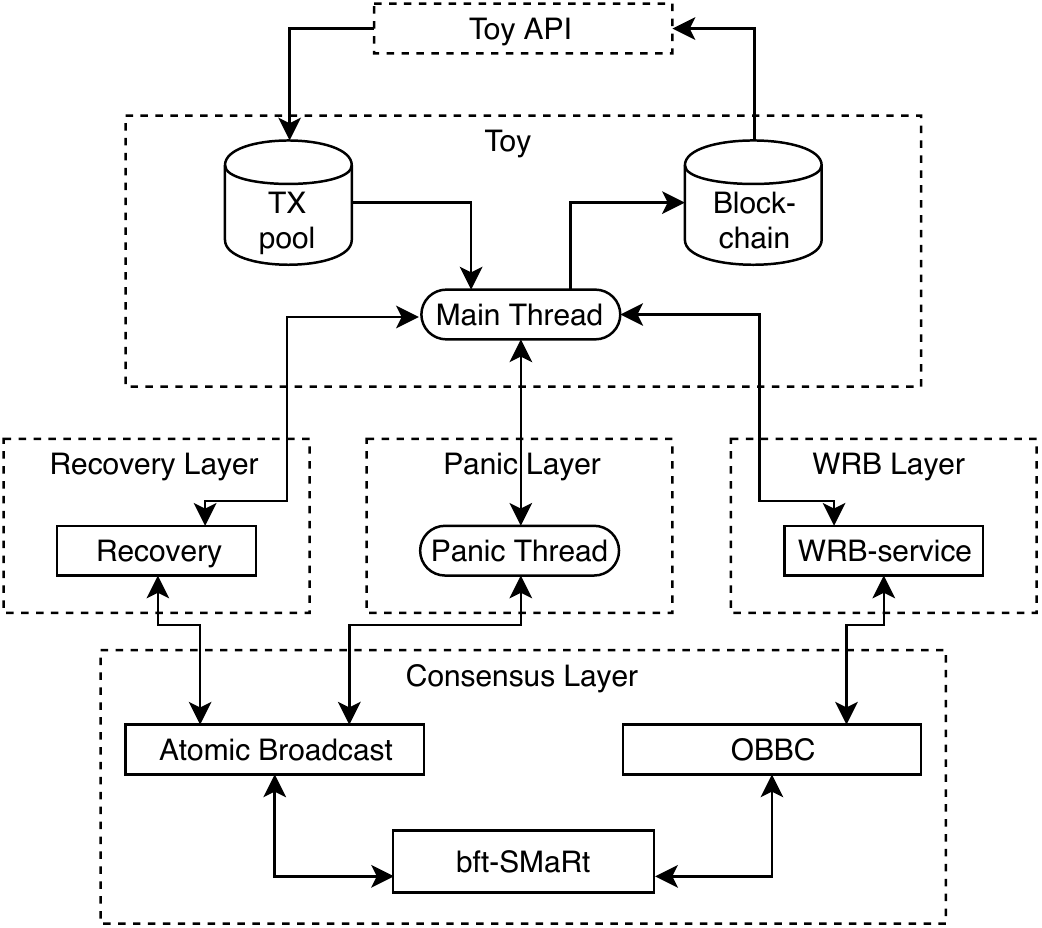}
		%	\begin{fcap}
		\caption{An overview of \PROTNAME{} instance's main components}
		\label{fig:toyarc}
		%	\end{fcap}
	\end{figure}
}

\iftoggle{MYACM}{
\subsection{\SYSNAME{} -- \PROTNAME{} Orchestrator}
\label{sec:topl}
}{
\subsection{\SYSNAME{} -- \PROTNAME{} Orchestrating Platform}
\label{sec:topl}
}

%We now present \SYSNAME, an orchestrator of \PROTNAME{s} instances.

%\subsection{Architecture}
%\label{sec:arch}
%A key issue when implementing a blockchain RSM system is how to maintain a high-throughput ordering service alongside with clients' low latency.
While \PROTNAME{} assumes partial synchrony, its rotating leader pattern imposes synchronization in the sense that a node may propose a value only on its turn, making \PROTNAME's throughput bounded by the actual network's latency.
To ameliorate this problem, we introduce another level of abstraction, named \emph{workers}, by which each node runs multiple instances of \PROTNAME{} and uses them as a blockchain based ordering service.
The use of workers brings two benefits:
($i$) workers behave asynchronously to each other which compensates for the above synchrony effect of \PROTNAME{} and ($ii$) while a worker waits for a message, other workers are able to run, resulting in better CPU utilization.
To preserve the overall total ordering property of \PROTNAME, a node must collect the results from its workers in a pre-defined order, e.g., round robin.
This requirement may imposes higher latency when the system is heavily loaded because even if a single worker faces the non-optimistic case, it delays all other workers from delivering their blocks to the node.

\iftoggle{MYACM}{
	\begin{figure}
		\centering
		\iftoggle{SINGLE}
		{\includegraphics[width=0.4\linewidth]{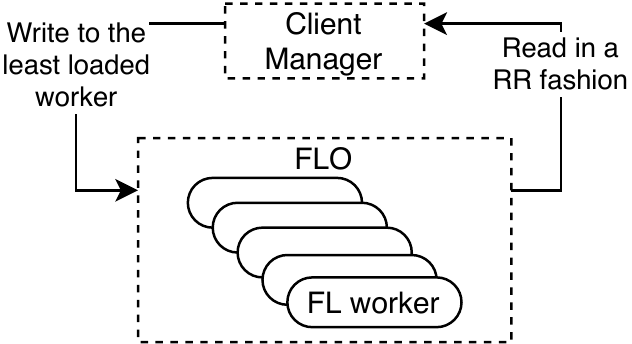}}
		{\includegraphics[width=0.7\linewidth]{figures/FLO-arc-new.pdf}}
		%	\begin{fcap}}
		\caption{An overview of a \SYSNAME{} node}
		\label{fig:arc}
		%	\end{fcap}
	\end{figure}
}
{
	\begin{figure}
		\centering
		\includegraphics[width=0.7\linewidth]{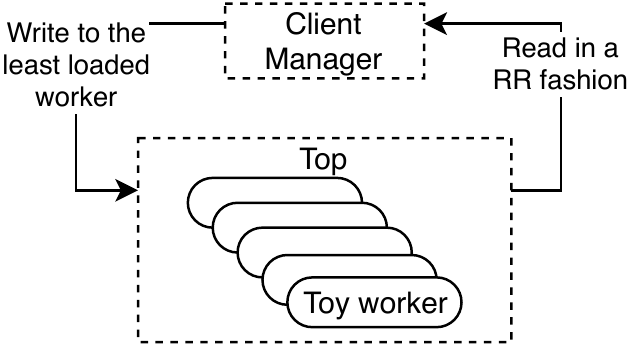}
		%	\begin{fcap}}
		\caption{An overview of a \SYSNAME{} node}
		\label{fig:arc}
		%	\end{fcap}
	\end{figure}
}

Figure~\ref{fig:arc} depicts an overview of a \SYSNAME{} node.
Upon receiving a write request, the \emph{client manager} directs the request to the least loaded worker.
When a read request is received, the \emph{client manager} tries to read the answer from the relevant worker. 
Only if the answer was already definitely decided and its block can be delivered in the pre-defined order, the node returns the answer\footnote{
This is similar to the \emph{deterministic merge} idea of~\cite{AS00}.}. 

\section{Performance Evaluation}
\label{sec:eval}
%We now describe the experiments conducted to evaluate \SYSNAME{} and discuss the results.
	\begin{table}[t]
		\iftoggle{VLDB}{\footnotesize}{}
		\centering
		\begin{tabu} to 1\linewidth {||c|c|c||}
			\hline
			\makecell{parameter} & \makecell{range} & \makecell{units} \\ 
			%& \makecell{motivation}
			\hline
			\hline
			cluster size & $n \in \{4, 7, 10\}$ & - \\
			%&\makecell{increasing the cluster size\\ affect the scalability}
			\hline
			\makecell{workers} & $1 \leq \omega \leq 10$ & - \\
			%&\makecell{multiple workers affect the\\throughput and latency} 
			\hline
			\makecell{transaction size} & \makecell{$\sigma \in \{512, 1K, 4K\}$} & Byte \\
			% & \makecell{affects the block size}
			\hline
			batch size & \makecell{$\beta \in \{10, 100, 1000\}$} & Transaction \\
			%& \makecell{affects the block size}
			\hline
		\end{tabu}
		\iftoggle{VLDB}{\normalsize}{}
		\caption{The default evaluation's parameters: The first row presents \SYSNAME's cluster size. The system model assumes $f < \frac{n}{3}$, hence, $n \in \{4, 7, 10\}$ imposes $f \in \{1, 2, 3\}$ respectively.}
		\label{impl:params}
	\end{table}

Our evaluation studied the following questions:
\iftoggle{VLDB}
{	($i$) Is \SYSNAME/\PROTNAME{} CPU bounded or network bounded?
	($ii$) How Table~\ref{impl:params}'s values affect \PROTNAME's performance?
	($iii$) How the node distribution method affects \PROTNAME's performance?
	($iv$) How does \PROTNAME{} handle failures?
}
{
\begin{itemize}
	\item Is \SYSNAME/\PROTNAME{} CPU bounded or network bounded?
	\item How Table~\ref{impl:params}'s values affect \SYSNAME/\PROTNAME{} performance?
	\item How the node distribution method affects \SYSNAME/\PROTNAME{} performance?
	\item How does \SYSNAME/\PROTNAME{} handle failures?
\end{itemize}
}
\iftoggle{VLDB}
{\paragraph*{Deployment Specification}}
{\paragraph{Deployment Specification}}
Our setup for most measurements includes $n$ nodes running on $n$ identical VMs with the following specification: 
m5.xlarge with $4$ vCPUs of Intel Xeon Platinum 8175 2.5 GHz processor, $16$ GiB memory and up to $10$ Gbps network links
(Section~\ref{sec:alternatives} uses a stronger configuration as detailed there).
In all measurements, transactions are randomly generated.

\subsection{Signature Generation}
In \SYSNAME/\PROTNAME{}, as long as the optimistic assumptions hold, a proposer signs its block only once and any other node is verifying the signature only once.
Hence, the maximal signature rate serves as an upper bound on the potential throughput of \PROTNAME.

To undrestand whether \SYSNAME/\PROTNAME{} is CPU bounded or I/O bounded we start by presenting an evaluation of the signatures generation rate (sps) which is typically the CPU most intensive task.
We use ECDSA signatures with the \emph{secp256k1} curve.
When signing a block, all the block's transactions are hashed and the result is signed alongside the block header.
We vary the $\omega, \beta$ and $\sigma$ values in the ranges described in Table~\ref{impl:params}.

\nottoggle{VLDB}{
Denote by $t_{hash}$ the time that takes to hash a single byte.
For a block consisting of $\beta$ transactions of $\sigma$ bytes each, the block's signing time, $t_{sign}$, is expected to be
\[
	t_{sign} = \beta \cdot \sigma \cdot t_{hash} + C
\]
where $C$ is a constant representing the time that takes to sign the fixed sized block header.
}{}

\begin{figure}[t]
	\centering
	\iftoggle{SINGLE}
	{\includegraphics[width=0.9\linewidth]{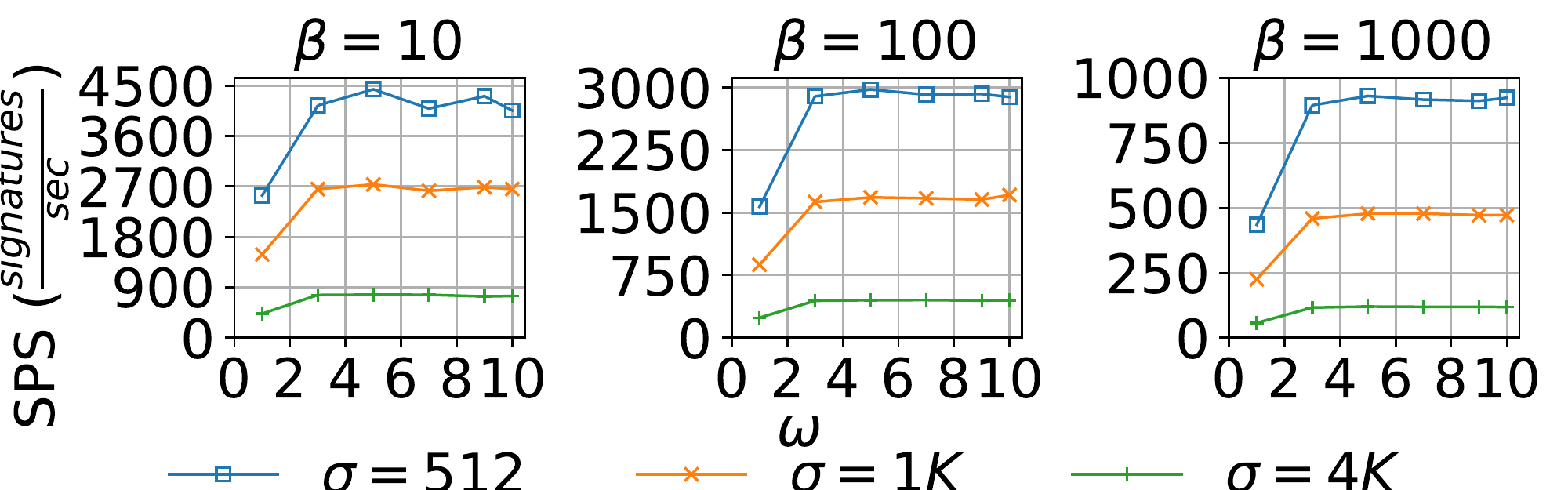}}
	{\includegraphics[width=1\linewidth]{figures/sigs.pdf}}
%	\begin{fcap}
	\caption{Signatures generation rate on a single VM}
	\label{fig:sigs}
%	\end{fcap}
\end{figure}
For each configuration we run the benchmark for 1 minute.
Figure~\ref{fig:sigs} depicts the benchmark results.
As expected, with small blocks the sps is higher than with larger ones.
Also, as our machines have 4 vCPUs, increasing $\omega$ beyond 4 has a minor effect if any.
\nottoggle{VLDB}{
	
Denote by $tps$ the transactions per second rate.
The following bound must hold:
\[ tps \leq sps \cdot \beta. \]}{}
As seen below, the performance of \SYSNAME{} is not limited by the $\mathit{sps}$ rate.

\subsection{\SYSNAME{} Cluster in a Single Data-Center}
We deployed a \SYSNAME{} cluster where all machines reside in the same data center.
To test the system in extreme conditions we simulate an intensive load by filling every block to its maximal size.
In practice, in every round, if a node does not have a full block to transmit, the node fills the block with random transactions, up to its maximal capacity, and then disseminates the full block.
 
In the current version, \PROTNAME{} uses a clique overlay for disseminating both blocks and headers.
To avoid clogging the network, \PROTNAME{} has a basic flow control mechanism that prevents nodes from sending new blocks if the network is overloaded or if they have already disseminated enough blocks that have not been decided yet.
Due to its modular design, a more sophisticated mechanism can be plugged into the system.
This is left for future work.

\subsubsection{\SYSNAME's Throughput}
\nottoggle{VLDB}
{To test \SYSNAME's throughput we measured the blocks per second (bps) and the transactions per second (tps) rates.}{}
For each configuration we run the experiment for 3 minutes. 
The results were collected from all nodes and we took the average among them.

\paragraph{BPS Rate} 
Due to the separation of blocks and headers, \SYSNAME's throughput is mostly bounded by its bps rate.
\begin{figure}[t]
	\centering
	\iftoggle{SINGLE}
	{\includegraphics[width=.5\linewidth]{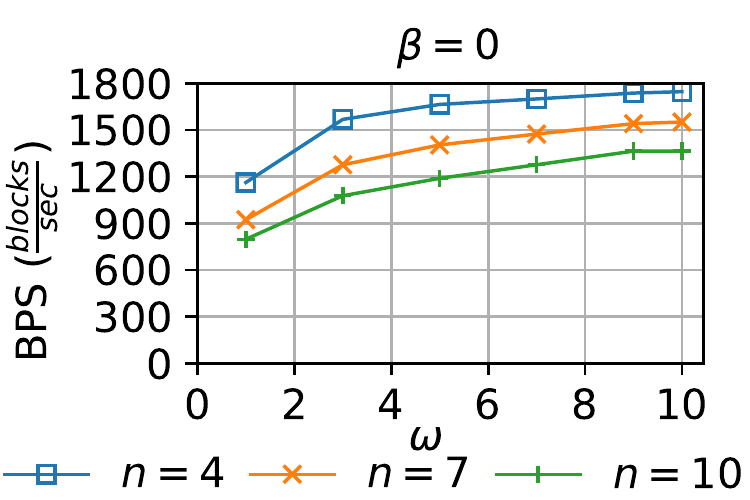}}
	{\includegraphics[width=.55\linewidth]{figures/bps2.pdf}}
	\caption{\SYSNAME's bps rate for $n \in \{ 4, 7, 10\}$ in a single data-center cluster}
	\label{fig:bps}
\end{figure}
Figure~\ref{fig:bps} presents \SYSNAME's bps rate for different $n$ and $\omega$ values.
As expected, increasing $\omega$ yields higher bps due to better CPU utilization. 
In contrast, increasing $n$ decreases the bps because each decision requires more communication.
Even so, \SYSNAME{} delivers thousands of bps under the majority of the tested configurations.
\iftoggle{VLDB}{
\SYSNAME's throughput is bounded by
$tps \leq \beta \cdot bps$.
}
{
Thus, for any configuration, \SYSNAME's throughput is bounded by
	 \[tps \leq \beta \cdot bps.\]
}
\paragraph{TPS Rate}
We tested \SYSNAME's throughput while varying $n, \omega, \sigma$ and $\beta$ values in the ranges described in Table~\ref{impl:params}.
\begin{figure}[t]
	\centering
	\iftoggle{SINGLE}
	{\includegraphics[width=0.9\linewidth]{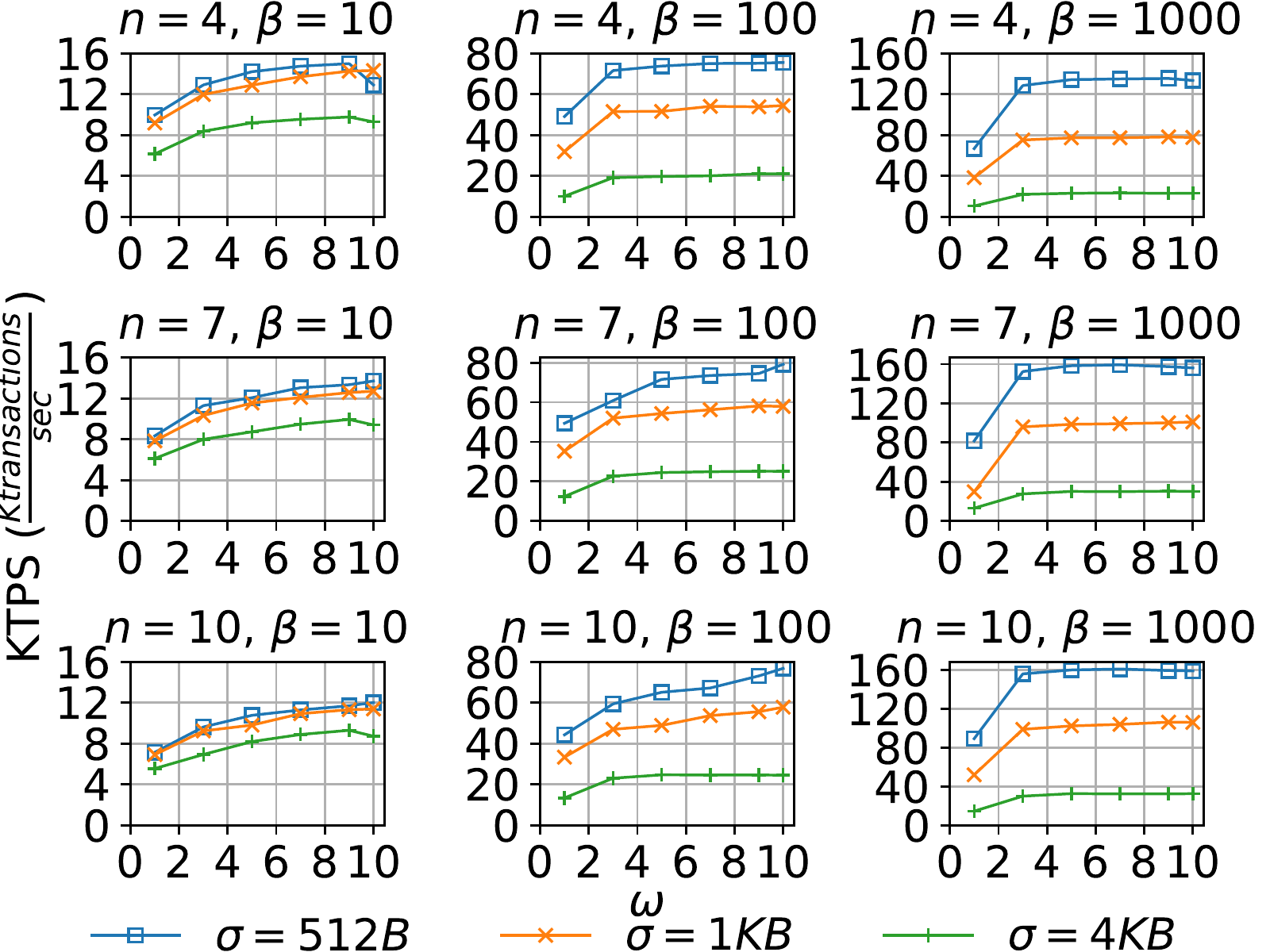}}
	{\includegraphics[width=1\linewidth]{figures/tps2.pdf}}
	\caption{\SYSNAME's transactions throughput for various configurations in a single data-center deployment}
	\label{fig:throughput}
\end{figure}
Figure~\ref{fig:throughput} shows \SYSNAME's throughput with the above configurations.
Except few configurations where $\beta=10$, \SYSNAME{} achieves between tens to hundreds of thousands of tps, depending on the specific configuration.
Especially, with $\sigma=512$, which according to~\cite{tradeblock} is the average size of a Bitcoin's transaction, \SYSNAME{} peaks at around 160K tps even with $n=10$.
As expected, for larger $\sigma$ the performance decreases because less of the network's bandwidth remains available for the headers, which limits the bps rate.
It can be observed that the performance for large blocks with $n \in \{7, 10\}$ is better than when $n=4$. 
This can be explained by the fact that the separation of blocks from headers allows for nodes in bigger clusters to collect more blocks that have not been decided yet.
Thus, as the bps grows w.r.t. the number of workers, it respectively increases the tps rate.
This does not manifest in small block sizes because with very small blocks there is little benefit from transmitting a block before its header. 
%\iftoggle{FV}{}{
%	
%In the full version of this paper~\cite{BF2019full} we measured the maximal signature generation rate in this setting and we are not bounded by it.
%}
%The maximal observed bandwidth is with $n=4, \omega=20, \sigma=1024$ and $\beta=1000$ and is about 90 MB/sec. 
%To evaluate \SYSNAME's performance w.r.t the hardware bounds, we tested the network's performance between two machines in the same datacenter. 
%We measured latency of 100 microseconds and bandwidth of 120 MB/sec.
%This means that \SYSNAME's throughput, in its peak, touches the network bounds.
\nottoggle{VLDB}{
The experiment results demonstrate that \SYSNAME's throughput is in line with the most demanding envisioned blockchain applications.
}{}

\subsubsection{\SYSNAME's Latency}
To evaluate \SYSNAME's latency, we measured the time it takes for a full block to be delivered by \SYSNAME.
This time includes disseminating the block, its header, and to wait until it can be delivered by the round robin between \SYSNAME's workers.
We focus on the configurations in which $\sigma = 512$ (same as Bitcoin transactions).
\begin{figure}[t]
	\centering
	\iftoggle{SINGLE}
	{\includegraphics[width=0.9\linewidth]{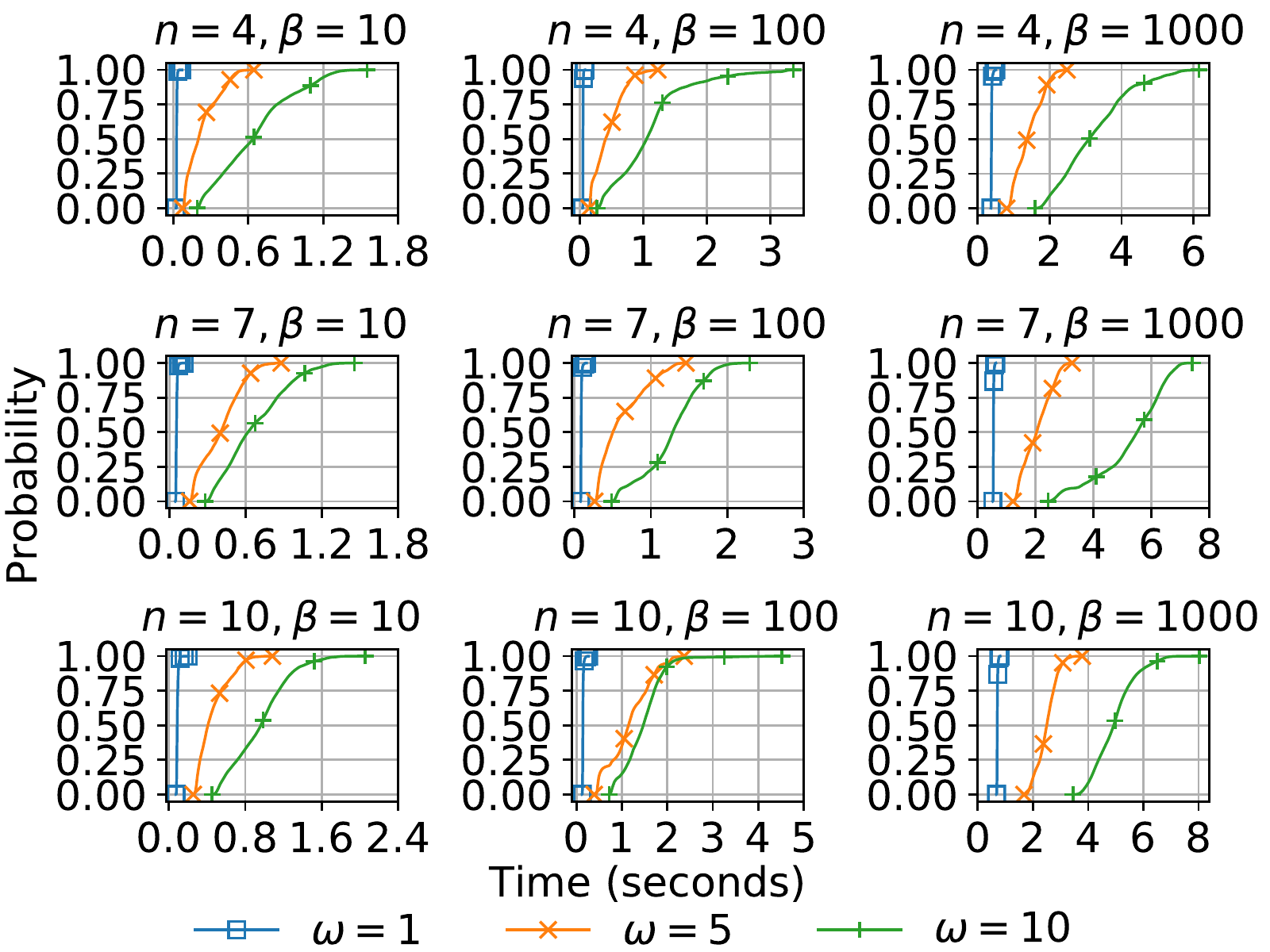}}
	{\includegraphics[width=1\linewidth]{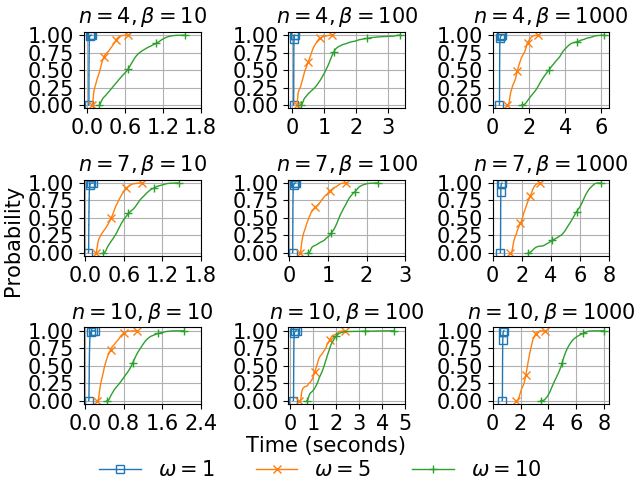}}
	\caption{CDF charts for $\sigma = 512, n \in \{4, 7, 10\}, \omega\in\{1, 5, 10\}$ and $\beta \in \{10, 100, 1000\}$ in a single data-center}
	\label{fig:cdfsinglecloud}
\end{figure}
Figure~\ref{fig:cdfsinglecloud} shows CDF charts for \iftoggle{VLDB}{}{the} various configurations. 
As expected, with $\omega=1$ \SYSNAME's latency is minimal and is less than 1 second even with $n=10$ and $\beta=1000$.
Increasing $\omega$ results in an increase in the latency respectively.
This is due to the fact that even a single worker's delay is reflected in all, due to \SYSNAME's round robin. 
Yet, even with $n=10, \omega=10$ and $\beta=1000$ the latency in a single data-center deployment is below 8 seconds.

\iftoggle{VLDB}{To}{In order to} understand the main bottlenecks of \SYSNAME, we divided every round of the algorithm into 5 different events:
(A) block proposal, (B) header proposal, (C) tentative decision, (D) definite decision, and (E) delivering by \SYSNAME.
Finally, we measured the time between each pair of consecutive events.
\begin{figure}[t]
	\centering
	\iftoggle{SINGLE}
	{\includegraphics[width=0.8\linewidth]{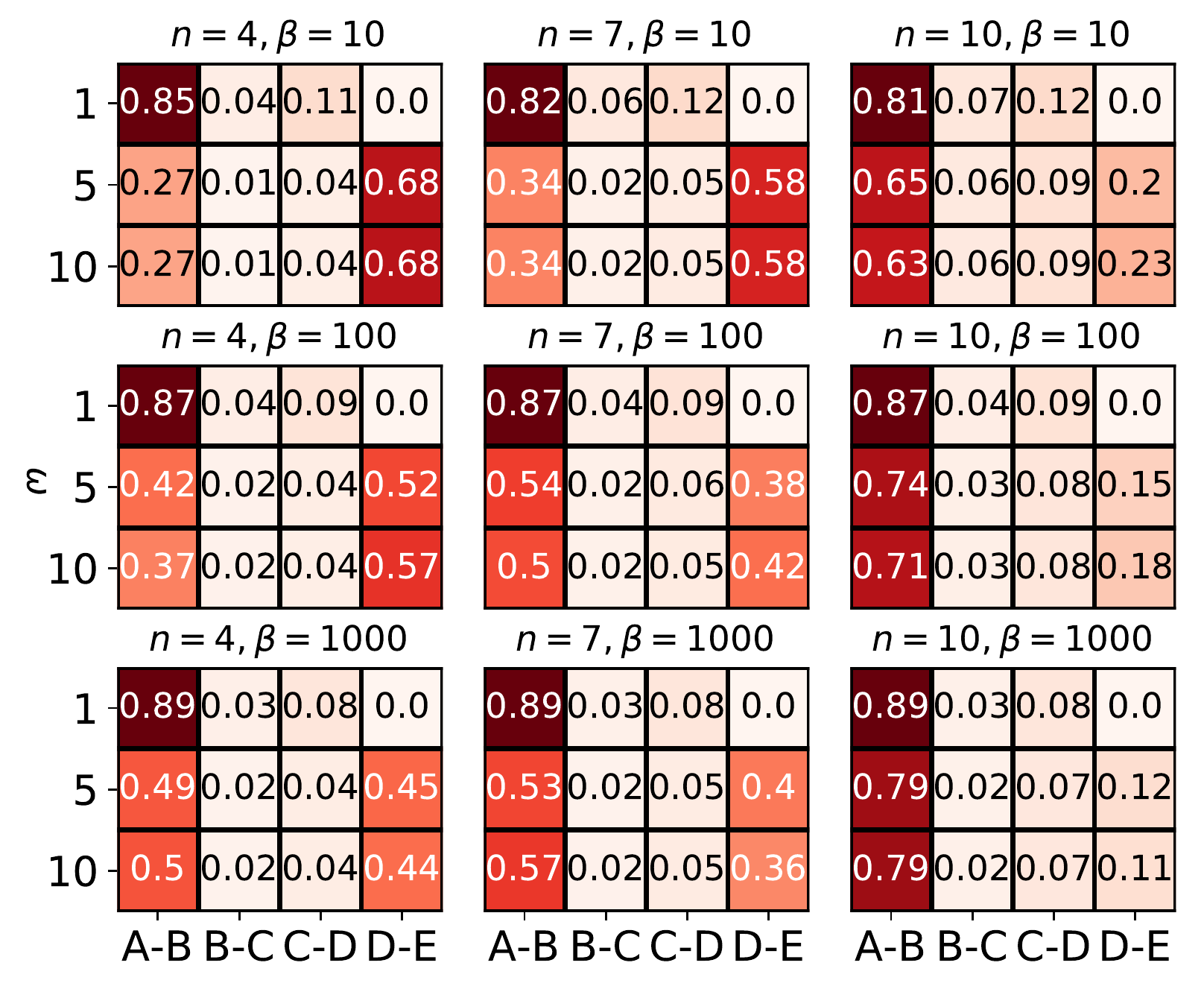}}
	{\includegraphics[width=1\linewidth]{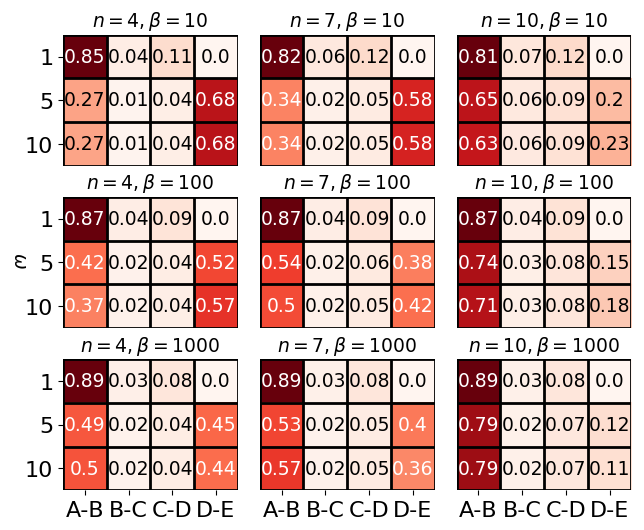}}
	\caption{Heatmaps that depicts the relative execution time of \SYSNAME{} between 5 different events for $\sigma=512$}
	\label{fig:heatmap2}
\end{figure}

Figure~\ref{fig:heatmap2} shows the relative execution time between each two consecutive events.
It is easy to see that due to the separation of blocks and headers, the majority of time is spent between receiving a block and receiving its header.
In addition, for $\omega > 1$, the workers cause an increase in the latency, as even a single worker's delay, delays the whole system.
Finally, increasing $n$ as well as $\beta$ causes the blocks dissemination event to take longer.
This is despite using a clique layout. %, which is the shortest time to disseminate data. {\color{red} indeed??}
Other methods (e.g., gossip) may improve the throughput but not the latency. 
 
\subsection{\SYSNAME's Scalability}
To test \SYSNAME's scalability we deployed a single data-center cluster of $n=100$ machines and tested \SYSNAME's tps rate with $\sigma=512, \beta \in \{10, 100, 1000\}$ and $1 \leq\omega \leq 5$.
We ran each configuration for 3 minutes.
\begin{figure}[t]
	\centering
	\iftoggle{SINGLE}
	{\includegraphics[width=0.5\linewidth]{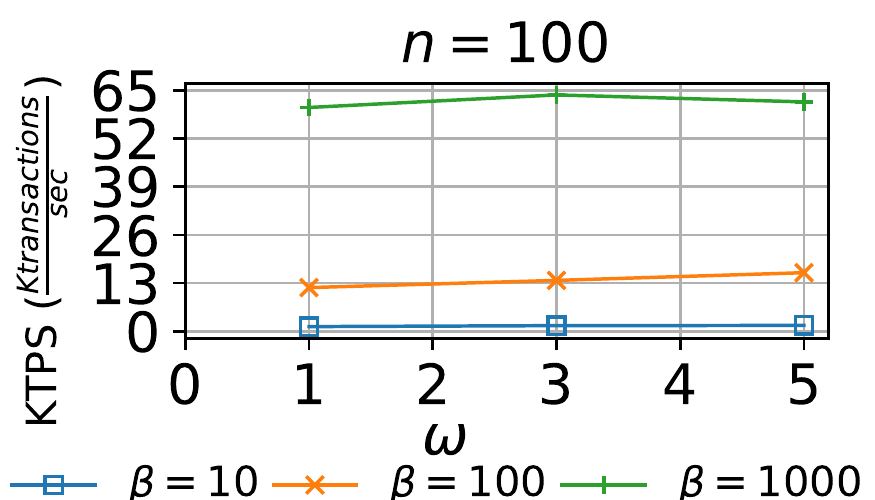}}
	{\includegraphics[width=0.55\linewidth]{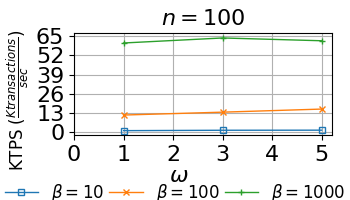}}
	\caption{\SYSNAME's tps rate with $n=100, \sigma=512, \beta\in\{10, 100, 1000\}$ and $1 \leq \omega \leq 5$}
	\label{fig:tpsbs}
\end{figure}
Figure~\ref{fig:tpsbs} depicts the benchmark's result.
Thanks to \PROTNAME's frugal communication pattern, as long as the fault free execution path take place, \SYSNAME{} can achieve around $60K$ tps (in a single data-center deployment).
\iftoggle{VLDB}{As shown, in this}{It can be seen that due to the} cluster size, the number of workers has no effect because of the relatively large amount of communication that even a single worker consumes. 

\subsection{\SYSNAME{} Under Failures} 
\subsubsection{Benign Failures}
We tested \SYSNAME{} while suffering from crash failures of $f$ nodes (yet, we maintain $n=3f+1$ even in this benign case).
Here, all faulty nodes crash in the middle of a run (such a node crashes with all of its workers), but the measurements are taken after the faulty nodes crash.
Every benchmark was ran for 3 minutes and we calculated the average tps among the correct nodes.
\begin{figure}[t]
	\centering
	\iftoggle{SINGLE}
	{\includegraphics[width=0.9\linewidth]{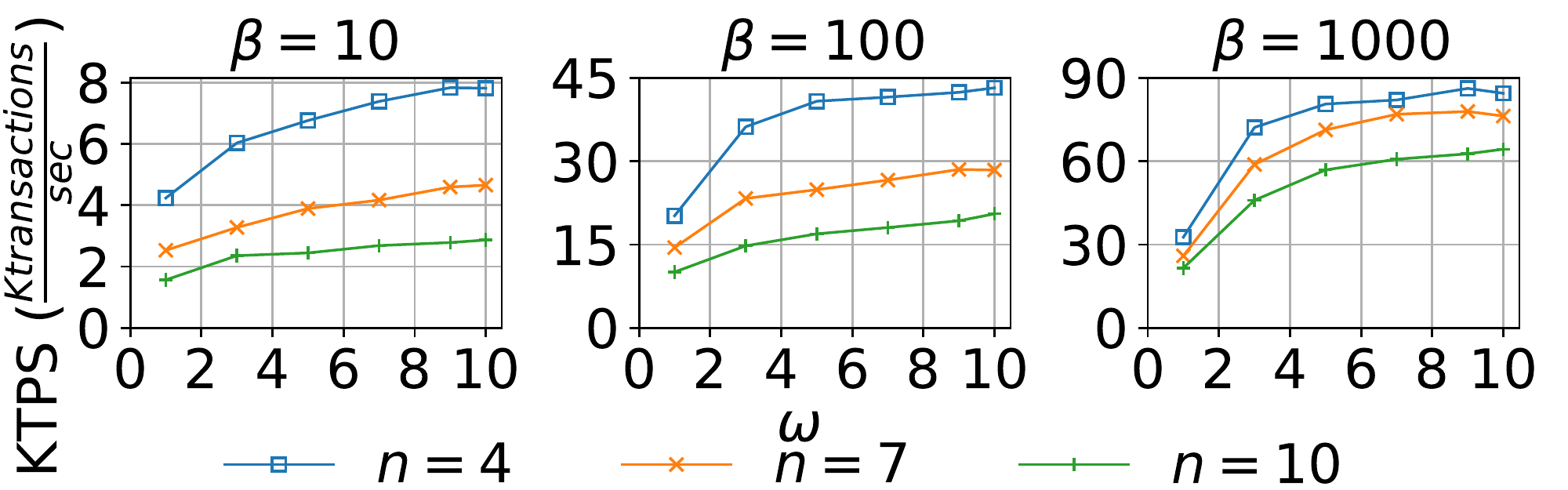}}
	{\includegraphics[width=1\linewidth]{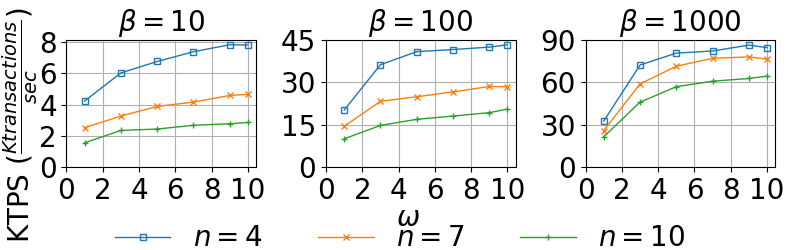}}
	\caption{\SYSNAME's tps rate while suffering from crash failures of $f$ nodes for $\sigma=512, \beta\in \{10, 100, 1000\}, n\in \{4, 7, 10\}$ and $f\in \{1, 2, 3\}$.}
	\label{fig:tpsbenign1}
\end{figure}
Figure~\ref{fig:tpsbenign1} depicts \SYSNAME's tps rate under various configurations while facing crash failures.

It can be seen that due to the full BBC phase that is now needed during faulty nodes rounds, for larger $n$ the tps is decreasing.
Yet, \SYSNAME{} still reaches tens of thousands of tps despite these benign failures. 
This is due to the OBBC protocol and the basic failure detector described in Section~\ref{impl:opt}.

\subsubsection{Byzantine Failures}
To test \SYSNAME's performance when facing Byzantine failures, we deployed a Byzantine \SYSNAME{} node that operates as following:
When started, every worker divides the cluster into two random parts, and for every given round it distributes different versions of the block to each part.
Notice that in practice, invoking the recovery procedure may cause the nodes to become un-synchronized with each other, a fact that affects the performance as well.
This increases the variance between measurements of the same settings.
Hence, to be able to perform more measurements of each data point, for each configuration we run a series of short benchmarks (between 1 - 2 minutes each).

\begin{figure}
	\centering
	\iftoggle{SINGLE}
	{\includegraphics[width=0.94\linewidth]{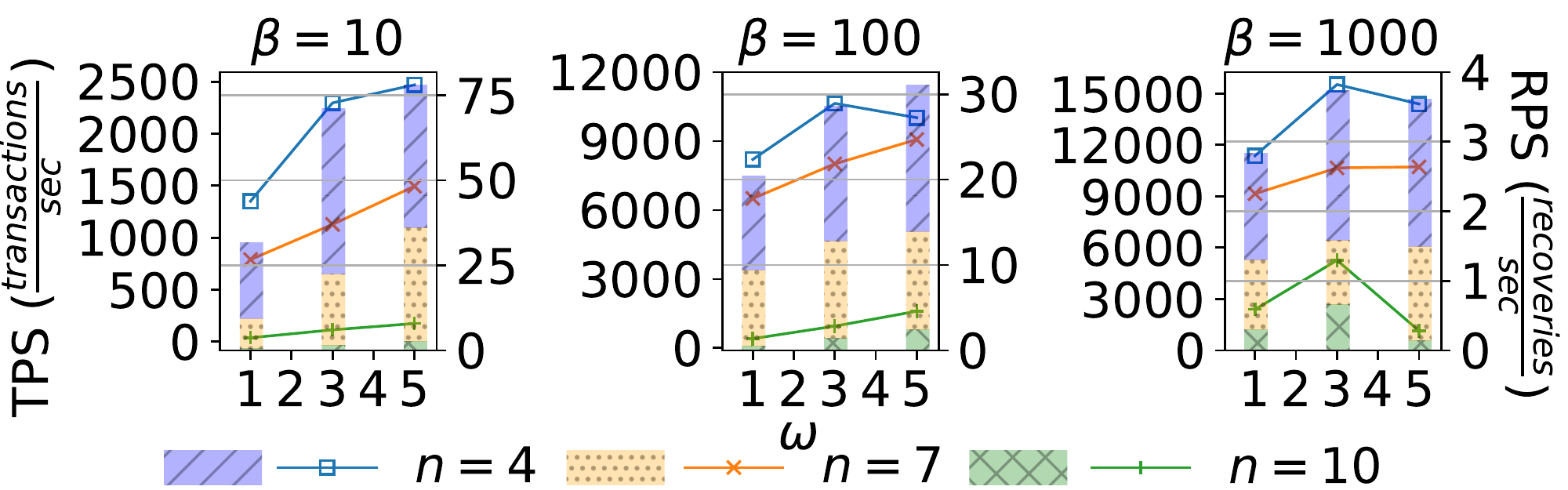}}
	{\includegraphics[width=1\linewidth]{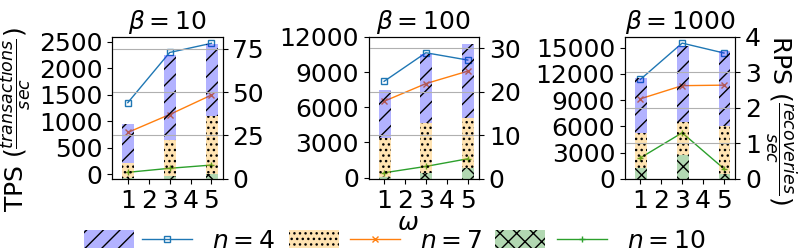}}
	\caption{\SYSNAME's tps rate under Byzantine failures for $\sigma=512, 1 \leq \omega \leq 5, \beta \in \{10, 100, 1000\}, n\in \{4, 7, 10\}$ and $f\in\{1, 2, 3\}$.
	The lines show the tps rate and the bars shows the recovery per second (rps) rate}
	\label{fig:tpsbyz1}
\end{figure}

Figure~\ref{fig:tpsbyz1} presents the tps rate for \SYSNAME{} when facing Byzantine failures w.r.t. the number of workers and the number of recoveries per second (rps).
Smaller values of $\beta$ and $n$ imply more recovery events. 
Recall that during the recovery nodes halt.
Thus, the above is expected due to the fact that each recovery ends faster when $\beta$ and $n$ are small.
Yet, for bigger $\beta$, the batching effect compensate for the small amount of recoveries and the long halts.
The reason why for $n=10, \beta=1000$ and $\omega=5$ the performance decreased so much is the underlying Byzantine consensus layer (BFT-SMaRt), which has to handle a large amount of data.
To conclude, \SYSNAME{} delivers more than 10K tps in some scenarios even when facing Byzantine failures.
Although the performance is lower than in optimistic executions, these type of failures are expected to be rare in permissioned blockchain clusters.
And even with $n=10$, if we set $\beta=1000$ and $\omega=3$ we obtain about $6$K tps, roughly twice the average tps of VISA.
Hence, \SYSNAME{} presents an attractive trade-off between scalability, performance and security.

\iftoggle{FV}{
\subsection{\SYSNAME{} in a Multi Data-Center Cluster}
}
{
\subsection{\SYSNAME{} in a Multi Data-Center Cluster}
}
We also tested \SYSNAME{} in a geo-distributed setting with nodes spread around the world. 
The nodes were placed, one node per region, by the following order, in Amazon's Tokyo, Central, Frankfurt, Paris, Sau-Paulo, Oregon, Singapore, Sydney, Ireland and Ohio data-centers.
We tested only fault free scenarios.
Thus, we kept using BFT-SMaRt rather than its geo-distributed optimized version named WHEAT~\cite{WHEAT}. 
\subsubsection{\SYSNAME's Throughput}
As before, we first measured the bps rate for $n \in \{4, 7, 10\}$.
\begin{figure}[t]
	\centering
	\iftoggle{SINGLE}
	{\includegraphics[width=.49\linewidth]{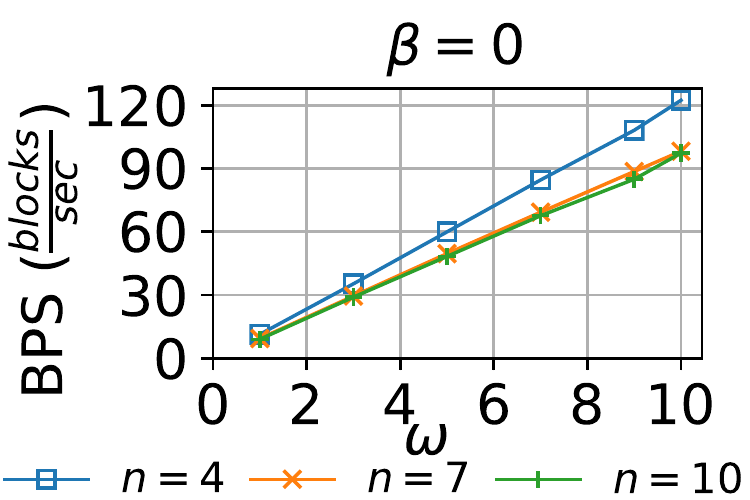}}
	{\includegraphics[width=.55\linewidth]{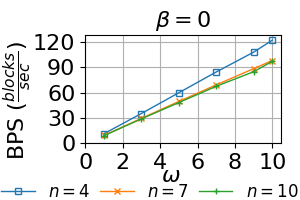}}
	\caption{\SYSNAME's bps rate for $n \in \{ 4, 7, 10\}$ in a multi data-center cluster}
	\label{fig:gdbps}
\end{figure}
Figure~\ref{fig:gdbps} depicts the bps rate for varying cluster sizes. 
As expected, due to lower network performance, the bps is less than $10\%$ of its rate in single data-center clusters.

As in the single data-center deployment, we simulated high load by creating random transactions and run \SYSNAME{} with the following configurations: $n \in \{4, 7, 10\}, 1 \leq \omega \leq 10, \sigma = 512, \beta \in \{10, 100, 1000\}$.
Every benchmark was run for 3 minutes.
\begin{figure}
	\centering
	\iftoggle{SINGLE}
	{\includegraphics[width=0.9\linewidth]{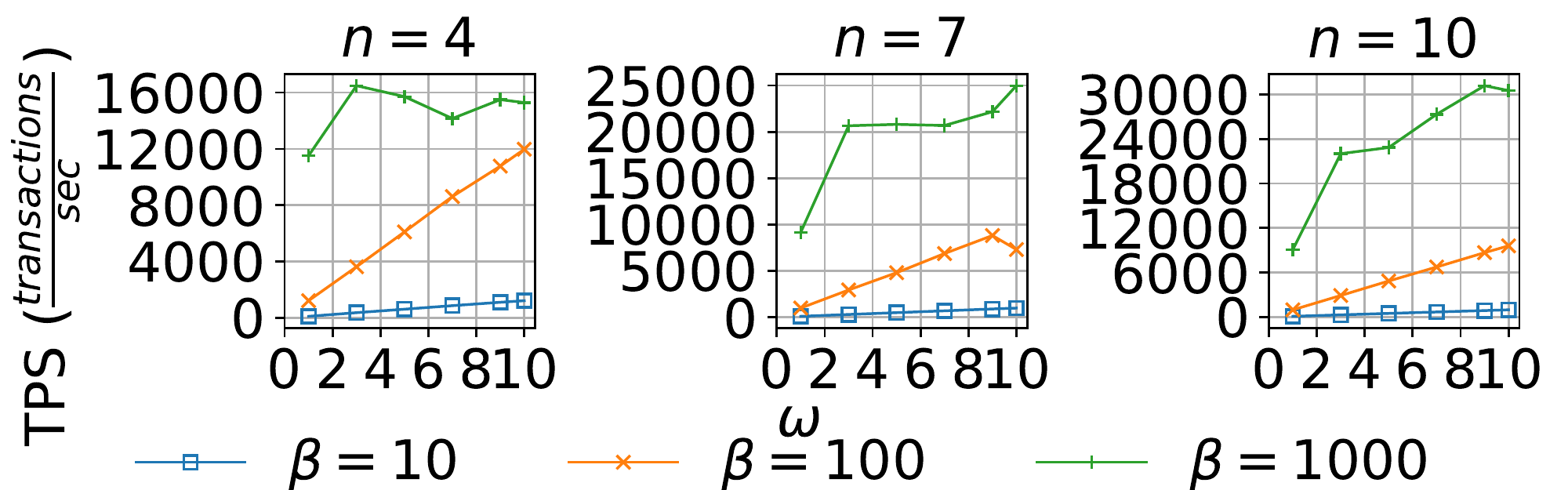}}
	{\includegraphics[width=1\linewidth]{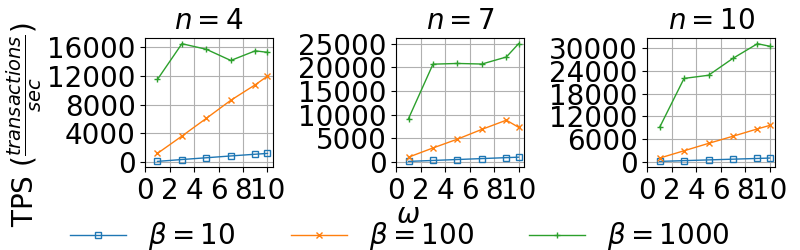}}
	\caption{\SYSNAME's transactions throughput for $\sigma=512$ under various configurations in a multi data-center}
	\label{fig:gdtps}
\end{figure}
Figure~\ref{fig:gdtps} presents the benchmarks result.
Obviously, tps is increasing with $w$ and $\beta$ due to a better CPU utilization and a batching effect.
As for the increase of the tps with $n$, this is, as before, thanks to the separation of blocks from headers which allows bigger clusters to collect more blocks to decide on, thereby enhancing performance.

\iftoggle{FV}{
%Figure~\ref{fig:heatmapbw} presents the respective latencies and obtainable bandwidth between these data-centers as measured by the standard iperf and qperf Linux utilities.
%\begin{figure}
%	\centering
%	\includegraphics[width=1\linewidth]{draws/heatmap_bw}
%	\caption{Heatmap that describes the measured network's bandwidth and latency.
%			The lower triangle shows the bandwidth (in MB/sec) and the upper triangle shows the latency (in milliseconds).}
%	\label{fig:heatmapbw}
%\end{figure}
}{}

%To evaluate \SYSNAME's performance, we first tested the cluster's network bandwidth and latency. 
 
%\begin{figure}
%	\centering
%	\includegraphics[width=1\linewidth]{draws/GD_throughput.pdf}
	%	\begin{fcap}
%	\caption{\SYSNAME's throughput in a geo-distributed settings with the following configurations: $n \in \{4, 7, 10\}, 1 \leq \omega \leq 20, \sigma = 512, \beta \in \{100, 1000\}$ and $\tau = 20000$.}
%	\label{fig:gdthroughput}
	%	\end{fcap}
%\end{figure}

\subsubsection{\SYSNAME's Latency} 
We tested \SYSNAME's latency in the above multi data-center deployment.
As before, we measured the time that takes a block to be delivered by \SYSNAME{} from the moment it was proposed by its creator.
To avoid outliers, we omitted the 5$\%$ most extreme results. 
\begin{figure}[t]
	\centering
	\iftoggle{SINGLE}
	{\includegraphics[width=0.9\linewidth]{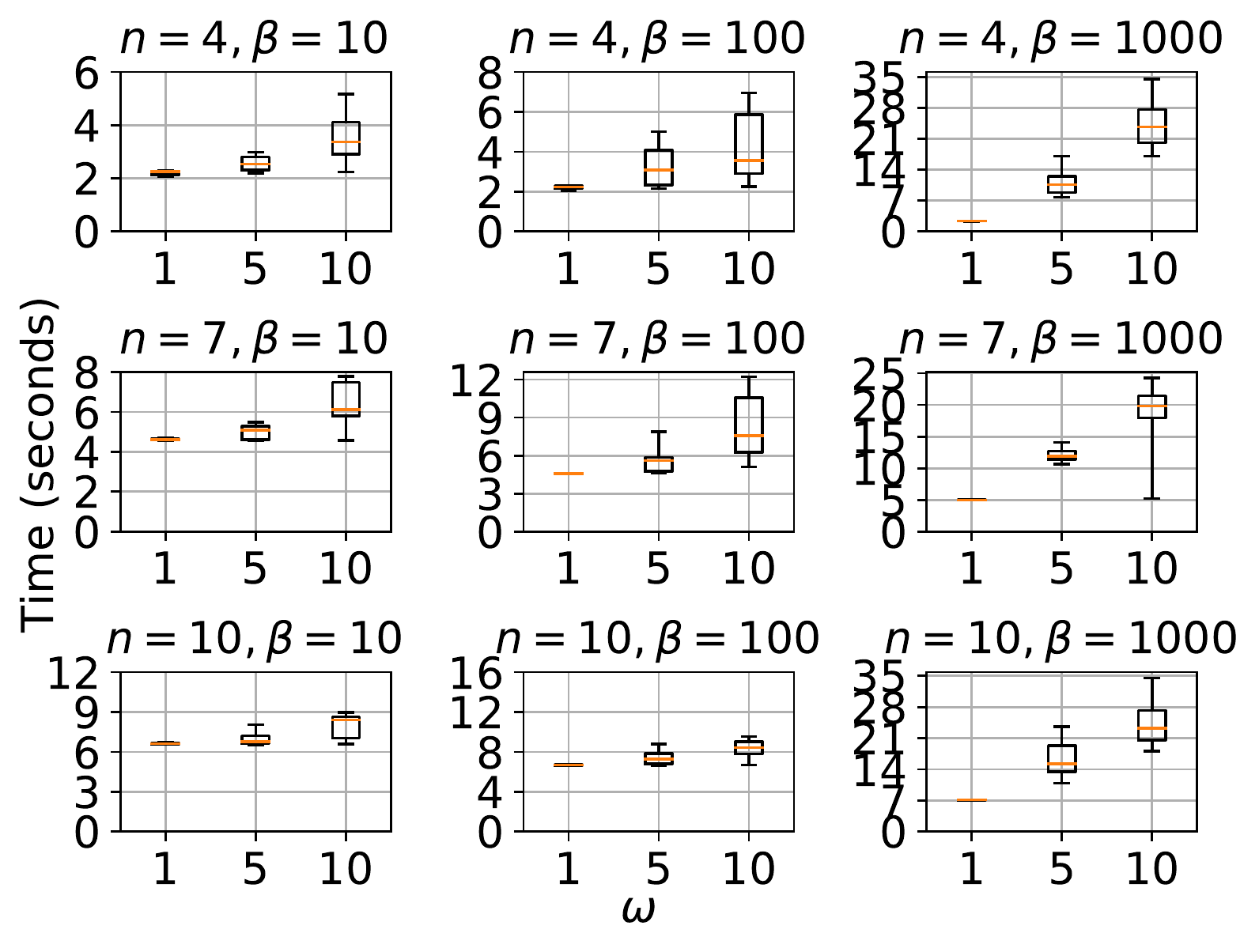}}
	{\includegraphics[width=1\linewidth]{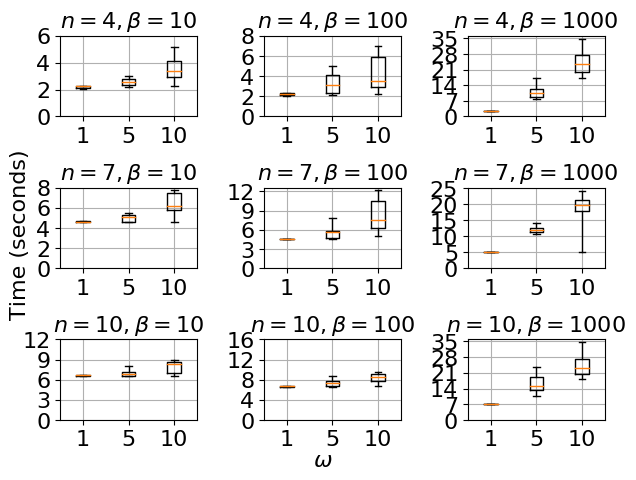}}
	\caption{\SYSNAME's latency in a multi data-center deployment for $\sigma=512, \omega\in \{1, 5, 10\}, \beta\in\{10, 100, 1000\}$ and $n\in\{4, 7, 10\}$}
	\label{fig:gdlatency}
\end{figure}
Figure~\ref{fig:gdlatency} presents the benchmarks results.
For small blocks, the cluster size slightly affects the latency because the headers dissemination process dominates the bandwidth.
Yet, with large blocks, as the data dissemination process itself dominates the latency, increasing the cluster size has very little effect on the latency. 
%\iftoggle{FV}{
%\subsection{Optimizations}
%\label{impl:optimiztions}
%\PROTNAME{} on its own does not obtains an extraordinary performance due to its rotating leader paradigm.
%\SYSNAME{} addresses this issue by running multiple instances of \PROTNAME.
%However, another plausible approach is to disseminate the blocks independently of their headers.
%Namely, a node disseminates through WRB-broadcast only the block header and votes for delivering only if it received the block itself as well.  
%This method, combined with \SYSNAME{} may achieve hundreds of thousands of transactions per second regardless of the block size.
 
%Figures~\ref{fig:cdf},~\ref{fig:heatmap} and~\ref{fig:gdcdf} highlight the main drawback of \SYSNAME. 
%Using multiple workers, in a round robin fashion, may impose higher latency even when a single worker delays.
%Thus, a sharding model in which every worker handles a pre-defined range of possible transaction could alleviate this.
%Cross shards transactions would then be handled by their own dedicated worker to enforce their total ordering among all nodes.  
%}
%{
%}

\subsection{\PROTNAME{} vs. Leading Alternatives}
\label{sec:alternatives}

To the best of our knowledge, the current best performing alternative to \PROTNAME{} is HotStuff~\cite{hotstuff}.
As we had no access to the codebase of HotStuff, we compare the declared performance of HotStuff from~\cite{hotstuff} with our own measurements of \PROTNAME{} using the exact same environment as in~\cite{hotstuff}, namely c5.4xlarge AWS machines (16 vCPUs, 32 GiB RAM).
We also compare with the numbers obtained for BFT-SMaRt~\cite{Bessani2014}, the previous state-of-the-art system, in this same setting.
\begin{figure}[t]
	\centering
	\includegraphics[width=1\linewidth]{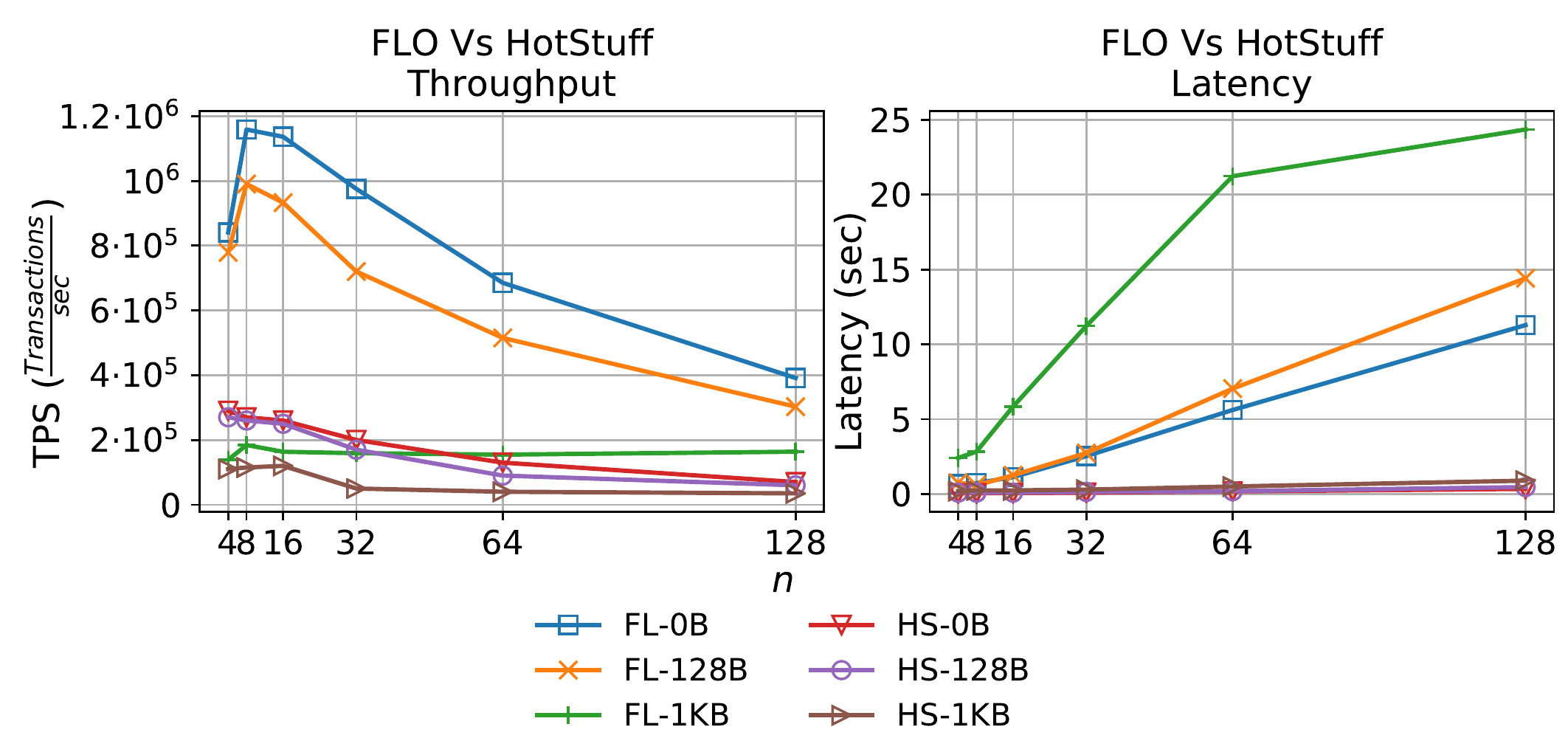}
	\caption{Comparison of \SYSNAME{} and HotStuff on c5.4xlarge AWS machines and $f=\lfloor n/3 \rfloor - 1$}
	\label{fig:tpscs}
\end{figure}
\begin{figure}[t]
	\centering
	\includegraphics[width=1\linewidth]{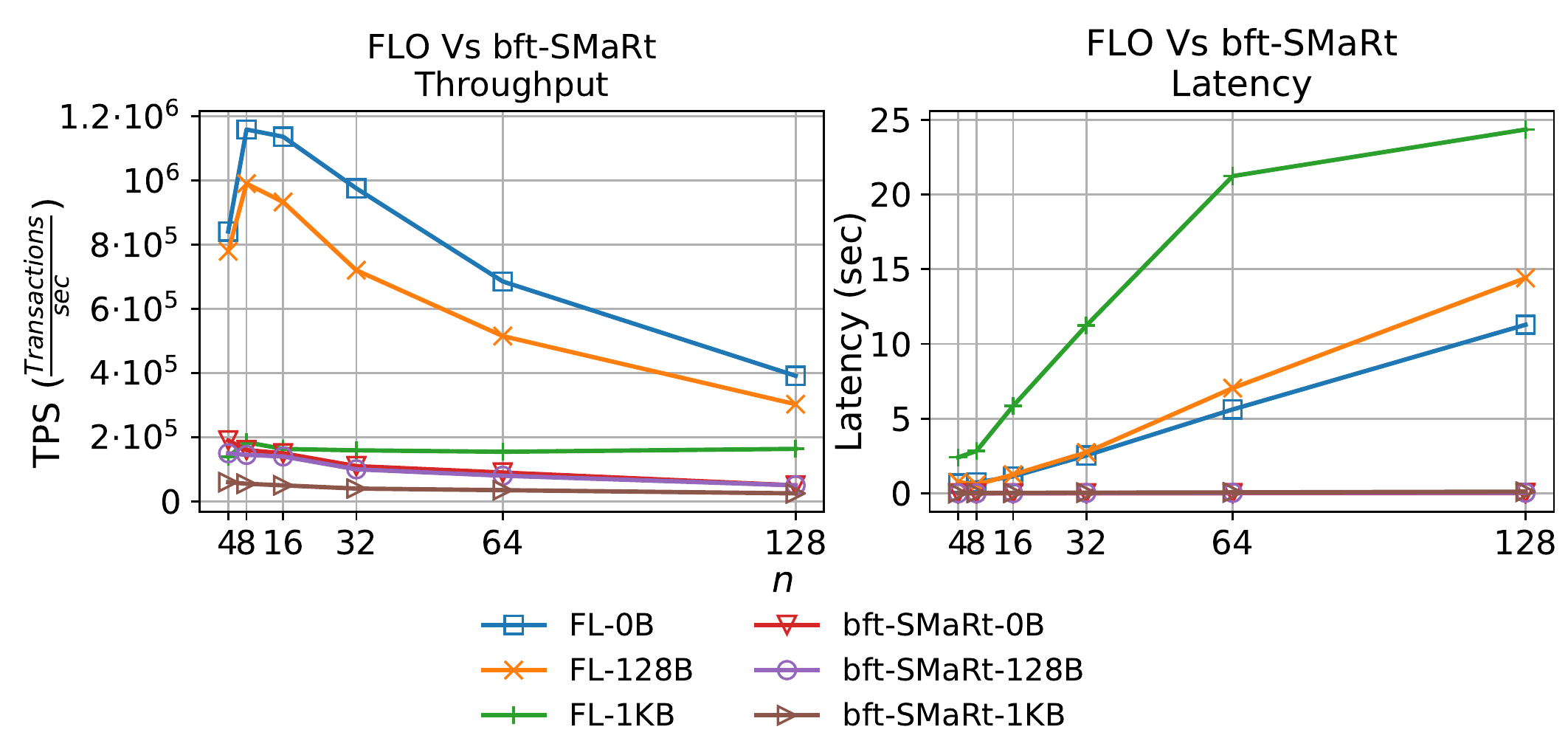}
	\caption{Comparison of \SYSNAME{} and bft-SMaRt on c5.4xlarge AWS machines and $f=\lfloor n/3 \rfloor - 1$}
	\label{fig:tpscsbfts}
\end{figure}
Figure \ref{fig:tpscs} and \ref{fig:tpscsbfts} show this comparison's results w.r.t the $n$ and $\sigma$ parameters.
For all cluster sizes \SYSNAME{} was deployed with $\beta=1000$ and $\omega=8$ with maximal resiliency $f=\lfloor n/3 \rfloor - 1$. 
In terms of throughput, for any $\sigma$ and $n$ \SYSNAME{} performs $20\%-300\%$ better than HotStuff and $40\%-600\%$ better than bft-SMaRt.
Notice that HotStuff is implemented in C while \SYSNAME{} is in Java.

As for latency, due to the $f+1$ finality of \PROTNAME{} and the fact that we run with maximal resiliency, \SYSNAME's latency rises with to $n$.
In contrast, transactions' finality with both HotStuff and bft-SMaRt is at most three rounds (HotStuff), so their latency is much less impacted by $n$.
Still, in all cases the latency obtained by \SYSNAME{} is better than most existing cryptocurrencies (and tokens), including the very recent Algo (Algorand)~\cite{Gilad:2017}. 
Libra's target 10 seconds finality~\cite{libra-bc} is met by \SYSNAME{} when $n \leq 30$ nodes\footnote{Ripple and Stellar, for example, run on smaller clusters.}.

\paragraph*{Conclusions} The performance gap between \SYSNAME{} and the alternatives narrows as transactions become larger.
This is because in such cases, the basic need to disseminate the transactions dominates the communication overhead, so a clever consensus protocol has less room for impact.
Hence, one should consider compressing the data for large transactions.
Also, by employing scalable Byzantine dissemination protocols for the transactions data, e.g.,~\cite{DFKS07}, \SYSNAME{} is likely to better handle large clusters.

%The results are listed in Table~\ref{tbl:flvshs}.
%As can be seen, \PROTNAME{} obtains between $20\%-300\%$ higher throughput than HotStuff depending on the transactions size despite the fact that HotStuff is written in C++ whereas \PROTNAME{} is written in Java.
%Compared to BFT-SMaRt, \PROTNAME{} obtains between $2.5$ to $5$ times better throughput.
%A more elaborate comparison would be included in the next version of this paper.

%\begin{table}[t]
%	\centering
%	\begin{tabu} to 1\linewidth {||r|r|r|r||}
%		\hline
%		\makecell{TX size (bytes)} & \makecell{\PROTNAME} & \makecell{HotStuff} & \makecell{BFT-SMaRt} \\ 
%		\hline
%		\hline
%		0 & 1M & 310K & 190K \\
%		\hline
%		128 & 800K & 230K & 150K \\
%		\hline
%		1024 & 145K & 120K & 60K \\
%		\hline
%	\end{tabu}
%	\caption{Max TPS for \PROTNAME{} vs. HotStuff and BFT-SMaRt on c5.4xlarge AWS machines with $n=4$ for various transaction sizes (in this setting \PROTNAME{} obtains $280,000$ TPS with $512$B transactions)}
%	\label{tbl:flvshs}
%\end{table}

	\section{Discussion}
\label{discussion}
\PROTNAME{} is a communication frugal optimistic block\-chain algorithm targeting environments where failures rarely occur.
For example, \PROTNAME{} is likely to be very attractive for the FinTech industry, which uses highly secure and robust systems.
\PROTNAME{} leverages blockchain's iterativity as well as its cryptographic features to achieve its goal.

Loosely speaking, \SYSNAME{} employs \PROTNAME{} as a block\-chain-based consensus algorithm rather than consensus-based blockchain.
Our performance results show that it matches the requirements of real demanding commercial applications even when executing on common non-dedicated infrastructure.
\nottoggle{VLDB}{

In this paper we studied the distributed agreement aspects of \SYSNAME/\PROTNAME's optimistic approach.
In the future, we intend to examine other aspects such as sharding, data model, scalability, and improvement of \PROTNAME's latency.}{In the future, we intend to explore sharding, which can potentially give an additional significant performance boost, as well as scalable dissemination protocols.}
\iftoggle{Annon}{
Our prototype implementation of this work is available in open source~\cite{NugasDSL-anon}.
}{
Finally, our prototype implementation of this work is available in open source~\cite{NugasDSL}.
}

\nottoggle{VLDB}{
\iftoggle{Annon}{}{
\begin{center}
	\textit{$\sim$Expect the best. Prepare for the worst.$\sim$}\\	
	\textit{Zig Ziglar}
\end{center}
}
}{
}

	\clearpage
\bibliographystyle{abbrv}
\bibliography{references}
\clearpage
\appendix
%\begin{appendices}
	\section{Optimistic Binary Byzantine Consensus}
\label{app:obbc}
\subsection{Introduction}
The Optimistic Binary Byzantine Consensus (OBBC) is a key abstraction in WRB's ability to deliver messages in a single communication step in favorable conditions, which is the basis for \PROTNAME's low latency and frugal communication pattern (see lines~\ref{wrb:l2}--\ref{wrb:l5} in Algorithm~\ref{alg:wrb}).
Recall that the favorable conditions for \PROTNAME{} are that there is no Byzantine activity and the network delivers the proposer's messages in a timely manner.
If these conditions are met, all nodes receive the proposer's message (line~\ref{wrb:l3} in Algorithm~\ref{alg:wrb}).
Consequently, they will all invoke OBBC with $1$.
In other words, the implementation of OBBC must terminate in a single round whenever all nodes vote $1$.
This enables us to develop a deterministic protocol that withstands $f < \frac{n}{3}$ byzantine nodes.
This is in contrast to other existing OBBC implementations, e.g.~\cite{Friedman:2005, Mosfaoui:2015}, that either rely on an oracle, randomization, or weaker failure model.
To that end, we denote a BBC protocol that terminates quickly when all nodes vote for a given value $v$ by OBBC$_v$;
specifically, Algorithm~\ref{alg:wrb} invokes OBBC$_1$.

%The specific implementation has a significant impact on \PROTNAME's performance in the presence of failures. 

\begin{comment}
In this section, we present our OBBC implementation. 
Other algorithms, e.g.~\cite{Friedman:2005, Mosfaoui:2015}, often rely on an oracle, randomization, or weaker failure model.
In contrary, our algorithm is deterministic and supports $f < \frac{n}{3}$ byzantine nodes. 
In exchange, we restrict the set of cases in which the algorithm decides in a single communication step for those cases that are in the primary concern of \PROTNAME{} \textbf{[[[What are they?]]]} which are periods of synchronization, no byzantine nodes, and a correct proposer.

Although our algorithm is tailored for \PROTNAME's needs, it may be helpful for other problems that share \PROTNAME's optimistic assumptions.
\end{comment}

\subsection{OBBC$_v$ Formal Definition}
%We denote OBBC$_v$ an OBBC that may decide $v$ in a single communication step if a predefined set of conditions is met. 
%This is in accordabe with Section~\ref{sec:prelim}
%
\textbf{OBBC$_v$-Agreement} and \textbf{OBBC$_v$-Termination} are the same as their BBC counterparts. 
In addition we define the following:
\iftoggle{MYACM}{
	\begin{description}
	}{	
		\begin{LaTeXdescription}
		}
	\item[evidence($v$):] %An evidence that convinces a correct node to vote for $v$. 
	%Formally, if a correct node $p$ has received a valid $evidence(v)$, then $p$ adopts $v$. \textbf{[[[what does it mean to adopt $v$]]]}
	%Adopting a value $v$ means that if a process has not yet decided, from the adoption point and on, it proposes $v$ (even if $v$ was not its original proposal).
	%
	An $\mathit{evidence}(v)$ for a value $v$ is a cryptographic proof that can be verified by an external valid function.
	For a value $v'\neq v$, $\mathit{evidence}(v) = nil$.
	%We assume that evidence($v$) cannot be forged and its validation is defined by a known reliable mechanism (e.g. cryptographic mechanisms).     
	%\item[OBBC$_v$-Validity:] A decided value $v'$ was either proposed or adopted at some point by a correct node.
	\item[OBBC$_v$-Validity:] A value $v'$ decided by a correct node was either proposed by a correct node or $v'=v$ and it was proposed by a node that has a valid $\mathit{evidence}(v)$.
	\item[OBBC$_v$-Fast-Termination:] If no node has proposed $v'\neq v$, then $v$ will be decided by every correct node in a single communication step.
	\iftoggle{MYACM}{
\end{description}
}{	
\end{LaTeXdescription}
}

\subsection{Implementing OBBC$_v$}
\begin{algorithm}[t]
	\caption{OBBC$_v$ - code for $p$}
	\label{OBBC:impl}
    \setcounter{AlgoLine}{0}
	\SetNlSty{texttt}{OB}{}
	\footnotesize
	\SetKwFunction{PROP}{BBC$_v$.propose}
	\SetKwFunction{OPROP}{OBBC$_v$.propose}
	\SetKwFunction{BR}{broadcast}
	\SetKwFunction{Send}{send}
	\SetKwFunction{WRBB}{WRB-broadcast}
	\SetKwFunction{WRBD}{WRB-deliver}
	\SetKwFunction{ASSERT}{assert}
	\SetKwRepeat{Wait}{wait}{until}
	\SetKwFor{Upon}{upon}{do}{end} 
	\SetKwProg{myproc}{Procedure}{}{}
	\footnotesize
%	\begin{algorithmic}[1]
	\myproc{\OPROP{$v'$, $\mathit{evidence}(v)$}}{
		\ASSERT{$v = v' \implies \mathit{evidence}(v)$ is valid} \;
		\ASSERT{$v \neq v' \implies \mathit{evidence}(v) = nil$} \;
		\BR{v'} \label{OBBC:2} \;
		\lWait{$n-f$ proposals $\hat{v}$ have been received \label{OBBC:3}}{}
		$\mathit{votes}=$ \{received proposals\} \;
		\If {$\mathit{votes}=\{v\}$} { 
			decide $v$ \label{OBBC:5}\;
			return $v$\;
		} 
		$\mathit{evidences}=\{\}$ \tcc{couldn't terminate quickly; run full protocol}
		%\State $broadcast(EV,v)$ \label{OBBC:7} \;
		\BR{$EV$} \label{OBBC:7} \;
		\lWait{$|\mathit{evidences}| = n-f$ \label{OBBC:8}}{} \tcc{see also line~OB\ref{OBBC:18}}
		$\mathit{new\_v} = v'$ \label{OBBC:9}\;
		\If{$\mathit{evidences}$ contains a valid $\mathit{evidence}(v)$} {
			$\mathit{new\_v} = v$ \tcc{notice: only $v$ can have a valid evidence}
		}
		return \PROP{$\mathit{new\_v}$} \label{OBBC:12} \tcc{invoke a BBC protocol}
	}
		\BlankLine
		%\State \textbf{upon} receiving $(EV)$ from $q \land \mathit{evidence}(v) \neq nil$ \textbf{do}: \label{OBBC:14}  
		\Upon{receiving $(EV)$ from $q$ \label{OBBC:14}}
		{
			\Send{$\mathit{evidence}(v)$) to $q$} \label{OBBC:15}
		}
		\BlankLine
		%\State \textbf{upon} receiving a valid $\mathit{evidence}(\hat{v})$ from $q$ \textbf{do}: \label{OBBC:17}
		\Upon{receiving an $\mathit{evidence}(\hat{v})$ from $q$ \label{OBBC:17}}
		{
			$\mathit{evidences}= \mathit{evidences} \cup \{\mathit{evidence}(\hat{v})\}$ \label{OBBC:18}
		}
		\BlankLine
		\If {invoked Decide $v$ and some node invoked BBC.propose($v'$)} {
			\PROP{$v$} \tcc{this is executed at most once}
		}
%	\end{algorithmic}
\end{algorithm}
The pseudocode implementation of OBBC$_v$ is listed in Algorithm~\ref{OBBC:impl}. 
\emph{OBBC$_v$.propose} receives two parameters: the actual proposal $v'$ and an $evidence(v)$.
If $p$ is correct and $v=v'$, then $evidence(v)$ is valid.
Else, if $v \neq v'$ then $evidence(v)$ is $nil$.

When $p$ invokes \emph{OBBC$_v$.propose($v'$, $evidence(v)$)}, it performs the following:
\begin{itemize}
	\item Broadcast $v'$ to all (line~OB\ref{OBBC:2}).
	\item Waits for at most $n-f$ proposals. If a single value $v'=v$ has been received, then $p$ decides $v$ (lines OB\ref{OBBC:3}--OB\ref{OBBC:5}).
	\item Else, $p$ broadcast a request for $evidence(v)$ and waits for $n-f$ replies (which might be $nil$) (lines OB\ref{OBBC:7}--OB\ref{OBBC:8} and lines OB\ref{OBBC:17}--OB\ref{OBBC:18}).
	\item In addition, if $p$ receives a request for $evidence(v)$ from $q$ and $p$ has a valid one, $p$ sends back that evidence to $q$ (lines OB\ref{OBBC:14}--OB\ref{OBBC:15}).
	\item Finally, if $p$ has received a valid $evidence(v)$ it adopts $v$. Then $p$ proposes its value through a regular BBC (lines OB\ref{OBBC:9}--OB\ref{OBBC:12}).
\end{itemize}

\subsection{OBBC$_v$ Correctness proof}
\begin{lemma}
	\label{OBBC:lemmas:1}
	If a correct node $p$ has decided $v$ in line~OB\ref{OBBC:5}, then all correct nodes who did not decide in line~OB\ref{OBBC:5} set $new\_v=v$.
\end{lemma}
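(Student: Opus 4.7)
The plan is to establish the claim via a standard Byzantine quorum intersection argument between the proposers that enabled $p$'s fast decision and the evidences collected by any correct node that failed to decide in line OB\ref{OBBC:5}. First I would unpack what $p$'s decision in line OB\ref{OBBC:5} tells us: $p$ has received $n-f$ proposals, all equal to $v$. Since at most $f$ of those senders are Byzantine, at least $n-2f$ of these proposals were issued by correct nodes, and by the \texttt{assert} statements at the top of \emph{OBBC$_v$.propose}, every correct node that proposed $v$ must have invoked propose while holding a valid $\mathit{evidence}(v)$. Hence there is a set $C$ of at least $n-2f > f$ correct nodes, each in possession of a valid $\mathit{evidence}(v)$.

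Next I would analyze a correct node $q$ that did not decide in line OB\ref{OBBC:5}. By the pseudocode, $q$ proceeds to line OB\ref{OBBC:7}, broadcasts $EV$, and in line OB\ref{OBBC:8} waits for its $\mathit{evidences}$ set to reach size $n-f$. Every correct node that receives an $EV$ request responds (lines OB\ref{OBBC:14}--OB\ref{OBBC:15}) with the evidence it holds, and since the $n-f$ correct nodes collectively suffice to populate $q$'s set, this wait terminates. Let $S$ denote the set of (at least) $n-f$ distinct senders whose replies populated $q$'s $\mathit{evidences}$ set.

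Applying inclusion-exclusion gives $|S \cap C| \ge |S| + |C| - n \ge (n-f) + (n-2f) - n = n - 3f > 0$ under the assumption $n > 3f$. Thus at least one correct node in $C$ has sent its $\mathit{evidence}(v)$ to $q$. Because correct nodes forward authentic evidence whose validity is externally verifiable, $q$'s $\mathit{evidences}$ set contains a valid $\mathit{evidence}(v)$. The condition at line OB\ref{OBBC:9}'s subsequent \texttt{if} is therefore satisfied, and $q$ sets $\mathit{new\_v} = v$, as claimed.

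The main subtlety I anticipate is careful bookkeeping around the asynchronous arrival of evidences: the $\mathit{evidences}$ set is updated by a background handler (lines OB\ref{OBBC:17}--OB\ref{OBBC:18}), while line OB\ref{OBBC:8} only reads the cardinality. I must make explicit that once line OB\ref{OBBC:8} unblocks, the at least $n-f$ senders whose messages were accumulated are well-defined, so the quorum-intersection bound with $C$ applies to precisely this multiset. A second minor point is to rule out a Byzantine node forging a valid-looking $\mathit{evidence}(v')$ for $v' \neq v$; this is handled by the assumption that the external valid function cryptographically certifies evidence, so $q$ cannot be tricked into setting $\mathit{new\_v}$ to a wrong value.
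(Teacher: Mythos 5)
Your proof is correct and follows essentially the same route as the paper's: from $p$'s fast decision you deduce that at least $n-2f \geq f+1$ correct nodes proposed $v$ and hence (by the asserts) hold a valid $\mathit{evidence}(v)$, and then a quorum-intersection count on the $n-f$ evidences collected at line~OB\ref{OBBC:8} shows every correct non-deciding node sees a valid $\mathit{evidence}(v)$ and sets $\mathit{new\_v}=v$. Your version merely spells out the intersection arithmetic and the impossibility of forged evidence for $v'\neq v$, which the paper leaves implicit.
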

\begin{proof}
	By Algorithm~\ref{OBBC:impl}, if $p$ is correct and has decided at line~OB\ref{OBBC:5} then at least $f+1$ correct nodes broadcast $v$.
	Thus, at least $f+1$ correct nodes have a valid $evidence(v)$.
	As for $v' \neq v$, $\mathit{eveidence}(v)=nil$ and by the ratio between $n$ and $f$, a correct node at line~OB\ref{OBBC:8} will receive at least one valid $evidence(v)$ and thus set $new\_v = v$.
\end{proof}
\begin{lemma}
	\label{OBBC:lemma:2}
	If a correct node $p$ has decided $v$ at line~OB\ref{OBBC:5}, then all correct nodes will eventually decide $v$.
\end{lemma}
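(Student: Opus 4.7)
The plan is to combine Lemma~\ref{OBBC:lemmas:1} with the termination guarantees of the underlying BBC primitive. First, I would observe that correct nodes split into two categories after $p$ decides at line~OB\ref{OBBC:5}: those that also decided $v$ at line~OB\ref{OBBC:5}, and those that proceeded past the fast-path check. For the latter group, Lemma~\ref{OBBC:lemmas:1} immediately tells us that each such node sets $\mathit{new\_v}=v$ before invoking the full \texttt{BBC$_v$.propose} at line~OB\ref{OBBC:12}, so every correct node that reaches the fallback path proposes $v$.

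The delicate point is what happens to the nodes that already decided at line~OB\ref{OBBC:5} (including $p$ itself) with respect to the underlying BBC invocation that the other correct nodes start. If these ``early-deciders'' simply returned and stayed silent, the fallback BBC instance could be starved of proposals from up to $f+1$ correct nodes, possibly violating its termination or validity preconditions. This is exactly what the trailing conditional block of Algorithm~\ref{OBBC:impl} addresses: upon noticing that some node has started \texttt{BBC.propose}, a correct early-decider also invokes \texttt{BBC$_v$.propose($v$)}. Thus every correct node that participates in the fallback BBC instance proposes $v$.

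I would then invoke \textbf{BBC-Validity}: since every correct node participating in the fallback BBC proposes the same value $v$, the BBC must decide $v$. Combined with \textbf{BBC-Termination}, every correct node that entered the fallback path eventually decides $v$. Together with the correct nodes that already decided $v$ at line~OB\ref{OBBC:5}, this establishes the claim.

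The main obstacle, and the subtlety worth spelling out carefully, is justifying that the trigger for the trailing conditional block is reliably activated. Specifically, one must argue that at least one correct node that did not decide at line~OB\ref{OBBC:5} actually reaches line~OB\ref{OBBC:12} and invokes \texttt{BBC.propose}, so that early-deciders become aware they should contribute their proposal; this follows from the fact that the correct nodes who fell through the fast path pass the wait on line~OB\ref{OBBC:8} thanks to the $n-f$ reply guarantee (at least $n-2f \geq f+1$ of which come from correct nodes, including the early-deciders who respond to $EV$ requests with their valid $\mathit{evidence}(v)$ at line~OB\ref{OBBC:15}). Once this detection is justified, the rest of the proof is a clean application of BBC-Validity and BBC-Termination.
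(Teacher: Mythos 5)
Your proof is correct and follows essentially the same route as the paper: Lemma~\ref{OBBC:lemmas:1} to show every correct node on the fallback path sets $\mathit{new\_v}=v$, then \textbf{BBC-Validity} (plus \textbf{BBC-Termination}) to conclude. The only difference is that you explicitly justify the participation of the early-deciders through the trailing conditional block and the non-blocking of the wait at line~OB\ref{OBBC:8}, details the paper's shorter proof leaves implicit.
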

\begin{proof}
	By Lemma~\ref{OBBC:lemmas:1}, any correct node who did not decide at line~OB\ref{OBBC:5} sets $new\_v=v$.
	By \textbf{BBC-Validity}, if all correct nodes invoke BBC.propose($v$) with the same value $v$ then $v$ must be decided.
\end{proof}
\begin{lemma}[OBBC$_v$-Agreement]
	\label{OBBC:lemma:agreement}
	No two correct nodes decide differently.
\end{lemma}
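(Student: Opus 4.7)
The plan is to do a case analysis based on whether any correct node takes the fast-termination branch and decides at line~OB\ref{OBBC:5}. By the code of Algorithm~\ref{OBBC:impl}, a correct node decides either at line~OB\ref{OBBC:5} (in which case the decided value is necessarily $v$) or via the call to \PROP{$\mathit{new\_v}$} at line~OB\ref{OBBC:12}; there are no other decision points. So the two cases to handle are (i) no correct node decides at line~OB\ref{OBBC:5}, and (ii) at least one correct node decides $v$ at line~OB\ref{OBBC:5}.

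In case (i) every correct node reaches line~OB\ref{OBBC:12} and decides via the underlying BBC instance. Here I would invoke \textbf{BBC-Agreement} directly on that instance: since all correct deciders obtain their decision from the same BBC invocation, no two of them can decide differently. Nothing about $\mathit{new\_v}$ needs to be pinned down in this case.

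In case (ii) I would appeal to Lemma~\ref{OBBC:lemma:2}, which already establishes that if one correct node decides $v$ at line~OB\ref{OBBC:5}, then every correct node eventually decides $v$. Combined with the fact that the fast-path decision value is itself $v$, this immediately yields agreement. The key ingredient behind Lemma~\ref{OBBC:lemma:2} is Lemma~\ref{OBBC:lemmas:1}, which forces every correct node that did not decide fast to enter the BBC call with $\mathit{new\_v}=v$; together with the trailing ``\PROP{$v$}'' clause, which makes the fast-deciding node still participate in the BBC instance with value $v$ whenever some node has started BBC, \textbf{BBC-Validity} then forces $v$ to be decided by all correct nodes that reach the BBC path.

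The main subtlety I expect to have to spell out carefully is the interaction between the fast and slow paths in case (ii): a fast-deciding node returns immediately without calling \PROP{}, so liveness and agreement for the correct nodes that end up in the slow path rely precisely on the trailing ``decide-then-propose'' clause at the bottom of Algorithm~\ref{OBBC:impl}, together with the fact that $\mathit{evidence}(v)$ is unforgeable and thus only $v$ (never $v'\neq v$) can appear as $\mathit{new\_v}$ for correct nodes. Once these two points are made explicit, the two cases close and agreement follows.
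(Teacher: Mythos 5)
Your proposal is correct and follows essentially the same route as the paper: a case split on whether some correct node decides at line~OB\ref{OBBC:5}, invoking Lemma~\ref{OBBC:lemma:2} in that case and \textbf{BBC-Agreement} on the common BBC invocation at line~OB\ref{OBBC:12} otherwise. The extra remarks you make about the trailing decide-then-propose clause and the unforgeability of $\mathit{evidence}(v)$ are just the content already encapsulated in Lemmas~\ref{OBBC:lemmas:1} and~\ref{OBBC:lemma:2}, so no new argument is needed.
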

\begin{proof}
	If any correct node has decided $v$ at line~OB\ref{OBBC:5}, then by Lemma~\ref{OBBC:lemma:2} all correct nodes will eventually decide $v$.
	Else, all correct nodes decide at line~OB\ref{OBBC:12} and by \textbf{BBC-Agreement} no two correct nodes decide differently.
\end{proof}
\begin{lemma}[OBBC$_v$-Validity]
	\label{OBBC:lemma:validity}
	A value $v'$ decided by a correct node was either proposed by a correct node or $v'=v$ and it was proposed by a node with a valid $\mathit{evidence}(v)$.   
\end{lemma}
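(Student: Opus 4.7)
The plan is to split the analysis by the point at which a correct node decides in Algorithm~\ref{OBBC:impl}. A correct node can return a decided value only at line~OB\ref{OBBC:5} (the ``fast'' path) or at line~OB\ref{OBBC:12} (the ``slow'' path via the underlying BBC). I would treat each case separately and, in the slow path, split further by whether the decided value equals $v$ or not.

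For the fast path at line~OB\ref{OBBC:5}, the decided value is necessarily $v$. By the guard of the \emph{if} on line~OB\ref{OBBC:5}, node $p$ has collected $n-f$ proposals, all equal to $v$. Since at most $f$ of these come from Byzantine nodes, at least $n-2f \geq f+1$ of them come from correct nodes. Each such correct node broadcast $v$ at line~OB\ref{OBBC:2}, which in turn means it invoked \texttt{OBBC$_v$.propose($v$, evidence($v$))} with a valid evidence (enforced by the assertions at the top of the procedure). Hence $v$ was in fact proposed by at least one correct node, satisfying the first disjunct of the lemma.

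For the slow path at line~OB\ref{OBBC:12}, the returned value comes from a call \texttt{BBC$_v$.propose($\mathit{new\_v}$)}. I will invoke the binary form of \textbf{BBC-Validity}: in a two-valued BBC, if all correct nodes proposed the same value $w$ then $w$ must be decided; contrapositively, any decided value $v'$ must have been the BBC input of at least one correct node $q$. Now I split on the value. If $v' \neq v$, then $q$'s variable $\mathit{new\_v}$ equals $v'$, and since the only way a correct node changes $\mathit{new\_v}$ is to switch it to $v$ (lines~OB\ref{OBBC:9}--OB\ref{OBBC:12}), $q$'s original OBBC$_v$ input was $v'$; hence $v'$ was proposed by the correct node $q$. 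If $v' = v$, then $q$ either had its original input set to $v$ (so again a correct node proposed $v$ with valid evidence), or $q$ switched $\mathit{new\_v}$ to $v$ because an \texttt{evidence($v$)} that passed the external validator arrived in \emph{evidences}. By unforgeability of the cryptographic evidence, the node $r$ that supplied that message must actually hold a valid \texttt{evidence($v$)}; this $r$ participated in the invocation either as a correct node whose own input was $(v,\mathit{evidence}(v))$, or as a Byzantine node contributing the evidence, in both cases certifying that $v' = v$ was proposed by a node with a valid \texttt{evidence($v$)}.

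The main obstacle will be the slow-path sub-case $v' = v$ in which no correct node ever input $v$ to OBBC$_v$: here one must rule out the possibility that BBC decides $v$ without any legitimate evidence ever existing in the system. I will handle this by combining two facts. First, no correct node sets $\mathit{new\_v}=v$ unless it has received a message that passes the external validator for \texttt{evidence($v$)}; second, such a message cannot be produced from nothing under the unforgeability assumption on the signature/cryptographic evidence scheme, so any valid \texttt{evidence($v$)} circulating in the run is traceable to an actual participant that holds it, which is precisely the second disjunct of the lemma. Once this cryptographic bridge is made explicit, the case analysis above closes the proof.
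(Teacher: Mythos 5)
Your proof is correct and follows essentially the same route as the paper's: both arguments reduce the slow path to binary \textbf{BBC-Validity} (the decided value must have been the BBC input of some correct node) combined with the observations that a correct node only switches its BBC input to $v$ after obtaining a valid $\mathit{evidence}(v)$ and that the asserts tie proposing $v$ to holding such evidence. The paper packages this as a single proof by contradiction and leaves the fast-path case implicit, whereas you give a direct case analysis (fast path, then slow path split on the decided value), but the substance is the same.
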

\begin{proof}
	Assume b.w.o.c. that $v'$ has been decided by a correct node but had not been proposed by any correct node and no correct node has received a valid $evidence(v')$.
	If no correct node proposed $v'$, it means that all correct nodes executed lines~OB\ref{OBBC:8}--OB\ref{OBBC:12} and all invoked BBC.propose with their initial value.
	Further, since none of them had a valid $evidence(v')$, their initial value is $v'\neq v$, and therefore the same (there are only two possible values in BBC).
	Hence, by \textbf{BBC-Validity} all correct nodes decided $v'$ which is also the value proposed by all of them.
	A contradiction. 
\end{proof}
\begin{lemma}[OBBC$_v$-Termination]
	\label{OBBC:lemma:termination}
	Every correct node eventually decides.
\end{lemma}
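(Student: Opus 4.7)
The plan is to show that every correct node eventually reaches either the fast decision at line~OB\ref{OBBC:5} or the fallback decision at line~OB\ref{OBBC:12}, by arguing that no wait statement in Algorithm~\ref{OBBC:impl} can block a correct node forever, and then invoking \textbf{BBC-Termination} on the underlying BBC call.

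First I would argue that the wait at line~OB\ref{OBBC:3} cannot block a correct node forever. Since there are at least $n-f$ correct nodes and each of them broadcasts its proposal at line~OB\ref{OBBC:2}, every correct node $p$ eventually receives at least $n-f$ proposals. Hence $p$ either decides $v$ at line~OB\ref{OBBC:5} and returns (in which case we are done for $p$), or proceeds past line~OB\ref{OBBC:5}. Next, I would treat the wait at line~OB\ref{OBBC:8}: every correct node $q$ that receives the $EV$ broadcast responds at lines~OB\ref{OBBC:14}--OB\ref{OBBC:15} by sending its $\mathit{evidence}(v)$ (which may be $nil$, but is still delivered). Because at least $n-f$ correct nodes exist and the channels are reliable, $p$ collects $n-f$ evidence replies and proceeds to line~OB\ref{OBBC:12}.

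At line~OB\ref{OBBC:12} the node invokes the underlying BBC protocol. To apply \textbf{BBC-Termination} I need to argue that every correct node that has not yet decided actually invokes \texttt{BBC.propose}; correct nodes reaching line~OB\ref{OBBC:12} clearly do so, and the tail clause of the algorithm (``If invoked Decide $v$ and some node invoked \texttt{BBC.propose}$(v')$, then \texttt{BBC.propose}$(v)$'') ensures that a correct node that decided early at line~OB\ref{OBBC:5} still participates in the BBC round once triggered, so no correct node is missing from the BBC instance. By \textbf{BBC-Termination}, every correct node that invokes \texttt{BBC.propose} eventually returns a decision, which combined with the fast-path case covers all correct nodes.

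The main obstacle I anticipate is the subtle interaction between the fast-path decision at line~OB\ref{OBBC:5} and the underlying BBC instance: a node that decides early might otherwise never call \texttt{BBC.propose}, which could leave a slow correct node blocked inside BBC waiting for messages from it. The tail rule of the algorithm is precisely what closes this gap, and I would need to invoke it explicitly (combined with Lemma~\ref{OBBC:lemmas:1}, which guarantees that every correct node that does reach BBC enters with $\mathit{new\_v}=v$, making the BBC instance safe and live to terminate on $v$).
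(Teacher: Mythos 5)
Your proposal is correct and follows essentially the same route as the paper's proof: argue that the waits at lines~OB\ref{OBBC:3} and~OB\ref{OBBC:8} cannot block a correct node (using the ratio between $n$ and $f$ and reliable channels), and then invoke \textbf{BBC-Termination} for line~OB\ref{OBBC:12}. Your additional discussion of the tail clause (early deciders still joining the BBC instance) is a finer point the paper's proof leaves implicit, but it does not change the argument's structure.
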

\begin{proof}
	By the ratio between $n$ and $f$, no correct node blocks forever at lines~OB\ref{OBBC:3} or~OB\ref{OBBC:8}. 
	By \textbf{BBC-Termination} no correct node is blocked forever at line~OB\ref{OBBC:12}.
	Thus, every correct node eventually decides.
\end{proof}
\begin{lemma}[OBBC$_v$-Fast-Termination]
	\label{OBBC:lemma:ftermination}
	If no process has proposed $v'\neq v$, then $v$ will be decided by every correct node in a single communication step.
\end{lemma}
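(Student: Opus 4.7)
The plan is to unwind the definitions and read off the conclusion directly from Algorithm~\ref{OBBC:impl}. First, I would fix a correct node $p$ and argue that under the hypothesis every message $p$ receives at line~OB\ref{OBBC:3} carries the value $v$. By assumption no node (correct or Byzantine) proposes any $v' \neq v$, which in the algorithm means no node executes \texttt{broadcast}$(v')$ at line~OB\ref{OBBC:2} for $v' \neq v$. So any proposal message that arrives at $p$ must contain $v$.

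Next I would invoke the system model to guarantee that $p$ does not block at line~OB\ref{OBBC:3}: the $n-f \geq 2f+1$ correct nodes all invoke \emph{OBBC$_v$.propose} with value $v$ (since the hypothesis forbids them from proposing anything else) and therefore broadcast $v$ at line~OB\ref{OBBC:2}; reliable channels and the $\bDiamond\mathit{Synch}$ assumption (or more simply, just reliability) imply $p$ eventually receives at least $n-f$ such proposals. Combined with the previous paragraph, the multiset $\mathit{votes}$ collected by $p$ satisfies $\mathit{votes} = \{v\}$.

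Hence the test on line~OB\ref{OBBC:5} succeeds and $p$ decides $v$ immediately, without ever entering the evidence-gathering or fallback BBC phase beginning at line~OB\ref{OBBC:7}. Because the only outgoing message from $p$ before the decision is the single broadcast at line~OB\ref{OBBC:2}, the decision is reached in exactly one communication step, as required.

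There is no real obstacle here; the only subtlety is being precise about the quantification --- ``no node has proposed $v' \neq v$'' has to be interpreted as ruling out even Byzantine broadcasts of values different from $v$ (silent Byzantine nodes are still permitted, which is why we need $n-f$ rather than $n$ correct broadcasters to guarantee liveness at line~OB\ref{OBBC:3}). Once that is stated explicitly, the proof is essentially a one-line appeal to the code.
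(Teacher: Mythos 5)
Your proof is correct and follows the same route as the paper's: since every proposal is $v$ and there are at least $n-f$ correct nodes broadcasting it, the wait at line~OB\ref{OBBC:3} completes with $\mathit{votes}=\{v\}$ and the decision is taken at line~OB\ref{OBBC:5} after the single broadcast of line~OB\ref{OBBC:2}. The paper's own argument is just a terser version of this (appealing to the $n$--$f$ ratio and lines~OB\ref{OBBC:3}--OB\ref{OBBC:5}), so your write-up simply makes the same reasoning explicit.
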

\begin{proof}
	By the ratio between $n$ and $f$ and by lines~OB\ref{OBBC:3}--OB\ref{OBBC:5} a correct nodes will receive $n-f$ proposals of $v$ and thus decide $v$ in a single communication step.
\end{proof}
\begin{theorem}
	By Lemmas~\ref{OBBC:lemma:agreement},~\ref{OBBC:lemma:validity},~\ref{OBBC:lemma:termination} and~\ref{OBBC:lemma:ftermination}, Algorithm~\ref{OBBC:impl} solves the OBBC$_v$ problem.
\end{theorem}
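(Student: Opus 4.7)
The plan is essentially bookkeeping: the specification of OBBC$_v$ consists of exactly four properties (Agreement, Validity, Termination, and Fast-Termination), and each of these has already been established as a standalone lemma in the preceding subsection. So the theorem is a pure composition, and I would simply cite each lemma against its matching property. First I would invoke Lemma~\ref{OBBC:lemma:agreement} to discharge OBBC$_v$-Agreement, then Lemma~\ref{OBBC:lemma:validity} for OBBC$_v$-Validity (covering both the ``proposed by a correct node'' case and the ``$v'=v$ with valid evidence'' case), then Lemma~\ref{OBBC:lemma:termination} for OBBC$_v$-Termination, and finally Lemma~\ref{OBBC:lemma:ftermination} for OBBC$_v$-Fast-Termination.

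Because the four lemmas were formulated to line up one-to-one with the defining properties of OBBC$_v$, there is no genuine technical obstacle in the theorem itself. The only sanity check I would perform before declaring victory is to verify that the lemmas are proved under exactly the system assumptions of OBBC$_v$ and not under any stronger hidden hypothesis: for instance, that Lemma~\ref{OBBC:lemma:validity} does not quietly assume a correct proposer, that the two waits at lines~OB\ref{OBBC:3} and~OB\ref{OBBC:8} are safe under $f<n/3$ as invoked in Lemma~\ref{OBBC:lemma:termination}, and that Lemma~\ref{OBBC:lemma:ftermination} uses only the hypothesis ``no node proposed $v'\neq v$'' rather than also requiring synchrony. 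Each of these checks is immediate from the statements of the lemmas as written.

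The hardest part, such as it is, is purely presentational: deciding whether to state the proof as a one-sentence observation (``The four lemmas are exactly the four defining properties, so the theorem follows'') or to spell out the correspondence explicitly. Given the paper's style in analogous wrap-up theorems (e.g.\ Theorem~\ref{proof:theorem1} and the WRB wrap-up theorem), I would favor the explicit one-line-per-property format for consistency, and then close the proof.
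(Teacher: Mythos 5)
Your proposal matches the paper exactly: the theorem is stated as an immediate composition of Lemmas~\ref{OBBC:lemma:agreement}, \ref{OBBC:lemma:validity}, \ref{OBBC:lemma:termination} and~\ref{OBBC:lemma:ftermination}, each discharging one of the four defining properties of OBBC$_v$, and no further argument is given or needed. Your sanity checks on the lemmas' hypotheses are reasonable diligence but do not change the argument.
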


\subsection{OBBC$_v$ and WRB}
In the context of WRB, we use \emph{OBBC$_1$.propose} with $\mathit{evidence}(1)=(m, sig_{proposer}(m))$ as an evidence to run the \emph{OBBC.propose} procedure at lines~\ref{wrb:l4}--\ref{wrb:l5} of Algorithm~\ref{alg:wrb}. 
As we expect to face mostly Byzantine failure free synchronized executions, it is beneficial to use OBBC$_1$.
Even in the presence of benign failures, as long as the proposer is correct and messages from correct nodes arrive in a timely manner, OBBC$_1$ is still able to terminate fast in a single communication step.

%\end{appendices}
\end{document}